\documentclass[twocolumn, aps, superscriptaddress]{revtex4}

\newcommand{\bra}[1]{\langle #1|}
\newcommand{\ket}[1]{|#1\rangle}

\usepackage{amsmath,amsthm,amsfonts,amssymb}
\usepackage{fontenc}
\usepackage[all]{xy}
\usepackage{graphicx,subfigure,multirow}
\usepackage{tikz}
\usetikzlibrary{shapes, arrows, shadows}
\usepackage[latin1]{inputenc}
\usepackage[scaled]{helvet}
\usepackage[toc, page]{appendix}
\usepackage{graphicx}				
\usepackage{amssymb}
\usepackage{dsfont}
\newtheorem{definition}{Definition}
\newtheorem{theorem}{Theorem}
\newtheorem{lemma}{Lemma}

\usepackage{tikz}
\pgfdeclarelayer{nodelayer}
\pgfdeclarelayer{edgelayer}
\pgfsetlayers{edgelayer,nodelayer,main}
\tikzstyle{none}=[inner sep=0pt]
\tikzstyle{new}=[inner sep=2pt]
\usetikzlibrary{shapes, arrows, shadows}
\usetikzlibrary{circuits.ee.IEC}
\usetikzlibrary{decorations.pathmorphing}
\usetikzlibrary{shapes.geometric}

\tikzstyle{env}=[copoint,regular polygon rotate=0,minimum width=0.2cm, fill=black]

\tikzstyle{probs}=[shape=semicircle,fill=white,draw=black,shape border rotate=180,minimum width=1.2cm]

\usetikzlibrary{circuits.ee.IEC}
\usetikzlibrary{decorations.pathmorphing}
\usetikzlibrary{shapes.geometric}

\tikzstyle{wavy}=[decorate,decoration={snake, segment length=1mm, amplitude=0.3mm}]
\tikzstyle{mopoint}=[shape=semicircle, fill=white,draw=black,shape border rotate=180,scale =0.75]
\tikzstyle{mocopoint}=[shape=semicircle, fill=white,draw=black,minimum width = 0.9cm, scale =0.75, xscale=0.7]
\tikzstyle{cpoint}=[shape=semicircle, fill=white,draw=black,minimum width = 0.9cm, scale =0.75, xscale=1, yscale=0.7, shape border rotate = 90,font=\fontsize{14}{16}\selectfont]
\tikzstyle{cocpoint}=[shape=semicircle, fill=white,draw=black,minimum width = 0.9cm, scale =0.75, xscale=1, yscale=0.7, shape border rotate = 270,font=\fontsize{14}{16}\selectfont]

\tikzstyle{every picture}=[baseline=-0.25em,scale=0.5]
\tikzstyle{dotpic}=[] \tikzstyle{diredges}=[every to/.style={diredge}]
\tikzstyle{math matrix}=[matrix of math nodes,left delimiter=(,right delimiter=),inner sep=2pt,column sep=1em,row sep=0.5em,nodes={inner sep=0pt},text height=1.5ex, text depth=0.25ex]

\tikzstyle{inline text}=[text height=1.5ex, text depth=0.25ex,yshift=0.5mm]
\tikzstyle{label}=[font=\footnotesize,text height=1.5ex, text depth=0.25ex,yshift=0.5mm]
\tikzstyle{left label}=[label,anchor=east,xshift=1.5mm]
\tikzstyle{right label}=[label,anchor=west,xshift=-1.5mm]

\tikzstyle{braceedge}=[decorate,decoration={brace,amplitude=2mm,raise=-1mm}]
\tikzstyle{small braceedge}=[decorate,decoration={brace,amplitude=1mm,raise=-1mm}]

\tikzstyle{doubled}=[line width=2pt] \tikzstyle{boldedge}=[doubled,shorten <=-0.17mm,shorten >=-0.17mm]

\tikzstyle{boldedgedashed}=[very thick,dashed,shorten <=-0.17mm,shorten >=-0.17mm]
\tikzstyle{vboldedgedashed}=[doubled,dashed,shorten <=-0.17mm,shorten >=-0.17mm]
\tikzstyle{left hook arrow}=[left hook-latex]
\tikzstyle{right hook arrow}=[right hook-latex]
\tikzstyle{sembracket}=[line width=0.5pt,shorten <=-0.07mm,shorten >=-0.07mm]

\tikzstyle{causal edge}=[->,thick,gray]
\tikzstyle{causal nondir}=[thick,gray]
\tikzstyle{timeline}=[thick,gray, dashed]

\tikzstyle{cedge}=[<->,thick,gray!70!white]

\tikzstyle{empty diagram}=[draw=gray!40!white,dashed,shape=rectangle,minimum width=1cm,minimum height=1cm]
\tikzstyle{empty diagram small}=[draw=gray!50!white,dashed,shape=rectangle,minimum width=0.6cm,minimum height=0.5cm]

\tikzstyle{dot}=[inner sep=0.7mm,minimum width=0pt,minimum height=0pt,draw,shape=circle]
\tikzstyle{ddot}=[inner sep=0.7mm,doubled, minimum width=2.5mm,minimum height=2.5mm,draw,shape=circle]

\tikzstyle{black dot}=[dot,fill=black]
\tikzstyle{white dot}=[dot,fill=white]
\tikzstyle{green dot}=[white dot] \tikzstyle{gray dot}=[dot,fill=gray!40!white]
\tikzstyle{red dot}=[gray dot]

\tikzstyle{black ddot}=[ddot,fill=black]
\tikzstyle{white ddot}=[ddot,fill=white]
\tikzstyle{gray ddot}=[ddot,fill=gray!40!white]

\tikzstyle{gray edge}=[gray!40!white]

\tikzstyle{small dot}=[inner sep=0.4mm,minimum width=0pt,minimum height=0pt,draw,shape=circle]

\tikzstyle{small black dot}=[small dot,fill=black]
\tikzstyle{small white dot}=[small dot,fill=white]
\tikzstyle{small gray dot}=[small dot,fill=gray!40!white]

\tikzstyle{causal dot}=[inner sep=0.4mm,minimum width=0pt,minimum height=0pt,draw=white,shape=circle,fill=gray!40!white]

\tikzstyle{white phase dot}=[dot,fill=white]
\tikzstyle{white phase ddot}=[ddot,fill=white]

\tikzstyle{gray phase dot}=[dot,fill=gray!40!white]
\tikzstyle{gray phase ddot}=[ddot,fill=gray!40!white]
\tikzstyle{grey phase dot}=[gray phase dot]
\tikzstyle{grey phase ddot}=[gray phase ddot]

\tikzstyle{cnot}=[fill=white,shape=circle,inner sep=-1.4pt]
\tikzstyle{hadamard}=[square box,inner sep=0 pt,font=\tiny\sf,minimum height=3mm,minimum width=3mm]
\tikzstyle{dhadamard}=[hadamard,doubled]
\tikzstyle{antipode}=[white dot,inner sep=0.3mm,font=\footnotesize]

\tikzstyle{scalar}=[diamond,draw,inner sep=0.5pt,font=\small]
\tikzstyle{dscalar}=[diamond,doubled, draw,inner sep=0.5pt,font=\small]

\tikzstyle{small box}=[rectangle,inline text,fill=white,draw,minimum height=5mm,yshift=-0.5mm,minimum width=5mm,font=\small]
\tikzstyle{small gray box}=[small box,fill=gray!30]
\tikzstyle{medium box}=[rectangle,inline text,fill=white,draw,minimum height=5mm,yshift=-0.5mm,minimum width=10mm,font=\small]
\tikzstyle{square box}=[small box] \tikzstyle{medium gray box}=[small box,fill=gray!30]
\tikzstyle{large box}=[rectangle,inline text,fill=white,draw,minimum height=5mm,yshift=-0.5mm,minimum width=15mm,font=\small]
\tikzstyle{large gray box}=[small box,fill=gray!30]

\tikzstyle{point}=[regular polygon,regular polygon sides=3,draw,scale=0.75,inner sep=-0.5pt,minimum width=9mm,fill=white,regular polygon rotate=180]
\tikzstyle{copoint}=[regular polygon,regular polygon sides=3,draw,scale=0.75,inner sep=-0.5pt,minimum width=9mm,fill=white]
\tikzstyle{dpoint}=[point,doubled]
\tikzstyle{dcopoint}=[copoint,doubled]

\tikzstyle{tinypoint}=[regular polygon,regular polygon sides=3,draw,scale=0.55,inner sep=-0.15pt,minimum width=6mm,fill=white,regular polygon rotate=180]

\tikzstyle{white point}=[point]
\tikzstyle{green point}=[white point] \tikzstyle{white copoint}=[copoint]
\tikzstyle{gray point}=[point,fill=gray!40!white]
\tikzstyle{gray dpoint}=[gray point,doubled]
\tikzstyle{red point}=[gray point] \tikzstyle{gray copoint}=[copoint,fill=gray!40!white]
\tikzstyle{gray dcopoint}=[gray copoint,doubled]

\tikzstyle{tiny gray point}=[tinypoint,fill=gray!40!white]

\tikzstyle{diredge}=[->]
\tikzstyle{rdiredge}=[<-]
\tikzstyle{thickdiredge}=[->, very thick]
\tikzstyle{pointer edge}=[->,very thick,gray]
\tikzstyle{pointer edge part}=[very thick,gray]
\tikzstyle{dashed edge}=[dashed]
\tikzstyle{thick dashed edge}=[very thick,dashed]
\tikzstyle{thick gray dashed edge}=[thick dashed edge,gray!90]
\tikzstyle{thick map edge}=[very thick,|->]

\makeatletter
\newcommand{\boxshape}[3]{\pgfdeclareshape{#1}{
\inheritsavedanchors[from=rectangle] \inheritanchorborder[from=rectangle]
\inheritanchor[from=rectangle]{center}
\inheritanchor[from=rectangle]{north}
\inheritanchor[from=rectangle]{south}
\inheritanchor[from=rectangle]{west}
\inheritanchor[from=rectangle]{east}
\backgroundpath{\southwest \pgf@xa=\pgf@x \pgf@ya=\pgf@y
\northeast \pgf@xb=\pgf@x \pgf@yb=\pgf@y

\@tempdima=#2
\@tempdimb=#3

\pgfpathmoveto{\pgfpoint{\pgf@xa - 5pt + \@tempdima}{\pgf@ya}}
\pgfpathlineto{\pgfpoint{\pgf@xa - 5pt - \@tempdima}{\pgf@yb}}
\pgfpathlineto{\pgfpoint{\pgf@xb + 5pt + \@tempdimb}{\pgf@yb}}
\pgfpathlineto{\pgfpoint{\pgf@xb + 5pt - \@tempdimb}{\pgf@ya}}
\pgfpathlineto{\pgfpoint{\pgf@xa - 5pt + \@tempdima}{\pgf@ya}}
\pgfpathclose
}
}}

\boxshape{NEbox}{0pt}{5pt}
\boxshape{SEbox}{0pt}{-5pt}
\boxshape{NWbox}{5pt}{0pt}
\boxshape{SWbox}{-5pt}{0pt}
\boxshape{EBox}{-3pt}{3pt}
\boxshape{WBox}{3pt}{-3pt}
\makeatother

\tikzstyle{cloud}=[shape=cloud,draw,minimum width=1.5cm,minimum height=1.5cm]

\tikzstyle{map}=[draw,shape=NEbox,inner sep=2pt,minimum height=6mm,fill=white]
\tikzstyle{mapdag}=[draw,shape=SEbox,inner sep=2pt,minimum height=6mm,fill=white]
\tikzstyle{mapadj}=[draw,shape=SEbox,inner sep=2pt,minimum height=6mm,fill=white]
\tikzstyle{maptrans}=[draw,shape=SWbox,inner sep=2pt,minimum height=6mm,fill=white]
\tikzstyle{mapconj}=[draw,shape=NWbox,inner sep=2pt,minimum height=6mm,fill=white]

\tikzstyle{dbox}=[draw,doubled,shape=rectangle,inner sep=2pt,minimum height=6mm,minimum width=6mm,fill=white]
\tikzstyle{dmap}=[draw,doubled,shape=NEbox,inner sep=2pt,minimum height=6mm,fill=white]
\tikzstyle{dmapdag}=[draw,doubled,shape=SEbox,inner sep=2pt,minimum height=6mm,fill=white]
\tikzstyle{dmapadj}=[draw,doubled,shape=SEbox,inner sep=2pt,minimum height=6mm,fill=white]
\tikzstyle{dmaptrans}=[draw,doubled,shape=SWbox,inner sep=2pt,minimum height=6mm,fill=white]
\tikzstyle{dmapconj}=[draw,doubled,shape=NWbox,inner sep=2pt,minimum height=6mm,fill=white]

\tikzstyle{ddmap}=[draw,doubled,dashed,shape=NEbox,inner sep=2pt,minimum height=6mm,fill=white]
\tikzstyle{ddmapdag}=[draw,doubled,dashed,shape=SEbox,inner sep=2pt,minimum height=6mm,fill=white]
\tikzstyle{ddmapadj}=[draw,doubled,dashed,shape=SEbox,inner sep=2pt,minimum height=6mm,fill=white]
\tikzstyle{ddmaptrans}=[draw,doubled,dashed,shape=SWbox,inner sep=2pt,minimum height=6mm,fill=white]
\tikzstyle{ddmapconj}=[draw,doubled,dashed,shape=NWbox,inner sep=2pt,minimum height=6mm,fill=white]

\boxshape{sNEbox}{0pt}{3pt}
\boxshape{sSEbox}{0pt}{-3pt}
\boxshape{sNWbox}{3pt}{0pt}
\boxshape{sSWbox}{-3pt}{0pt}
\tikzstyle{smap}=[draw,shape=sNEbox,fill=white]
\tikzstyle{smapdag}=[draw,shape=sSEbox,fill=white]
\tikzstyle{smapadj}=[draw,shape=sSEbox,fill=white]
\tikzstyle{smaptrans}=[draw,shape=sSWbox,fill=white]
\tikzstyle{smapconj}=[draw,shape=sNWbox,fill=white]

\tikzstyle{dsmap}=[draw,dashed,shape=sNEbox,fill=white]
\tikzstyle{dsmapdag}=[draw,dashed,shape=sSEbox,fill=white]
\tikzstyle{dsmaptrans}=[draw,dashed,shape=sSWbox,fill=white]
\tikzstyle{dsmapconj}=[draw,dashed,shape=sNWbox,fill=white]

\boxshape{mNEbox}{0pt}{10pt}
\boxshape{mSEbox}{0pt}{-10pt}
\boxshape{mNWbox}{10pt}{0pt}
\boxshape{mSWbox}{-10pt}{0pt}
\tikzstyle{mmap}=[draw,shape=mNEbox]
\tikzstyle{mmapdag}=[draw,shape=mSEbox]
\tikzstyle{mmaptrans}=[draw,shape=mSWbox]
\tikzstyle{mmapconj}=[draw,shape=mNWbox]

\tikzstyle{mmapgray}=[draw,fill=gray!40!white,shape=mNEbox]
\tikzstyle{smapgray}=[draw,fill=gray!40!white,shape=sNEbox]

\makeatletter
\pgfdeclareshape{cornerpoint}{
\inheritsavedanchors[from=rectangle] \inheritanchorborder[from=rectangle]
\inheritanchor[from=rectangle]{center}
\inheritanchor[from=rectangle]{north}
\inheritanchor[from=rectangle]{south}
\inheritanchor[from=rectangle]{west}
\inheritanchor[from=rectangle]{east}
\backgroundpath{\southwest \pgf@xa=\pgf@x \pgf@ya=\pgf@y
\northeast \pgf@xb=\pgf@x \pgf@yb=\pgf@y

\pgfmathsetmacro{\pgf@shorten@left}{\pgfkeysvalueof{/tikz/shorten left}}
\pgfmathsetmacro{\pgf@shorten@right}{\pgfkeysvalueof{/tikz/shorten right}}

\pgfpathmoveto{\pgfpoint{0.5 * (\pgf@xa + \pgf@xb)}{\pgf@ya - 5pt}}
\pgfpathlineto{\pgfpoint{\pgf@xa - 8pt + \pgf@shorten@left}{\pgf@yb - 1.5 * \pgf@shorten@left}}
\pgfpathlineto{\pgfpoint{\pgf@xa - 8pt + \pgf@shorten@left}{\pgf@yb}}
\pgfpathlineto{\pgfpoint{\pgf@xb + 8pt - \pgf@shorten@right}{\pgf@yb}}
\pgfpathlineto{\pgfpoint{\pgf@xb + 8pt - \pgf@shorten@right}{\pgf@yb - 1.5 * \pgf@shorten@right}}
\pgfpathclose
}
}

\pgfdeclareshape{cornercopoint}{
\inheritsavedanchors[from=rectangle] \inheritanchorborder[from=rectangle]
\inheritanchor[from=rectangle]{center}
\inheritanchor[from=rectangle]{north}
\inheritanchor[from=rectangle]{south}
\inheritanchor[from=rectangle]{west}
\inheritanchor[from=rectangle]{east}
\backgroundpath{\southwest \pgf@xa=\pgf@x \pgf@ya=\pgf@y
\northeast \pgf@xb=\pgf@x \pgf@yb=\pgf@y

\pgfmathsetmacro{\pgf@shorten@left}{\pgfkeysvalueof{/tikz/shorten left}}
\pgfmathsetmacro{\pgf@shorten@right}{\pgfkeysvalueof{/tikz/shorten right}}

\pgfpathmoveto{\pgfpoint{0.5 * (\pgf@xa + \pgf@xb)}{\pgf@yb + 5pt}}
\pgfpathlineto{\pgfpoint{\pgf@xa - 8pt + \pgf@shorten@left}{\pgf@ya + 1.5 * \pgf@shorten@left}}
\pgfpathlineto{\pgfpoint{\pgf@xa - 8pt + \pgf@shorten@left}{\pgf@ya}}
\pgfpathlineto{\pgfpoint{\pgf@xb + 8pt - \pgf@shorten@right}{\pgf@ya}}
\pgfpathlineto{\pgfpoint{\pgf@xb + 8pt - \pgf@shorten@right}{\pgf@ya + 1.5 * \pgf@shorten@right}}
\pgfpathclose
}
}

\makeatother

\pgfkeyssetvalue{/tikz/shorten left}{0pt}
\pgfkeyssetvalue{/tikz/shorten right}{0pt}

\tikzstyle{kpoint common}=[draw,fill=white,inner sep=1pt,minimum height=4mm]
\tikzstyle{kpoint}=[shape=cornerpoint,shorten left=5pt,kpoint common]
\tikzstyle{kpoint adjoint}=[shape=cornercopoint,shorten left=5pt,kpoint common]
\tikzstyle{kpoint conjugate}=[shape=cornerpoint,shorten right=5pt,kpoint common]
\tikzstyle{kpoint transpose}=[shape=cornercopoint,shorten right=5pt,kpoint common]
\tikzstyle{kpoint symm}=[shape=cornerpoint,shorten left=5pt,shorten right=5pt,kpoint common]

\tikzstyle{kpointdag}=[kpoint adjoint]
\tikzstyle{kpointadj}=[kpoint adjoint]
\tikzstyle{kpointconj}=[kpoint conjugate]
\tikzstyle{kpointtrans}=[kpoint transpose]

\tikzstyle{dkpoint}=[kpoint,doubled]
\tikzstyle{dkpointdag}=[kpoint adjoint,doubled]
\tikzstyle{dkcopoint}=[kpoint adjoint,doubled]
\tikzstyle{dkpointadj}=[kpoint adjoint,doubled]
\tikzstyle{dkpointconj}=[kpoint conjugate,doubled]
\tikzstyle{dkpointtrans}=[kpoint transpose,doubled]

\tikzstyle{kscalar}=[kpoint common, shape=EBox, inner xsep=-1pt, inner ysep=3pt,font=\small]
\tikzstyle{kscalarconj}=[kpoint common, shape=WBox, inner xsep=-1pt, inner ysep=3pt,font=\small]

 \tikzstyle{upground}=[circuit ee IEC,thick,ground,rotate=90,scale=2.5]
 \tikzstyle{downground}=[circuit ee IEC,thick,ground,rotate=-90,scale=2.5]
  \tikzstyle{bigground}=[regular polygon,regular polygon sides=3,draw=gray,scale=0.50,inner sep=-0.5pt,minimum width=10mm,fill=gray]

\tikzstyle{arrs}=[-latex,font=\small,auto]
\tikzstyle{arrow plain}=[arrs]
\tikzstyle{arrow dashed}=[dashed,arrs]
\tikzstyle{arrow bold}=[very thick,arrs]
\tikzstyle{arrow hide}=[draw=white!0,-]
\tikzstyle{arrow reverse}=[latex-]
\tikzstyle{cdnode}=[]

 \tikzstyle{slit}=[line width=2]
\tikzstyle{block}=[line width=4,gray,line cap=round]
\tikzstyle{screen}=[line width=4,black,line cap=round]
\tikzstyle{di}=[diamond,draw,inner sep=0.5pt,font=\small, minimum size = .5cm]
\tikzstyle{sbox}=[rectangle,draw]
\tikzstyle{mirror}=[line width=2,black]
\tikzstyle{trace}=[circuit ee IEC,thick,ground,rotate=0,scale=2]
\tikzstyle{traceState}=[circuit ee IEC,thick,ground,rotate=180,scale=2]
\tikzstyle{detEff}=[circuit ee IEC,thick,ground,rotate=180,scale=1.4]
\tikzstyle{maxMix}=[circuit ee IEC,thick,ground,scale=1.4]
\tikzstyle{particlePath}=[line width=2,gray!40, line cap =round]
\usepackage{fp}
\usetikzlibrary{calc,decorations.pathmorphing,decorations.text,fixedpointarithmetic}

\newcommand{\diageq}{\rotatebox{-45}{$\,=$}}
\newcommand{\detEff}{(\begin{tikzpicture}
	\begin{pgfonlayer}{nodelayer}
		\node [style=detEff] (0) at (0, -0) {};
		\node [style=none] (1) at (0.15, -0) {};
	\end{pgfonlayer}	
\end{tikzpicture}|}
\newcommand{\maxMix}{|\begin{tikzpicture}
	\begin{pgfonlayer}{nodelayer}
		\node [style=maxMix] (0) at (0, -0) {};
		\node [style=none] (1) at (0.1, -0) {};
	\end{pgfonlayer}
\end{tikzpicture})}

\begin{document}

\title{Generalised phase kick-back: the structure of computational algorithms from physical principles}
\author{Ciar{\'a}n~M. Lee}
\email{ciaran.lee@cs.ox.ac.uk}
\affiliation{University of Oxford, Department of Computer Science, Wolfson Building, Parks Road, Oxford OX1 3QD, UK.}
\author{John~H. Selby}
\email{john.selby08@imperial.ac.uk}
\affiliation{University of Oxford, Department of Computer Science, Wolfson Building, Parks Road, Oxford OX1 3QD, UK.}
\affiliation{Imperial College London,  London SW7 2AZ, UK.}

\begin{abstract}
The advent of quantum computing has challenged classical conceptions of which problems are efficiently solvable in our physical world. This motivates the general study of how physical principles bound computational power.
In this paper we show that some of the essential machinery of quantum computation -- namely reversible controlled transformations and the phase kick-back mechanism -- exist in any operational-defined theory with a consistent notion of information. These results provide the tools for an exploration of the physics underpinning the structure of computational algorithms.
We use these results to investigate the relationship between interference behaviour and computational power, demonstrating that non-trivial interference behaviour is a general resource for post-classical computation. In proving the above, we connect post-quantum interference -- the higher-order interference of Sorkin -- to the existence of post-quantum particle types, potentially providing a novel experimental test for higher-order interference. Finally, we conjecture that theories with post-quantum interference can solve problems intractable even on a quantum computer.
\end{abstract}

\maketitle

\section{Introduction}

One of the major conceptual breakthroughs in physics over the past thirty years was the realisation that quantum theory offers dramatic advantages \cite{Nielsen} for various information-processing tasks -- computation in particular \cite{shor, arkhipov, Nielsen}. This raises the general question of how physical principles bound computational power. Moreover, what broad relationships exist between such principles and computation? A major roadblock to such an investigation is that quantum computation is phrased in the language of Hilbert spaces, which lacks direct physical or operational significance.

In contrast, the framework of operationally-defined theories \cite{Pavia1, Pavia2, Hardy-2011, Barrett-2007,LB-2014} provides a clear-cut operational language in which to investigate this problem. Theories within this framework can differ \cite{Barrett-2007} from classical and quantum theories. Whilst many of them may not correspond to descriptions of our physical world, they make good operational sense and allow one to assess how computational power depends on the physical principles underlying them in a systematic manner.

Previous investigations into computation within this framework have taken a high-level approach using the language of complexity classes to derive general bounds on the power of computation \cite{LB-2014, Proofs, landscape}. However, much of quantum computing is concerned not so much with this high-level view, but instead with the construction of concrete algorithms to solve specific problems. A deeper understanding of the general structure of computational algorithms in this framework has so far remained illusive. Here we take this low-level algorithmic view and ask which physical principles are required to allow for some of the common machinery of quantum computation in this context.

In this paper we show that three physical principles, \emph{causality} (which roughly states that information propagates from present to future), \emph{purification} (roughly, that information is fundamentally conserved) and \emph{strong symmetry} (all information carriers of the same size are equivalent)
-- which are necessary for a well defined notion of information -- are sufficient for the existence of reversible controlled transformations (Thm.~(\ref{Reversible-Control}), Sec.~(\ref{Control2})) and a generalised \emph{phase kick-back mechanism} (Thm.~(\ref{ALL}), Sec.~(\ref{Control2})). In the quantum case, the phase kick-back mechanism \cite{kickback} plays a vital role in almost all algorithms -- notably the Deutsch-Jozsa algorithm, Grover's search algorithm and Simon's algorithm -- whilst reversible controlled transformations are central components of most well-studied universal gate sets and fundamental for the definition of computational oracles.

One might ask how the computational power of theories with these crucial algorithmic components depends on their underlying physical properties. One such property -- currently under both theoretical \cite{Higher-order-reconstruction,Niestegge-2012,Henson-2015,ududec2011three} and experimental \cite{sinha2008testing,park2012three} investigation -- is the existence of \emph{higher-order interference}.

Sorkin \cite{sorkin1994quantum, sorkin1995quantum} has introduced a hierarchy of mathematically conceivable \emph{higher-order} interference behaviours and shown that quantum theory is limited to having only second-order interference. Informally, this means that the interference pattern created in a three -- or more -- slit experiment can be written in terms of the two and one slit interference patterns obtained by blocking some of the slits; no genuinely new features result from considering three slits instead of two. This is in contrast to the existence of second-order interference where the two slit interference cannot be reproduced from that of single slits. Informally, theories are said to have higher-order interference if irreducible interference patterns can be created in multi-slit experiments.

Second-order interference between quantum computational paths appears to be a resource for non-classical computation \cite{Interference-speed-up, Nielsen}. It therefore seems prudent to investigate how different interference behaviour is related to computation in general. In quantum theory there is an intimate connection between phase transformations -- such as those used in the kick-back mechanism -- and interference. Motivated by this, in Sec.~(\ref{int}), we introduce a framework that relates higher-order interference to \emph{phase transformations} in operationally-defined theories.

We show that the generalised phase kick-back mechanism allows one to access any `higher-order phase' in a controlled manner. Using this, in Sec.~(\ref{oracle}), we show that the existence of non-trivial interference behaviour allows for the solution of problems intractable on a classical computer. We also conjecture that these higher-order phase kick-backs allow for the solution of computational problems intractable even on a \emph{quantum} computer. Additionally, in Sec.~(\ref{Exchange}), we show that higher-order phases lead to new particle types that exhibit both qualitatively and quantitatively different behaviour to fermions, bosons and anyons. Thus potentially providing a new experimental test of higher-order interference.

\section{The framework}
\subsection{Operational physical theories}

We work in the circuit framework for operationally-defined theories developed in \cite{Hardy-2011,Pavia1,Pavia2}.
An operational theory specifies a set of physical processes that can be connected together to form experiments and assigns probabilities to different experimental outcomes. A process has input ports, output ports, and a classical pointer. When a process is used in an experiment, the pointer comes to rest in one of a number of positions, indicating an outcome has occurred. Intuitively, one can think of \emph{physical systems} as passing between the ports of these processes. These systems come in different types, denoted $A,B..$. In an experiment these processes can be composed both sequentially and in parallel, and when composed sequentially, types must match.

In this framework, closed circuits define probabilities. Processes that yield the same probabilities in all closed circuits are identified. The set of equivalence classes of processes with no input ports are called \emph{states}, no output ports \emph{effects} and both input and output ports \emph{transformations}. The set of all states of system $A$ is denoted $\Omega_A$, the set of all effects on $B$ is denoted $\mathcal{E}_B$ and the set of \emph{reversible} transformations between systems $A$ and $B$ is denoted $\mathcal{R}^A_B$ \footnote{The set of states, effects and transformations each give rise to a vector space and transformations and effects act linearly on the vector space of states. We assume in this work that all vector spaces are finite dimensional.}. Note that $\mathcal{R}^A_B$ has a group structure. A state is \emph{pure} if it does not arise as a \emph{coarse-graining} of other states \footnote{The process $\{\mathcal{U}_j\}_{j\in{Y}}$, where $j$ index the positions of the classical pointer, is a coarse-graining of the process $\{\mathcal{E}_i\}_{i\in{X}}$ if there is a disjoint partition $\{X_j\}_{j\in{Y}}$ of $X$ such that $\mathcal{U}_j=\sum_{i\in{X_j}}\mathcal{E}_i$.}; a pure state is one for which we have maximal information. A state is \emph{mixed} if it is not pure. We assume for this paper that the composite of two pure states is itself pure \footnote{Note that this is not the case for every generalised probabilistic theories. For example, the theories based on Euclidean-Jordan algebras presented in \cite{barnum2015some} do not satisfy this requirement.}. Similarly, one says a transformation is pure if it does not arise as a coarse-graining of other transformations. It can be shown that reversible transformations preserve pure states.
 
The `Dirac-like' notation $_A|s)$ is used to represent a state of system $A$, and $(e_{r}|_B$ to represent an effect on $B$. Here $r$ is the position of the classical pointer, which can be thought of as the outcome of the measurement defined by $\{(e_r|\}_r$. States, effects and transformations can be represented diagrammatically:
\[\begin{tikzpicture}
	\begin{pgfonlayer}{nodelayer}
		\node [style=cpoint] (0) at (-0.5, -0) {$s$};
		\node [style={small box}] (1) at (1.5, -0) {$T$};
		\node [style=cocpoint] (2) at (3.5, -0) {$e_r$};
		\node [style=none] (3) at (4.5, -0) {$=$};
		\node [style=none] (4) at (6.5, -0) {$(e_r|_BT_A|s)$};
		\node [style=none] (5) at (0.5, 0.5) {$A$};
		\node [style=none] (6) at (2.5, 0.5) {$B$};
		\node [style=none] (7) at (8.5, -0) {};
	\end{pgfonlayer}
	\begin{pgfonlayer}{edgelayer}
		\draw (0) to (1);
		\draw (1) to (2);
	\end{pgfonlayer}
\end{tikzpicture} \]
This diagrammatic approach was inspired by the categorical formalism of quantum mechanics \cite{cqm1,cqm2}.
\begin{definition}[Causality \cite{Pavia1}]
A theory is said to be \emph{causal} if there exists a unique deterministic effect $\detEff $  for every system, such that $\sum_r (e_r|=\detEff $ for all measurements, $\{(e_r|\}_r$.
\end{definition}

Mathematically, causality is equivalent to the statement: ``Probabilities of present experiments are independent of future measurement choices''. In causal theories, all states are \emph{normalised} \cite{Pavia1}. That is, $\detEff s)=1$ for all $|s)$. The deterministic effect allows one to define a notion of \emph{marginalisation} for multi-partite states.

\begin{definition}[Purification \cite{Pavia1}] \label{Pure}
Given a state $_A|s)$ there exists a system $B$ and a pure state $_{AB}|\psi)$ on $AB$ such that $_A|s)$ is the marginalisation of $_{AB}|\psi)$: \[\begin{tikzpicture}
	\begin{pgfonlayer}{nodelayer}
		\node [style=none] (0) at (1, -0) {};
		\node [style=none] (1) at (1, 1.5) {};
		\node [style=none] (2) at (1, 2) {};
		\node [style=none] (3) at (1, -0.5) {};
		\node [style=trace] (4) at (2.5, -0) {};
		\node [style=none] (5) at (2.5, 1.5) {};
		\node [style=none] (6) at (1, -0.5) {};
		\node [style=none] (7) at (0.5, 0.75) {$\psi$};
		\node [style=none] (8) at (4, 1) {$=$};
		\node [style=cpoint] (9) at (5, 1.5) {$s$};
		\node [style=none] (10) at (6.5, 1.5) {};
		\node [style=none] (11) at (1.75, 0.5) {$B$};
		\node [style=none] (12) at (1.75, 2) {$A$};
		\node [style=none] (13) at (5.75, 2) {$A$};
		\node [style=none] (14) at (7, -0) {};
	\end{pgfonlayer}
	\begin{pgfonlayer}{edgelayer}
		\draw [bend right=90, looseness=1.25] (2.center) to (3.center);
		\draw (2.center) to (3.center);
		\draw (1.center) to (5.center);
		\draw (0.center) to (4);
		\draw (9) to (10.center);
	\end{pgfonlayer}
\end{tikzpicture}  \] Moreover, the purification $_{AB}|\psi)$ is unique up to reversible transformations on the purifying system, $B$ \footnote{That is if two states $|\psi)_{AB}$ and $|\psi')_{AB}$ purify $|s)_A$, then there exists a reversible transformation $T_B$ on system $B$ such that $|\psi)_{AB}=(\mathbb{I}\otimes{T_B})|\psi)_{AB}$.}.
\end{definition}
While the purification principle appears to only concern states, it can be leveraged to prove somewhat analogous results about transformations \cite[Thm.~15]{Pavia1}: let $T,T'$ be reversible transformations. If,
\[\begin{tikzpicture}
	\begin{pgfonlayer}{nodelayer}
		\node [style=none] (0) at (0, 0.75) {};
		\node [style=none] (1) at (1, 0.75) {};
		\node [style=none] (2) at (1, 1.25) {};
		\node [style=none] (3) at (2, 1.25) {};
		\node [style=none] (4) at (2, 0.75) {};
		\node [style=none] (5) at (3, 0.75) {};
		\node [style=none] (6) at (1, -0.25) {};
		\node [style=cpoint] (7) at (0, -0.25) {s};
		\node [style=none] (8) at (1, -0.75) {};
		\node [style=none] (9) at (2, -0.75) {};
		\node [style=none] (10) at (2, -0.25) {};
		\node [style=trace] (11) at (3, -0.25) {};
		\node [style=none] (12) at (1.5, 0.25) {$T$};
		\node [style=none] (13) at (4.5, 0.25) {$=$};
		\node [style=none] (14) at (7, -0.75) {};
		\node [style=none] (15) at (6, 0.75) {};
		\node [style=none] (16) at (8, -0.25) {};
		\node [style=none] (17) at (7, 0.75) {};
		\node [style=none] (18) at (9, 0.75) {};
		\node [style=none] (19) at (8, 1.25) {};
		\node [style=none] (20) at (7, -0.25) {};
		\node [style=none] (21) at (7, 1.25) {};
		\node [style=trace] (22) at (9, -0.25) {};
		\node [style=none] (23) at (7.5, 0.25) {$T'$};
		\node [style=none] (24) at (8, -0.75) {};
		\node [style=none] (25) at (8, 0.75) {};
		\node [style=cpoint] (26) at (6, -0.25) {s'};
		\node [style=none] (27) at (10, -0) {};
	\end{pgfonlayer}
	\begin{pgfonlayer}{edgelayer}
		\draw (0.center) to (1.center);
		\draw (7) to (6.center);
		\draw (10.center) to (11);
		\draw (4.center) to (5.center);
		\draw (3.center) to (9.center);
		\draw (9.center) to (8.center);
		\draw (8.center) to (2.center);
		\draw (2.center) to (3.center);
		\draw (15.center) to (17.center);
		\draw (26) to (20.center);
		\draw (16.center) to (22);
		\draw (25.center) to (18.center);
		\draw (19.center) to (24.center);
		\draw (24.center) to (14.center);
		\draw (14.center) to (21.center);
		\draw (21.center) to (19.center);
	\end{pgfonlayer}
\end{tikzpicture} ,\]
then there exists a reversible transformation $G$ such that
\begin{equation}\begin{tikzpicture}
	\begin{pgfonlayer}{nodelayer}
		\node [style=cpoint] (0) at (6, 1) {$\sigma$};
		\node [style=none] (1) at (8, 1.5) {};
		\node [style=none] (2) at (4.5, 0.25) {$=$};
		\node [style=none] (3) at (1, 1) {};
		\node [style=none] (4) at (2, 1.5) {};
		\node [style=cpoint] (5) at (0, -0.5) {s};
		\node [style=none] (6) at (1.5, 0.25) {$T$};
		\node [style=none] (7) at (8, -0.5) {};
		\node [style=cpoint] (8) at (6, -0.5) {s'};
		\node [style=none] (9) at (1, 1.5) {};
		\node [style=none] (10) at (2, -0.5) {};
		\node [style=none] (11) at (2, 1) {};
		\node [style=none] (12) at (7.5, 0.25) {$T'$};
		\node [style=none] (13) at (1, -1) {};
		\node [style=none] (14) at (3, 1) {};
		\node [style=none] (15) at (7, 1.5) {};
		\node [style=cpoint] (16) at (0, 1) {$\sigma$};
		\node [style=none] (17) at (1, -0.5) {};
		\node [style=none] (18) at (2, -1) {};
		\node [style=none] (19) at (10, 1) {};
		\node [style=none] (20) at (3, -0.5) {};
		\node [style=none] (21) at (7, -1) {};
		\node [style=none] (22) at (7, 1) {};
		\node [style=none] (23) at (8, 1) {};
		\node [style={small box}] (24) at (9, -0.5) {G};
		\node [style=none] (25) at (8, -1) {};
		\node [style=none] (26) at (10, -0.5) {};
		\node [style=none] (27) at (7, -0.5) {};
		\node [style=none] (28) at (12, -0) {$\forall |\sigma)$};
		\node [style=none] (29) at (13, -0) {};
	\end{pgfonlayer}
	\begin{pgfonlayer}{edgelayer}
		\draw (16) to (3.center);
		\draw (5) to (17.center);
		\draw (10.center) to (20.center);
		\draw (11.center) to (14.center);
		\draw (4.center) to (18.center);
		\draw (18.center) to (13.center);
		\draw (13.center) to (9.center);
		\draw (9.center) to (4.center);
		\draw (0) to (22.center);
		\draw (8) to (27.center);
		\draw (7.center) to (24);
		\draw (23.center) to (19.center);
		\draw (1.center) to (25.center);
		\draw (25.center) to (21.center);
		\draw (21.center) to (15.center);
		\draw (15.center) to (1.center);
		\draw (24) to (26.center);
	\end{pgfonlayer}
\end{tikzpicture} .\label{Dilation}\end{equation}
Eq.~(\ref{Dilation}), above, depicts the equivalence of purifications of a transformation up to a local reversible transformation, as opposed to the purification of states mentioned in Def.~(\ref{Pure}). These can in fact been shown to be equivalent \cite{Pavia1, Pavia2}, and so we will use the term \emph{purification} when referring to either notion.

Pure states $\{|s_i)\}_{i=1}^n$ are \emph{perfectly distinguishable} if there exists a measurement, corresponding to effects $\{(e_j|\}_{j=1}^n$, such that $(e_j|s_i)=\delta_{ij}$ for all $i,j$. Note that an $n$-tuple of pure and perfectly distinguishable states can reliably encode an $n$-level classical system.
\begin{definition}[Strong symmetry \cite{Higher-order-reconstruction}]
A theory satisfies \emph{strong symmetry} if for any two $n$-tuples of pure and perfectly distinguishable states $\{|\rho_i)\},\{|\sigma_i)\},$ there exists a reversible transformation $T$ such that $T|\rho_i)=|\sigma_i)$ for $i=1,\dots,n$.
\end{definition}

Informally, the purification principle says that information is fundamentally conserved, strong symmetry states that all information carriers of the same size are equivalent and causality implies that information propagates from present to future. Note that standard quantum theory, real vector space quantum theory and the classical theory of pure states satisfy all of the above principles. These principles will be shown to be a primer for interesting and consistent computation. In Sec.~\ref{oracle}, we shall investigate the change in computational power as one varies the interference behaviour in theories satisfying causality, purification and strong symmetry.

\subsection{Higher-order interference via phase transformations}
\subsubsection{A quantum example} \label{Quantum-Example}
Perhaps the cleanest example of interference in quantum theory is exhibited by the Mach-Zehnder interferometer, illustrated below:
\[\begin{tikzpicture}
	\begin{pgfonlayer}{nodelayer}
		\node [style=none] (0) at (0, 1.25) {};
		\node [style=di, fill=white] (1) at (0.9999999, 2.25) {};
		\node [style=none] (2) at (3, 4.25) {};
		\node [style=none] (3) at (4, -0.7500001) {};
		\node [style=none] (4) at (7, 2.25) {};
		\node [style=di, fill=white] (5) at (6, 1.25) { };
		\node [style=none] (6) at (2, 4.75) {};
		\node [style=none] (7) at (2, -4.75) {};
		\node [style=none] (8) at (5, 4.75) {};
		\node [style=none] (9) at (5, -4.75) {};
		\node [style=cpoint] (10) at (0.9999999, -3.75) {$s$};
		\node [style=sbox, scale=1] (11) at (3.5, -3.75) {\small $P_{\Delta\phi}$};
		\node [style=cocpoint] (12) at (6, -3.75) {$e$};
		\node [style=none] (13) at (3.5, -1.25) {};
		\node [style=none] (14) at (3.5, -2.75) {};
		\node [style=none] (15) at (2.25, 4.25) {};
		\node [style=none] (16) at (3.75, 4.25) {};
		\node [style=none] (17) at (3.25, -0.7500001) {};
		\node [style=none] (18) at (4.75, -0.7500001) {};
		\node [style=mopoint, rotate=135, xscale=0.75, yscale=1] (19) at (7.25, 2.5) {};
		\node [style=none] (20) at (8.5, 2.5) {};
		\node [style=none] (21) at (4.5, 3.75) {};
		\node [style=none] (22) at (2.75, -0) {};
		\node [style=none] (23) at (3.5, 1.75) {$\Delta\phi$};
		\node [style=none, text width={2.5 cm}] (24) at (-3.75, 2) {Description of experimental set up};
		\node [style=none, text width={2.5 cm}] (25) at (-3.75, -3.25) {Operational description in circuit notation};
		\node [style=none] (26) at (-4.25, -0) {};
		\node [style=none] (27) at (-4.25, -1.5) {};
	\end{pgfonlayer}
	\begin{pgfonlayer}{edgelayer}
		\draw (0.center) to (2.center);
		\draw (1) to (3.center);
		\draw (3.center) to (4.center);
		\draw [style={thick gray dashed edge}] (7.center) to (6.center);
		\draw [style={thick gray dashed edge}] (9.center) to (8.center);
		\draw (10) to (11);
		\draw (11) to (12);
		\draw [style={arrow plain}, line width=1.00, bend left=15, looseness=1.00] (13.center) to (14.center);
		\draw [style=none, in=-120, out=30, looseness=1.25] (19) to (20.center);
		\draw [style=mirror] (15.center) to (16.center);
		\draw [style=mirror] (17.center) to (18.center);
		\draw (2.center) to (5);
		\draw [style={dashed edge}, bend left=90, looseness=0.50] (22.center) to (21.center);
		\draw [style={dashed edge}, bend right=105, looseness=0.50] (22.center) to (21.center);
		\draw [style={arrow plain}, line width=1.00, bend left=15, looseness=1.00] (26.center) to (27.center);
	\end{pgfonlayer}
\end{tikzpicture} \]
There are three parts to this:
\begin{enumerate}
\item Prepare a state as a superposition of paths: $$|s)=\ket{+}\bra{+}:=\rho_+$$
\item Apply a `phase transformation': $$P_{\Delta\phi}|s)=R_z^{\Delta\phi}\rho_+R_z^{\Delta\phi\dagger},$$ with $R_z^{\Delta\phi}$ a rotation by $\Delta\phi$ about the $z$ axis of the Bloch ball.
\item Measure in a superposition of paths: $$ \begin{aligned}  (e|P_{\Delta\phi}|s)&=\mathrm{Tr}\left(\rho_+R_z^{\Delta\phi}\rho_+R_z^{\Delta\phi\dagger}\right)=\cos^2\left(\frac{\Delta\phi}{2}\right). \end{aligned}$$
\end{enumerate}

The observed interference pattern is therefore a map from the group of `phase transformations', parametrised by $\Delta\phi$, to the unit interval (i.e. probabilities),
\begin{equation} \label{Quantum-Pattern}
P_{\Delta\phi}\mapsto \cos^2\left(\frac{\Delta\phi}{2}\right).
\end{equation}

The existence of interference in quantum theory is encapsulated in the statement: ``the interference pattern observed for a particular superposition measurement cannot be reproduced by the statistics generated by `which path' measurements''. In the above example this translates to:
\begin{equation} \label{Quantum-Interference}
\cos^2\left(\frac{\Delta\phi}{2}\right)\neq \sum_{i=0}^1 q_i \mathrm{Tr}\left(\ket{i}\bra{i}R_z^{\Delta\phi}\rho_+R_z^{\Delta\phi\dagger}\right),
\end{equation}
where $q_i$ is an arbitrary constant. Eq.~(\ref{Quantum-Interference}) is to be interpreted as an inequality of the functions defined on the right and left hand side. That is, these functions do not coincide on all phase transformations. This follows from the fact that:
\begin{equation} \label{Phase}
R_z^{\Delta\phi\dagger}\ket{i}\bra{i}R_z^{\Delta\phi}=\ket{i}\bra{i}, \quad \forall i\in\{0,1\}.
\end{equation}
That is, the left hand side of Eq.~(\ref{Quantum-Interference}) depends on $\Delta\phi$ whilst the right hand side does not.

\subsubsection{Operational theories \label{int} }

The quantum example from Sec.~(\ref{Quantum-Example}) illustrates the key components necessary to discuss interference:
\begin{itemize}
\item[(i)] a notion of `path',
\item[(ii)] a notion of `superposition of paths',
\item[(iii)] transformations that leave the statistics of `which path' measurements invariant, i.e. `phase transformations',
\item[(iv)] a notion of `interference pattern', i.e. a way of associating phase transformations with probabilities.
 \end{itemize}
  These points will now be discussed in the context of arbitrary operationally-defined theories. We then use this framework to link higher-order interference and phase transformations. Our approach is similar in spirit to that of Garner et al. \cite{Garner}, with the caveat that they have not considered higher-order interference.

\paragraph*{---(i) Paths:} A path is defined by a state and effect pair, where we view the state as `preparing a state which belongs to the path' and the effect as `measuring whether the state belongs to the path' and so we demand the probability of the state-effect pair to be one.
\begin{definition}{Paths,   $p$:} \[p:=(|s),(e|) \text{ s.t. } (e|s)=1.\]
\end{definition}

In our quantum example, the paths were $p_0=\left(\ket{0}\bra{0}, \ket{0}\bra{0}\right)$ and $p_1=\left(\ket{1}\bra{1}, \ket{1}\bra{1}\right)$.

Paths are disjoint if the state defining one path has zero probability of belonging to the other, and vice versa.
\begin{definition}{Disjoint paths, $p_1\perp p_2$:} \[p_1\perp p_2 \iff (e_i|s_j)=\delta_{ij}.\]
\end{definition}
An $n$-path experiment is defined by $n$ mutually disjoint paths such that the set consisting of the effects from each path forms a measurement.
\begin{definition}{$n$-path experiment,  $\mathds{P}$:} \[\mathds{P}:= \{p_i\} \text{ s.t. } p_i\perp p_j \ \forall i\neq j, \ \mathrm{and} \ \sum_i (e_i|=\detEff.\]
\end{definition}
In the quantum case, an $n$-path experiment would correspond to a multi-arm interferometer.

\paragraph*{---(ii) Superposition of paths:}
A superposition of paths will be defined relative to some $n$-path experiment $\mathds{P}$ via the notion of \emph{support}. We say that a state (or effect) has support on a path if the effect (or state) associated to that path gives a non-zero probability.
\begin{definition}{Support of a state or effect, $Supp[|s)]$ or $Supp[(e|]:$}
$$\begin{aligned}
Supp[|s)]:=\{p_i\in \mathds{P} \ | \ (e_i|s)\neq 0\},\\
Supp[(e|]:=\{p_i\in \mathds{P} \ | \ (e|s_i)\neq 0\}.
\end{aligned}
$$
\end{definition}
If the support of a state consists of more than one path this does not guarantee that it is a superposition of paths -- it could equally well be a classical mixture of paths. A superposition state must therefore lie outside the convex hull of the states which have support only on a single path. In our quantum example, the state $\ket{+}\bra{+}$ -- introduced in point~(1) of Sec.~(\ref{Quantum-Example}) -- was a superposition of paths.

We can define set of states (or effects) with support on some subset of paths $I\subseteq \mathds{P}$ as:
$$ \begin{aligned}
\Omega_I&:=\{|s)\in \Omega \ | \ Supp[|s)]=I\}, \\
\mathcal{E}_I&:=\{(e|\in \mathcal{E} \ | \ Supp[(e|]=I\}.
\end{aligned}
$$

\paragraph*{---(iii) Phase transformations:}

A phase transformation -- relative to some $\mathds{P}$ -- is any transformation that leaves the statistics of `which path' measurements invariant.
\begin{definition}{Phase group, $\mathcal{P}$:}
\[\mathcal{P}:=\{T\in \mathcal{R} \ | \ (e_i|T=(e_i|, \  \forall i\in \mathds{P}\}\]
\end{definition}
In the quantum example, the phase transformation was the rotation $R_z^{\Delta\phi}$ introduced in point~(2) of Sec.~(\ref{Quantum-Example}).

\paragraph*{---(iv) Interference patterns:}
We now generalise the quantum interference pattern of Eq.~(\ref{Quantum-Pattern}) to arbitrary operational theories.
\begin{definition}{Interference pattern, $\mathcal{C}_{s,e}:$}
\[\mathcal{C}_{s,e}: \ \mathcal{P}\to[0,1] \ :: \ T\mapsto (e|T|s)\]
\end{definition}

Given this definition, Eq.~(\ref{Quantum-Interference}) translates into the existence of $(e|\in\mathcal{E}_{\{0,1\}}$ -- that is, an effect with support on path $0$ and path $1$ -- and $|s)\in\Omega_{\{0,1\}}$ such that
\begin{equation} \label{General-Interference}
{C}_{s,e}\neq \sum_{i=0}^1 \mathcal{C}_{s,e_i}
\end{equation}
for all possible choices of $(e_i|\in\mathcal{E}_{\{i\}}$ including subnormalised effects, this is the analogue of the $q_i$'s in Eq.~(\ref{Quantum-Interference}). In other words, there is some choice of superposition state and effect such that their interference pattern cannot be reproduced by the statistics generated by effects with support on a single path.

Other approaches to defining higher-order interference in operational theories (for example \cite{Higher-order-reconstruction}) have additional structure such that one can define a set of `filters', $\{F_I\}$, for the theory. These are transformations that represent the action of leaving open some subset of paths $I$ whilst blocking the others. In this case one can define $(e_I|=(e|F_{I}$ giving a specific set of effects. However, arbitrary theories do not have sufficient structure to define filters and so one must consider all possible choices $(e_I|$ with the correct support. Otherwise \cite{Thesis, LS-2015} one can -- even in quantum and classical theory --  choose a specific set of $(e_I|$ to give the artificial appearance of higher-order interference.

It follows that the existence of a non-trivial phase group implies the existence of interference in a general theory. Indeed, the left hand side of Eq.~(\ref{General-Interference}) depends on the phase group element, whilst the right hand side does not -- the analogue of Eq.~(\ref{Phase}) from the quantum example. We now use our framework to discuss \emph{higher-order} interference.

\subsubsection{Higher-order interference and phase}

Adapting Sorkin's original definition of higher-order interference \cite{sorkin1994quantum} to our framework results in: the existence of $n$th-order interference in an $n$-path experiment corresponds to the existence of an effect $|e)$ and a state $(s|$ such that
\begin{equation} \label{Higher-order}
\mathcal{C}_{s,e}\neq \sum_{I\subset\mathds{P}}(-1)^{n-|I|+1}\mathcal{C}_{s,e_I},
\end{equation}
for all $|e_I)\in\mathcal{E}_I.$ As in Eq.~(\ref{Quantum-Interference}), Eq.~(\ref{Higher-order}) is to be interpreted as an inequality of the functions defined on the right and left. See Appendix~(\ref{Higher-order-App}) for an in-depth discussion of Eq.~(\ref{Higher-order}).

Motivated by Eq.~(\ref{Higher-order}), we wish to determine if particular phase transformations give rise to higher-order interference. The defining feature of phase transformations is that they leave the statistics of effects with support on single paths -- that is, effects in $\bigcup_i\mathcal{E}_{\{i\}}$ --  invariant. The natural generalisation of this is to consider transformations that not only leave the statistics of effects on single paths invariant, but also superposition effects. This motivates the following definition.
\begin{definition}
A transformation $T$ is \emph{$n$-undetectable} if:
$(e|T=(e|, \  \forall (e|\in\bigcup_{I:|I|\leq n}\mathcal{E}_I$.
\end{definition}
Together with its natural converse.
\begin{definition}
A transformation $T$ is \emph{$m$-detectable} if there exists $(e|\in\bigcup_{I:|I|\leq m}\mathcal{E}_I$, such that $(e|T\neq(e|$.
\end{definition}
We can now link higher-order interference to certain types of phase transformations, which we call \emph{higher-order phases}.
\begin{theorem} \label{Higher-Phase}
A transformation $T$ that is $n$ detectable and $n\mathrm{-}1$ undetectable implies the existence of $n$th-order interference.
\end{theorem}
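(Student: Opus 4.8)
The plan is to use the hypothesised transformation $T$ itself as the ``probe'' at which to evaluate the candidate interference patterns, comparing its value against that of the identity $\mathbb{I}$. First I would note that, for $n\geq 2$, an $n\mathrm{-}1$ undetectable transformation is in particular $1$-undetectable, so $(e_i|T=(e_i|$ for every single-path effect and hence $T\in\mathcal{P}$; since $\mathbb{I}\in\mathcal{P}$ as well, both are admissible arguments of the interference patterns. Next, $n$-detectability supplies an effect $(e|$ supported on at most $n$ paths with $(e|T\neq(e|$. Because $n\mathrm{-}1$ undetectability forbids any effect with support on $n-1$ or fewer paths from detecting $T$, this $(e|$ must in fact be supported on all $n$ paths, i.e. $(e|\in\mathcal{E}_{\mathds{P}}$; and as $(e|T\neq(e|$ are distinct effects they must disagree on some state, so I may fix a state $|s)$ with $(e|T|s)\neq(e|s)$. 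This pair $(e|,|s)$ is my proposed witness for $n$th-order interference in the sense of Eq.~(\ref{Higher-order}).

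It then remains to verify the inequality of Eq.~(\ref{Higher-order}) for \emph{every} admissible family of sub-effects $(e_I|\in\mathcal{E}_I$, $I\subsetneq\mathds{P}$. I would study the difference function $d:=\mathcal{C}_{s,e}-\sum_{I\subsetneq\mathds{P}}(-1)^{n-|I|+1}\mathcal{C}_{s,e_I}$ on $\mathcal{P}$ and compare its values at $T$ and at $\mathbb{I}$. The key observation is that every proper subset satisfies $|I|\leq n-1$, so each sub-effect $(e_I|$ is left invariant by $T$; hence $\mathcal{C}_{s,e_I}(T)=(e_I|T|s)=(e_I|s)=\mathcal{C}_{s,e_I}(\mathbb{I})$ and the whole sum contributes identically at the two points. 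The sum therefore drops out of the difference, leaving $d(T)-d(\mathbb{I})=(e|T|s)-(e|s)\neq 0$. Consequently $d$ cannot vanish at both $T$ and $\mathbb{I}$, so $\mathcal{C}_{s,e}$ and the alternating sum differ at one of these phase transformations -- which is exactly the required inequality.

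The step I expect to be the real obstacle is the universal quantifier over the sub-effects $(e_I|$: one might worry that some clever choice could conspire to reproduce $\mathcal{C}_{s,e}$ and wash out the interference. The resolution, which is the conceptual heart of the argument, is that the gap $d(T)-d(\mathbb{I})$ is completely independent of the chosen $(e_I|$ -- precisely because $n\mathrm{-}1$ undetectability renders every admissible sub-effect blind to the difference between $T$ and $\mathbb{I}$. No choice of sub-effects can therefore close the fixed gap $(e|T|s)-(e|s)$, so the inequality holds for all of them simultaneously. The only routine point to dispatch along the way is that $(e|$ genuinely has full support, so that it is the correct type of effect to appear on the left-hand side of Eq.~(\ref{Higher-order}); this is immediate from combining $n$-detectability with $n\mathrm{-}1$ undetectability.
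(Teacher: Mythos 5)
Your proposal is correct and follows essentially the same route as the paper's own (much terser) proof: choose a witnessing $(e|$ and $|s)$ from $n$-detectability, and observe that $n\mathrm{-}1$ undetectability makes every term $\mathcal{C}_{s,e_I}$ with $I\subsetneq\mathds{P}$ insensitive to $T$, so the left-hand side of Eq.~(\ref{Higher-order}) depends on the phase group element while the right-hand side does not. Your explicit comparison of the difference function at $T$ and at the identity, and the remark that the gap is independent of the choice of sub-effects, is just a careful unpacking of the paper's one-line argument.
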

\begin{proof}
Choose $|s)$ and $(e|$ such that $T$ is detected. It is then clear that the left hand side of Eq.~(\ref{Higher-order}) is dependent on $T$, whilst -- due to undetectability -- the right hand side is not. They are thus distinct functions.
\end{proof}
In our quantum example, the phase transformation was $2$-detectable, but $1$-undetectable.

\section{Controlled transformations and a generalised phase kick-back} \label{Control2}
\begin{definition}
Given a set of pure and perfectly distinguishable states $\{|i)\}$ and a set of transformations $\{T_i\}$,  we define a controlled transformation $C\{T_i\}$ as:
\begin{equation}\label{Control}
\begin{tikzpicture}
	\begin{pgfonlayer}{nodelayer}
		\node [style=cpoint] (0) at (0, 1) {$i$};
		\node [style=none] (1) at (1, 1.5) {};
		\node [style=none] (2) at (2.5, 1.5) {};
		\node [style=none] (3) at (1, -0.75) {};
		\node [style=none] (4) at (2.5, -0.75) {};
		\node [style=none] (5) at (1, 1) {};
		\node [style=none] (6) at (2.5, 1) {};
		\node [style=none] (7) at (1, -0.25) {};
		\node [style=none] (8) at (2.5, -0.25) {};
		\node [style=cpoint] (9) at (0, -0.25) {$\sigma$};
		\node [style=none] (10) at (3.5, -0.25) {};
		\node [style=none] (11) at (3.5, 1) {};
		\node [style=none] (12) at (1.75, 1) {$C$};
		\node [style=none] (13) at (5, 0.5) {$=$};
		\node [style=cpoint] (14) at (6.5, 1) {$i$};
		\node [style=none] (15) at (9.5, 1) {};
		\node [style=cpoint] (16) at (6.5, -0.25) {$\sigma$};
		\node [style={small box}] (17) at (8, -0.25) {$T_i$};
		\node [style=none] (18) at (9.5, -0.25) {};
		\node [style=none] (19) at (1.75, -0.25) {$\{T_i\}$};
		\node [style=none] (20) at (11.5, -0) {$\forall i, |\sigma)$};
		\node [style=none] (21) at (12.5, -0) {};
	\end{pgfonlayer}
	\begin{pgfonlayer}{edgelayer}
		\draw (0) to (5.center);
		\draw (9) to (7.center);
		\draw (6.center) to (11.center);
		\draw (8.center) to (10.center);
		\draw (1.center) to (3.center);
		\draw (3.center) to (4.center);
		\draw (4.center) to (2.center);
		\draw (2.center) to (1.center);
		\draw (16) to (17);
		\draw (17) to (18.center);
		\draw (14) to (15.center);
	\end{pgfonlayer}
\end{tikzpicture} \end{equation}
The top system and lower systems are referred to as the \emph{control} and \emph{target} respectively.
\end{definition}
Note that classical controlled transformations -- where the control is measured and conditioned on the outcome a transformation is applied to the target -- exist in any causal theory \cite{Pavia1} with sufficient distinguishable states. However, such transformations are in general not reversible and do not offer an advantage over classical computation \cite{Rev}. Moreover, the existence of reversible controlled transformations appears to be a rare property of operational theories \cite{Rev}. The following states that in theories satisfying our assumptions, there exist reversible controlled transformations. The proof is in contained in Appendix~(\ref{CT}).

\begin{theorem} \label{Reversible-Control}
In any theory satisfying i) causality, ii) purification, iii) strong symmetry, there exists a \emph{reversible} controlled transformation for all sets of reversible transformations $\{T_i\}$.
\end{theorem}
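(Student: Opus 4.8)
The plan is to first realise the controlled transformation on a distinguishable product basis using strong symmetry, and then to upgrade this to the full defining equation of Eq.~(\ref{Control}) using purification.

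First I would fix a maximal set $\{|a_k)\}_{k=1}^N$ of pure, perfectly distinguishable states on the target $B$ and the given distinguishable control states $\{|i)\}_{i=1}^n$ on $A$, and consider the two families of composite states $\{\,|i)\otimes|a_k)\,\}_{i,k}$ and $\{\,|i)\otimes T_i|a_k)\,\}_{i,k}$. Both are pure, since composites of pure states are assumed pure and reversible transformations preserve purity. Both are also $(nN)$-tuples of perfectly distinguishable states: one distinguishes the control with the measurement $\{(e_i|\}$ and then the target with the measurement distinguishing $\{T_i|a_k)\}_k$, which exists and is complete because $T_i$ is reversible and hence preserves perfect distinguishability and the deterministic effect (causality gives $\detEff\, T_i=\detEff$). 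Strong symmetry then supplies a reversible transformation $C$ on $AB$ with $C\,|i)\otimes|a_k) = |i)\otimes T_i|a_k)$ for all $i,k$. This $C$ is the candidate controlled transformation.

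The remaining task — and the crux — is to show that this $C$ satisfies $C\,|i)\otimes|\sigma) = |i)\otimes T_i|\sigma)$ for \emph{every} target state $|\sigma)$, not only for the basis states $|a_k)$. Since transformations act linearly and every state is a mixture of pure states, it suffices to treat pure $|\sigma)=|\psi)$; but the $N$ basis states span only a small subspace, so linearity on the basis is not enough and purification must be invoked. I would introduce a reference $R$ and a pure state $|\Phi)_{RB}$ purifying the maximally mixed state $\chi_B=\frac1N\sum_k|a_k)$, with the property (a standard consequence of purification) that a measurement $\{(m_k|_R\}$ steers $B$ to the $|a_k)$, and, more generally, that every pure $|\psi)_B$ is remotely prepared by some effect on $R$. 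Applying $C$ to $AB$ while leaving $R$ untouched and comparing the marginal on $AB$ after tracing out $R$, one finds that both $(C\otimes\mathbb{I}_R)\,|i)\otimes|\Phi)_{RB}$ and the intended state $|i)\otimes(\mathbb{I}_R\otimes T_i)|\Phi)_{RB}$ purify the same state $|i)\otimes T_i\chi_B$ of $AB$; by the uniqueness clause of purification they therefore agree up to a reversible transformation $V^{(i)}$ acting on $R$ alone.

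The main obstacle is precisely this residual freedom $V^{(i)}$: propagating it through the steering effect shows that $C$ realises a controlled-$\{T_iW_i\}$, where $W_i$ is the reversible transformation on $B$ induced by $V^{(i)}$, and the basis conditions force only $W_i|a_k)=|a_k)$ — i.e. $W_i$ is a \emph{phase} relative to $\{|a_k)\}$ (cf. the phase group of Sec.~(\ref{int})), which need not be trivial. The heart of the proof is thus to remove these residual phases. I would do this by exploiting the full strength of purification together with the transformation-dilation statement of Eq.~(\ref{Dilation}) to constrain $V^{(i)}$, and by using the freedom in the choice of $C$ (any two admissible choices differ by a reversible map fixing the product basis) to absorb the $W_i$ into a controlled-phase correction, thereby obtaining a genuine reversible $C\{T_i\}$ satisfying Eq.~(\ref{Control}). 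Verifying that this correction can always be carried out consistently is where the argument requires the most care.
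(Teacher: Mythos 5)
Your opening move is sound and your diagnosis of the difficulty is exactly right: strong symmetry applied to the product tuples $\{|i)\otimes|a_k)\}$ and $\{|i)\otimes T_i|a_k)\}$ only pins $C$ down on a distinguishable basis, and the purification/steering argument then leaves a residual reversible freedom $V^{(i)}$ on the reference, equivalently a phase $W_i$ relative to $\{|a_k)\}$ on the target. The gap is in the final step, which you yourself flag as unverified: removing the $W_i$ by a ``controlled-phase correction'' is circular, because a reversible transformation acting as $W_i^{-1}$ on the target conditioned on the control being $|i)$ is precisely an instance of the reversible controlled transformation whose existence is the content of the theorem. Nor does ``the freedom in the choice of $C$'' rescue this: strong symmetry only asserts that \emph{some} reversible map matches the two tuples, and the stabiliser of the product basis (the phase group of the composite system) need not contain an element that you can exhibit to cancel the $W_i$ --- in a generic theory you have no handle on which representative strong symmetry hands you.

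The paper sidesteps this entirely by choosing the tuples differently. It applies strong symmetry to the $n$-tuples $\{|i)\otimes|\psi)\}$ and $\{|i)\otimes(T_i\otimes\mathbb{I})|\psi)\}$, where $|\psi)$ is a purification of the maximally mixed state of the target, living on target $\otimes$ reference (a \emph{dynamically faithful} state). These are still pure and perfectly distinguishable (via the control alone), so strong symmetry applies; but because $|\psi)$ steers to every target state, a reversible map is completely determined by its action on this single entangled state. After showing, via causality and the transformation-dilation statement of Eq.~(\ref{Dilation}), that the resulting map factors as $T'\otimes\mathbb{I}$ on control $\otimes$ target, dynamic faithfulness forces $T'(|i)\otimes|\sigma))=|i)\otimes T_i|\sigma)$ for all $|\sigma)$, with no residual freedom left to correct. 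If you want to salvage your route, replace the product basis by this entangled steering state; as written, the proof is incomplete at its self-identified crux.
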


Moreover, the following theorem states that any controlled transformation in such theories `preserves superpositions'. Where `superposition' is meant in the sense of Sec.~(\ref{int}) part (ii) and `preserves superposition' means that the probability of detecting the system in each path of the superposition is preserved by the transformation. See Appendix~(\ref{SuperpositionProof}) for the proof.
\begin{lemma} Superpositions are preserved on the control input:
\begin{equation} \label{SupPres}
\begin{tikzpicture}
	\begin{pgfonlayer}{nodelayer}
		\node [style=none] (0) at (1, 1.5) {};
		\node [style=none] (1) at (2.5, 1.5) {};
		\node [style=none] (2) at (1, -0.75) {};
		\node [style=none] (3) at (2.5, -0.75) {};
		\node [style=none] (4) at (1, 1) {};
		\node [style=none] (5) at (2.5, 1) {};
		\node [style=none] (6) at (1, -0.25) {};
		\node [style=none] (7) at (2.5, -0.25) {};
		\node [style=cocpoint] (8) at (3.5, 1) {$i$};
		\node [style=none] (9) at (1.75, 1) {$C$};
		\node [style=none] (10) at (5, 0.5) {$=$};
		\node [style=none] (11) at (0, 1) {};
		\node [style=none] (12) at (6.5, 1) {};
		\node [style=cocpoint] (13) at (9.5, 1) {$i$};
		\node [style={small box}] (14) at (8, -0.25) {$T_i$};
		\node [style=cpoint] (15) at (0, -0.25) {$\sigma$};
		\node [style=none] (16) at (3.5, -0.25) {};
		\node [style=cpoint] (17) at (6.5, -0.25) {$\sigma$};
		\node [style=none] (18) at (9.5, -0.25) {};
		\node [style=none] (19) at (1.75, -0.25) {$\{T_i\}$};
		\node [style=none] (20) at (11.5, -0) {$\forall i,\ |\sigma)$};
		\node [style=none] (21) at (12.25, -0) {};
	\end{pgfonlayer}
	\begin{pgfonlayer}{edgelayer}
		\draw (5.center) to (8);
		\draw (0.center) to (2.center);
		\draw (2.center) to (3.center);
		\draw (3.center) to (1.center);
		\draw (1.center) to (0.center);
		\draw (11.center) to (4.center);
		\draw (12.center) to (13);
		\draw (15) to (6.center);
		\draw (7.center) to (16.center);
		\draw (17) to (14);
		\draw (14) to (18.center);
	\end{pgfonlayer}
\end{tikzpicture} \end{equation}
where $\{(i|\}$ is the measurement that perfectly distinguishes the control states $\{|i)\}$ \footnote{In theories satisfying strong symmetry the measurement $\{(i|\}$ is unique up to normalisation, see Appendix~(\ref{Sym}).}.
\end{lemma}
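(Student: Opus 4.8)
The plan is to read both sides of Eq.~(\ref{SupPres}) as linear maps from the composite control$\otimes$target system to the target, and to establish their equality by checking agreement on a spanning family of input states. Since product states $|\psi)\otimes|\tau)$ span the composite, and since both sides are linear in the target input, it suffices to prove, for every pure control state $|\psi)$, the target-to-target identity
\[(i|_{\mathrm{out}}\,C\{T_j\}\,\big(|\psi)\otimes(\cdot)\big)=(i|\psi)\;T_i(\cdot),\]
where $(i|$ is applied to the control output on the left and to the control input on the right.

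First I would dispatch the case $|\psi)=|j)$ for a distinguishing state. Here the defining relation Eq.~(\ref{Control}) gives $C\{T_j\}(|j)\otimes|\tau))=|j)\otimes T_j|\tau)$, so post-composing with $(i|$ on the control and using $(i|j)=\delta_{ij}$ yields $\delta_{ij}\,T_i|\tau)$, which is exactly $(i|j)\,T_i|\tau)$, i.e.\ the right-hand side evaluated at $|\psi)=|j)$. This already shows the two maps agree on every product $|j)\otimes|\tau)$.

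The hard part is the superposition control states. The family $\{|j)\}$ alone does \emph{not} span the control state space -- in the quantum case it spans only the ``diagonal'' of the $n^2$-dimensional state space -- so agreement on $\{|j)\otimes|\tau)\}$ is insufficient to conclude equality, and one must genuinely control the coherences. What must be shown is that the distinguishing measurement applied \emph{after} $C\{T_j\}$ is blind to the coherences of the control input, reproducing exactly the which-path weight $(i|\psi)$ that the same measurement extracts from the input. I expect this to be the crux, and I would resolve it using the explicit reversible construction of $C\{T_j\}$ furnished by Theorem~(\ref{Reversible-Control}), whose action on superposition control states is pinned down by purification, together with the uniqueness (up to normalisation) of the distinguishing measurement $\{(i|\}$ guaranteed by strong symmetry. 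Concretely, since $C\{T_j\}$ is reversible and acts as the identity on the control in the basis $\{|j)\}$, it cannot transport weight between paths; the uniqueness of $\{(i|\}$ then forces $(i|_{\mathrm{out}}C\{T_j\}$ to factor through the input which-path effect, annihilating the off-diagonal contributions and leaving $(i|\psi)\,T_i$.

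Finally, I would assemble the pieces: agreement on all product states $|\psi)\otimes|\tau)$ (distinguishable and superposed control states, arbitrary target states), combined with bilinearity and the fact that product states span the composite, gives equality of the two maps in Eq.~(\ref{SupPres}). The main obstacle, and the step most dependent on the full strength of causality, purification and strong symmetry rather than on the bare definition of $C\{T_j\}$, is the coherence-annihilation argument for superposition control states.
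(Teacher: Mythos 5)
You correctly isolate the crux -- that agreement on the distinguishable control states $\{|j)\}$ is far from sufficient and that the coherences must be controlled -- but the two steps you lean on to close the gap do not hold up. First, your reduction to product states assumes that states of the form $|\psi)\otimes|\tau)$ span the composite state space, i.e.\ local tomography. That is not among the paper's assumptions, and it fails for real-vector-space quantum theory, which the paper explicitly lists as a model of causality, purification and strong symmetry. Worse, even with local tomography, equality of processes in this framework means equality in all closed circuits, including those where the control enters entangled with an ancilla; checking local control inputs does not settle this. The paper sidesteps both issues by verifying the identity on a single \emph{dynamically faithful} state (the purification of the maximally mixed state, Appendix A) and then invoking the property that processes agreeing on that state agree everywhere. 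Second, your coherence-annihilation step is asserted rather than proved: the claim that a reversible transformation fixing each $|j)$ ``cannot transport weight between paths'' is not a valid inference -- phase transformations fix every $|j)$ yet act nontrivially on superpositions, and $C\{T_j\}$ does not even fix the control marginal on superposition inputs since it correlates control with target. What uniqueness of the distinguishing measurement actually gives you is only the \emph{traced-out} statement: composing $(i|_{\mathrm{out}}\,C$ with the deterministic effect on the target yields an effect $(f_i|$ on the control with $(f_i|j)=\delta_{ij}$, hence $(f_i|=(i|$.

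The missing mechanism is the upgrade from this marginal statement to the full identity with the target output left open. The paper does this (Lemma~\ref{SP}, Appendix B) by applying the purification/dilation theorem for transformations, Eq.~(\ref{Dilation}): two transformations with equal marginals differ by a reversible transformation on the discarded system, so $(i|_{\mathrm{out}}\,C$ must factor as $(i|$ on the input followed by \emph{some} reversible $T'_i$ on the target; evaluating on the control state $|i)$ and using $(i|i)=1$ then pins down $T'_i=T_i$. Your proposal gestures at purification ``pinning down'' the action on superpositions but never performs this dilation step, and without it -- and without dynamical faithfulness to handle arbitrary (including entangled) control inputs -- the argument does not go through.
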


Every controlled transformation in quantum theory has a \emph{phase kick-back} mechanism \cite{Nielsen}. Such mechanisms form a vital component of most quantum algorithms. We now show the existence of a \emph{generalised} phase kick-back mechanism in any theory satisfying our assumptions.
\begin{lemma} \label{Generalised-Kick-Back}
Given an $|s)$ such that $T_i|s)=|s), \ \forall{i}$, there exists a reversible transformation $Q_s$ such that
\begin{equation}\label{KB}
\begin{tikzpicture}
	\begin{pgfonlayer}{nodelayer}
		\node [style=cpoint] (0) at (0, -0.25) {$s$};
		\node [style=none] (1) at (1.5, 1) {$C$};
		\node [style=cpoint] (2) at (6, 1) {$\sigma$};
		\node [style=none] (3) at (2.25, 1.5) {};
		\node [style=none] (4) at (9, -0.25) {};
		\node [style={small box}] (5) at (7.5, 1) {$Q_s$};
		\node [style=none] (6) at (0.75, -0.75) {};
		\node [style=none] (7) at (10.5, -0) {$\forall |\sigma)$};
		\node [style=none] (8) at (0.75, 1.5) {};
		\node [style=none] (9) at (0.75, -0.25) {};
		\node [style=none] (10) at (1.5, -0.25) {$\{T_i\}$};
		\node [style=none] (11) at (4.5, 0.5) {$=$};
		\node [style=none] (12) at (9, 1) {};
		\node [style=none] (13) at (3, 1) {};
		\node [style=none] (14) at (2.25, -0.75) {};
		\node [style=none] (15) at (0.75, 1) {};
		\node [style=cpoint] (16) at (6, -0.25) {$s$};
		\node [style=none] (17) at (2.25, -0.25) {};
		\node [style=none] (18) at (3, -0.25) {};
		\node [style=none] (19) at (2.25, 1) {};
		\node [style=cpoint] (20) at (0, 1) {$\sigma$};
		\node [style=none] (21) at (11, -0) {};
	\end{pgfonlayer}
	\begin{pgfonlayer}{edgelayer}
		\draw (20) to (15.center);
		\draw (8.center) to (6.center);
		\draw (6.center) to (14.center);
		\draw (14.center) to (3.center);
		\draw (3.center) to (8.center);
		\draw (0) to (9.center);
		\draw (17.center) to (18.center);
		\draw (19.center) to (13.center);
		\draw (2) to (5);
		\draw (16) to (4.center);
		\draw (5) to (12.center);
	\end{pgfonlayer}
\end{tikzpicture} \end{equation}
Moreover, $Q_s$ is phase transformation:
\[\begin{tikzpicture}
	\begin{pgfonlayer}{nodelayer}
		\node [style={small box}] (0) at (1, -0) {$Q_s$};
		\node [style=none] (1) at (5.5, -0) {};
		\node [style=none] (2) at (0, -0) {};
		\node [style=none] (3) at (8, -0) {$\forall i$};
		\node [style=none] (4) at (4, -0) {$=$};
		\node [style=cocpoint] (5) at (2.5, -0) {$i$};
		\node [style=cocpoint] (6) at (6.5, -0) {$i$};
		\node [style=none] (7) at (8.5, -0) {};
	\end{pgfonlayer}
	\begin{pgfonlayer}{edgelayer}
		\draw (5) to (0);
		\draw (0) to (2.center);
		\draw (6) to (1.center);
	\end{pgfonlayer}
\end{tikzpicture} \]
\end{lemma}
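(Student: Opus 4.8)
The plan is to construct $Q_s$ as the \emph{control-marginal} of the controlled transformation with the target prepared in $|s)$, and then to extract the two claims---the kick-back identity Eq.~(\ref{KB}) and the phase property---from the superposition-preservation property Eq.~(\ref{SupPres}) together with purification. Throughout I take $|s)$ to be pure, which is the relevant case for a phase kick-back (one runs it on a ``phase eigenstate''), and the only other input is the invariance $T_i|s)=|s)$. Choosing $C\{T_i\}$ reversible via Thm.~(\ref{Reversible-Control}), I would first feed $|s)$ into the target and read off Eq.~(\ref{SupPres}): since $(i|\,C\{T_i\}(\,\cdot\,\otimes|s)) = (i|(\cdot)\otimes T_i|s) = (i|(\cdot)\otimes|s)$, each outcome of the distinguishing control measurement $\{(i|\}$ leaves the target in a state proportional to $|s)$. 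Summing over the complete measurement, $\sum_i (i| = \detEff$, and using normalisation of $|\sigma)$, the target marginal of $C\{T_i\}(|\sigma)\otimes|s))$ equals $|s)$ for \emph{every} $|\sigma)$.

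This is where purity does the work: because a pure marginal cannot be correlated with the remainder of a state (a standard consequence of the purification principle), the output must factorise as $C\{T_i\}(|\sigma)\otimes|s)) = |\tau_\sigma)\otimes|s)$. Setting $Q_s|\sigma):=|\tau_\sigma)$, i.e. defining $Q_s$ as the control-marginal map $|\sigma)\mapsto \detEff$ applied to the target of $C\{T_i\}(|\sigma)\otimes|s))$, gives a bona fide linear transformation obeying Eq.~(\ref{KB}). The phase property is then immediate: comparing $(i|Q_s|\sigma)\,|s) = (i|\,C\{T_i\}(|\sigma)\otimes|s)) = (i|\sigma)\,|s)$ and cancelling the nonzero $|s)$ yields $(i|Q_s=(i|$ for all $i$, as required.

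The remaining and genuinely substantive step is that $Q_s$ is \emph{reversible}; injectivity alone (inherited from reversibility of $C\{T_i\}$) is not enough in a general theory. My plan is to show that $C\{T_i\}^{-1}$ \emph{inherits} a superposition-preservation property with the inverse gates $\{T_i^{-1}\}$. Writing Eq.~(\ref{SupPres}) as the map identity $((i|\otimes\mathbb{I})\,C\{T_i\} = (i|\otimes T_i$, post-composing with $T_i^{-1}$ on the target gives $((i|\otimes T_i^{-1})\,C\{T_i\} = (i|\otimes\mathbb{I})$, and right-composing with $C\{T_i\}^{-1}$ yields $((i|\otimes\mathbb{I})\,C\{T_i\}^{-1} = (i|\otimes T_i^{-1})$. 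Since $T_i^{-1}|s)=|s)$ as well, the first two paragraphs now apply verbatim to $C\{T_i\}^{-1}$ and produce a transformation $Q_s'$ with $C\{T_i\}^{-1}(|\sigma)\otimes|s)) = Q_s'|\sigma)\otimes|s)$. Composing the two kick-backs, $C\{T_i\}^{-1}C\{T_i\}(|\sigma)\otimes|s)) = |\sigma)\otimes|s)$ forces $Q_s'Q_s|\sigma)\otimes|s) = |\sigma)\otimes|s)$; discarding the target using $\detEff s)=1$ gives $Q_s'Q_s=\mathbb{I}$, and the mirror computation gives $Q_sQ_s'=\mathbb{I}$, so $Q_s$ is reversible with inverse $Q_s'$. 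I expect the main obstacle to be the factorisation step of the second paragraph---justifying that a pure target marginal forbids all correlations, which is exactly why $|s)$ must be taken pure and where purification is essential; the transfer of Eq.~(\ref{SupPres}) from $C\{T_i\}$ to $C\{T_i\}^{-1}$ is then routine bookkeeping, provided it is carried out at the level of the maps rather than on individual states.
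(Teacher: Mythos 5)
Your argument is sound and reaches the right conclusions, but it takes a genuinely different route from the paper's, and two of your steps deserve comment. The paper obtains existence \emph{and} reversibility of $Q_s$ in a single stroke from Eq.~(\ref{Dilation}): it first shows, exactly as you do, that discarding the control output of $C\{T_i\}$ with $|s)$ on the target yields the constant map $|s)\detEff$ (causality gives $\detEff=\sum_i(i|$, then Eq.~(\ref{SupPres}) and $T_i|s)=|s)$), and then invokes the purification--uniqueness statement for transformations to produce a \emph{reversible} $G=Q_s$ on the discarded system satisfying Eq.~(\ref{KB}); the phase property follows by applying $(i|\otimes\detEff$ to both sides of Eq.~(\ref{KB}) and comparing with Eq.~(\ref{SupPres}), essentially your final step. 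Your replacement of Eq.~(\ref{Dilation}) by ``pure marginal $\Rightarrow$ product'' plus an explicitly constructed inverse is more elementary in spirit, but it costs you two things. First, it forces $|s)$ to be pure, whereas the lemma as stated (and the paper's route) imposes no such restriction --- harmless for Thm.~(\ref{ALL}), where the $|s_i)$ are pure, but a genuine narrowing for the particle-exchange application. Second, your derivation of superposition preservation for $C\{T_i\}^{-1}$ right-composes the map identity $((i|\otimes\mathbb{I})\,C\{T_i\}=(i|\otimes T_i$ with $C\{T_i\}^{-1}$ and then evaluates on product inputs; since $C\{T_i\}^{-1}(|\sigma)\otimes|s))$ need not be a product state, you are implicitly using Eq.~(\ref{SupPres}) on entangled inputs. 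As stated, Eq.~(\ref{SupPres}) is only quantified over product inputs, and without local tomography the promotion to a full map identity is not automatic --- it is supplied by the dynamically-faithful-state version proved in Lem.~\ref{SP} of the appendix, which you should cite if you keep this route. Alternatively you can sidestep the issue entirely: applying $C\{T_i\}^{-1}$ to the defining Eq.~(\ref{Control}) shows directly that $C\{T_i\}^{-1}=C\{T_i^{-1}\}$ is itself a controlled transformation, to which your first two paragraphs apply verbatim since $T_i^{-1}|s)=|s)$. With those two repairs your proof is a valid, self-contained alternative; the paper's appeal to Eq.~(\ref{Dilation}) is shorter and more general, while yours makes the origin of reversibility more transparent.
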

\proof
\[\begin{tikzpicture}
	\begin{pgfonlayer}{nodelayer}
		\node [style=none] (0) at (9.25, -0.75) {};
		\node [style=none] (1) at (9.25, 1) {};
		\node [style=none] (2) at (7.75, 1) {};
		\node [style=none] (3) at (4.5, 0.5) {$=$};
		\node [style=none] (4) at (2.25, -0.25) {};
		\node [style=trace] (5) at (14.75, 1) {};
		\node [style=none] (6) at (7.75, -0.25) {};
		\node [style=none] (7) at (7.75, 1.5) {};
		\node [style=cpoint] (8) at (7, -0.25) {$s$};
		\node [style=none] (9) at (1.5, 1) {$C$};
		\node [style=none] (10) at (0, 1) {};
		\node [style=none] (11) at (12.75, 1) {};
		\node [style=none] (12) at (0.75, 1.5) {};
		\node [style=none] (13) at (10, -0.25) {};
		\node [style=none] (14) at (7, 1) {};
		\node [style=none] (15) at (7.75, -0.75) {};
		\node [style=trace] (16) at (3, 1) {};
		\node [style=none] (17) at (2.25, -0.75) {};
		\node [style=cocpoint] (18) at (10, 1) {$i$};
		\node [style=none] (19) at (9.25, -0.25) {};
		\node [style=cpoint] (20) at (0, -0.25) {$s$};
		\node [style=none] (21) at (0.75, -0.75) {};
		\node [style=none] (22) at (11.5, 0.5) {$=$};
		\node [style=none] (23) at (9.25, 1.5) {};
		\node [style=none] (24) at (3, -0.25) {};
		\node [style=none] (25) at (2.25, 1.5) {};
		\node [style=none] (26) at (0.75, 1) {};
		\node [style=none] (27) at (14.75, -0) {};
		\node [style=none] (28) at (5.75, 0.5) {$\sum_i$};
		\node [style=none] (29) at (8.5, -0.25) {$\{T_i\}$};
		\node [style=none] (30) at (8.5, 1) {$C$};
		\node [style=none] (31) at (1.5, -0.25) {$\{T_i\}$};
		\node [style=none] (32) at (2.25, 1) {};
		\node [style=none] (33) at (0.75, -0.25) {};
		\node [style=cpoint] (34) at (12.75, -0) {$s$};
		\node [style=none] (35) at (15.25, -0) {};
	\end{pgfonlayer}
	\begin{pgfonlayer}{edgelayer}
		\draw (10.center) to (26.center);
		\draw (12.center) to (21.center);
		\draw (21.center) to (17.center);
		\draw (17.center) to (25.center);
		\draw (25.center) to (12.center);
		\draw (20) to (33.center);
		\draw (4.center) to (24.center);
		\draw (32.center) to (16);
		\draw (14.center) to (2.center);
		\draw (7.center) to (15.center);
		\draw (15.center) to (0.center);
		\draw (0.center) to (23.center);
		\draw (23.center) to (7.center);
		\draw (8) to (6.center);
		\draw (19.center) to (13.center);
		\draw (1.center) to (18);
		\draw (11.center) to (5);
		\draw (34) to (27.center);
	\end{pgfonlayer}
\end{tikzpicture} \]
The first equality follows from causality and the second from Eq.~(\ref{SupPres}) and the definition of $|s)$. Eq.~(\ref{Dilation}) then implies the existence of a reversible $Q_s$ such that:
\[\begin{tikzpicture}
	\begin{pgfonlayer}{nodelayer}
		\node [style=cpoint] (0) at (0, -0.25) {$s$};
		\node [style=none] (1) at (0.75, 1) {};
		\node [style=none] (2) at (8.75, -0.25) {};
		\node [style=cpoint] (3) at (0, 1) {$\sigma$};
		\node [style=none] (4) at (2.25, -0.25) {};
		\node [style=cpoint] (5) at (6.25, 1) {$\sigma$};
		\node [style=none] (6) at (0.75, 1.5) {};
		\node [style=cpoint] (7) at (6.25, -0.25) {$s$};
		\node [style=none] (8) at (1.5, -0.25) {$\{T_i\}$};
		\node [style=none] (9) at (3, 1) {};
		\node [style=none] (10) at (0.75, -0.75) {};
		\node [style=none] (11) at (8.75, 1) {};
		\node [style=none] (12) at (2.25, -0.75) {};
		\node [style=none] (13) at (1.5, 1) {$C$};
		\node [style=none] (14) at (3, -0.25) {};
		\node [style=none] (15) at (2.25, 1) {};
		\node [style={small box}] (16) at (7.5, 1) {$Q_s$};
		\node [style=none] (17) at (0.75, -0.25) {};
		\node [style=none] (18) at (10.5, -0) {$\forall |\sigma)$};
		\node [style=none] (19) at (4.5, 0.5) {$=$};
		\node [style=none] (20) at (2.25, 1.5) {};
		\node [style=none] (21) at (11.25, -0) {};
	\end{pgfonlayer}
	\begin{pgfonlayer}{edgelayer}
		\draw (3) to (1.center);
		\draw (6.center) to (10.center);
		\draw (10.center) to (12.center);
		\draw (12.center) to (20.center);
		\draw (20.center) to (6.center);
		\draw (0) to (17.center);
		\draw (4.center) to (14.center);
		\draw (15.center) to (9.center);
		\draw (5) to (16);
		\draw (7) to (2.center);
		\draw (16) to (11.center);
	\end{pgfonlayer}
\end{tikzpicture} \]
Note that $Q_s$ depends on both the controlled transformation and the joint eigenstate $|s)$. Note that:
\[\begin{tikzpicture}
	\begin{pgfonlayer}{nodelayer}
		\node [style=none] (0) at (0.75, 2.75) {};
		\node [style={small box}] (1) at (8.25, -0.75) {$T_i$};
		\node [style=none] (2) at (1.5, 3.25) {$\{T_i\}$};
		\node [style=trace] (3) at (9.5, -0.75) {};
		\node [style=none] (4) at (1.5, 4.5) {$C$};
		\node [style=none] (5) at (4.5, 1.75) {$\diageq$};
		\node [style=cpoint] (6) at (6.75, 3) {$s$};
		\node [style=none] (7) at (2.25, 4.5) {};
		\node [style=none] (8) at (0.75, 5) {};
		\node [style=cpoint] (9) at (6.75, -0.75) {$s$};
		\node [style=cpoint] (10) at (0, 4.5) {$\sigma$};
		\node [style=trace] (11) at (9.5, 3) {};
		\node [style=cocpoint] (12) at (9.5, 0.75) {$i$};
		\node [style=none] (13) at (0.75, 4.5) {};
		\node [style=none] (14) at (4.5, 4) {$=$};
		\node [style=none] (15) at (2.25, 3.25) {};
		\node [style=cpoint] (16) at (6.75, 0.75) {$\sigma$};
		\node [style=cocpoint] (17) at (3, 4.5) {$i$};
		\node [style=trace] (18) at (3, 3.25) {};
		\node [style=cpoint] (19) at (0, 3.25) {$s$};
		\node [style=none] (20) at (0.75, 3.25) {};
		\node [style=none] (21) at (2.25, 5) {};
		\node [style=cocpoint] (22) at (9.5, 4.25) {$i$};
		\node [style=cpoint] (23) at (6.75, 4.25) {$\sigma$};
		\node [style={small box}] (24) at (8, 4.25) {$Q_s$};
		\node [style=none] (25) at (2.25, 2.75) {};
		\node [style=none] (26) at (10.5, -0) {};
		\node [style=none] (27) at (11, -0) {$\forall \sigma$};
	\end{pgfonlayer}
	\begin{pgfonlayer}{edgelayer}
		\draw (10) to (13.center);
		\draw (8.center) to (0.center);
		\draw (0.center) to (25.center);
		\draw (25.center) to (21.center);
		\draw (21.center) to (8.center);
		\draw (19) to (20.center);
		\draw (15.center) to (18);
		\draw (7.center) to (17);
		\draw (23) to (24);
		\draw (6) to (11);
		\draw (24) to (22);
		\draw (9) to (1);
		\draw (16) to (12);
		\draw (1) to (3);
	\end{pgfonlayer}
\end{tikzpicture}\]
Causality -- via state normalisation -- then gives:
\[\begin{tikzpicture}
	\begin{pgfonlayer}{nodelayer}
		\node [style=none] (0) at (5.5, -0) {};
		\node [style=cocpoint] (1) at (6.75, -0) {$i$};
		\node [style=none] (2) at (0, -0) {};
		\node [style={small box}] (3) at (1.25, -0) {$Q_s$};
		\node [style=none] (4) at (8.5, -0) {$\forall i$};
		\node [style=cocpoint] (5) at (2.5, -0) {$i$};
		\node [style=none] (6) at (4, -0) {$=$};
		\node [style=none] (7) at (9.25, -0) {};
	\end{pgfonlayer}
	\begin{pgfonlayer}{edgelayer}
		\draw (2.center) to (3);
		\draw (3) to (5);
		\draw (0.center) to (1);
	\end{pgfonlayer}
\end{tikzpicture} \]
\endproof

In quantum theory, it is possible to achieve any phase transformation via a kick-back mechanism. However, Thm.~(\ref{Generalised-Kick-Back}) only implies the existence of at least one phase that can be `kicked-back'. We now show that all phases arise via the generalised mechanism. Consider the set of pure and perfectly distinguishable states $\{|s_i)\}$ and let $\{T_i\}$ be elements of their phase group, i.e. $T_i|s_j)=|s_j),$ $\forall i,j$. Construct the controlled transformation $C\{T_i\}$. The designation of control and target for $C\{T_i\}$ is symmetric:
\[\begin{tikzpicture}
	\begin{pgfonlayer}{nodelayer}
		\node [style=none] (0) at (0.75, 1.5) {};
		\node [style=none] (1) at (0.75, -0.25) {};
		\node [style=none] (2) at (0.75, -0.75) {};
		\node [style=none] (3) at (2.25, -0.75) {};
		\node [style=none] (4) at (2.25, -0.25) {};
		\node [style=none] (5) at (2.25, 1.5) {};
		\node [style=none] (6) at (2.25, 2) {};
		\node [style=none] (7) at (0.75, 2) {};
		\node [style=cpoint] (8) at (0, 1.5) {$\sigma$};
		\node [style=cpoint] (9) at (0, -0.25) {$s_i$};
		\node [style=none] (10) at (3, -0.25) {};
		\node [style=none] (11) at (3, 1.5) {};
		\node [style=none] (12) at (1.5, 1.5) {$C$};
		\node [style=none] (13) at (4.25, 0.75) {$=$};
		\node [style={small box}] (14) at (7, 1.5) {$Q_i$};
		\node [style=none] (15) at (8.25, -0.25) {};
		\node [style=cpoint] (16) at (5.75, -0.25) {$s_i$};
		\node [style=cpoint] (17) at (5.75, 1.5) {$\sigma$};
		\node [style=none] (18) at (8.25, 1.5) {};
		\node [style=none] (19) at (9.5, 0.75) {$:=$};
		\node [style=none] (20) at (1.5, -0.25) {$\{T_i\}$};
		\node [style=none] (21) at (12, -0.75) {};
		\node [style=none] (22) at (13.5, -0.75) {};
		\node [style=cpoint] (23) at (11.25, 1.5) {$\sigma$};
		\node [style=none] (24) at (14.25, -0.25) {};
		\node [style=none] (25) at (14.25, 1.5) {};
		\node [style=none] (26) at (12, 1.5) {};
		\node [style=none] (27) at (13.5, 2) {};
		\node [style=none] (28) at (12.75, -0.25) {$C$};
		\node [style=none] (29) at (13.5, 1.5) {};
		\node [style=cpoint] (30) at (11.25, -0.25) {$s_i$};
		\node [style=none] (31) at (13.5, -0.25) {};
		\node [style=none] (32) at (12.75, 1.5) {$\{Q_i\}$};
		\node [style=none] (33) at (12, 2) {};
		\node [style=none] (34) at (12, -0.25) {};
		\node [style=none] (35) at (15.75, -0) {$\forall |\sigma)$};
		\node [style=none] (36) at (16.5, -0) {};
	\end{pgfonlayer}
	\begin{pgfonlayer}{edgelayer}
		\draw (8) to (0.center);
		\draw (7.center) to (2.center);
		\draw (2.center) to (3.center);
		\draw (3.center) to (6.center);
		\draw (6.center) to (7.center);
		\draw (9) to (1.center);
		\draw (4.center) to (10.center);
		\draw (5.center) to (11.center);
		\draw (17) to (14);
		\draw (14) to (18.center);
		\draw (16) to (15.center);
		\draw (23) to (26.center);
		\draw (33.center) to (21.center);
		\draw (21.center) to (22.center);
		\draw (22.center) to (27.center);
		\draw (27.center) to (33.center);
		\draw (30) to (34.center);
		\draw (31.center) to (24.center);
		\draw (29.center) to (25.center);
	\end{pgfonlayer}
\end{tikzpicture} \]
Thus any $C\{T_i\}$ is control-target symmetric if the $T_i$ are elements of a phase group. The transformations on the target are given by the kicked-back phases, $\{Q_i\}$. Given an arbitrary $W_i$, construct the transformation $\{W_i\}C$ and note that via control-target symmetry it is equivalent to $C\{G_i\}$, for some $\{G_i\}$. The controlled transformation $C\{G_i\}$ thus gives rise to the kicked-back phase $W_i$ and we have:
\begin{theorem} \label{ALL}
Every phase transformation can arise via a generalised phase kick-back mechanism
\end{theorem}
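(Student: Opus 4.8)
The plan is to realise an arbitrary phase transformation $W$ as a phase that is kicked back by some controlled transformation, exploiting the control-target symmetry exhibited just above the statement. Recall that when the controlled operations $\{T_i\}$ all fix the control states, $T_i|s_j)=|s_j)$ for all $i,j$, Lemma~(\ref{Generalised-Kick-Back}) applies with each target eigenstate $|s_i)$ and yields a phase $Q_i$ on the control; feeding $|s_i)$ into the target of $C\{T_i\}$ therefore applies $Q_i$ to the control conditioned on $i$, which is exactly the action of $C\{Q_i\}$ with the control and target wires interchanged.

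First I would promote this from the distinguished inputs to a genuine equality of reversible transformations. Lemma~(\ref{Generalised-Kick-Back}) pins down the action of $C\{T_i\}$ on every input $|\sigma)\otimes|s_i)$, and strong symmetry ensures that a maximal family of perfectly distinguishable states $\{|s_i)\}$ spans the target state space while $|\sigma)$ ranges over all control states; hence these product inputs span the joint state space and, by linearity, the two reversible maps agree everywhere. This spanning argument, together with the need to verify that the kicked-back operations really do reassemble into a controlled transformation on the swapped wires rather than a disconnected collection of phases, is the step I expect to be the main obstacle, since only the linear structure assumed for states and transformations may be invoked.

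The structural payoff is that the assignment $\{T_i\}\mapsto\{Q_i\}$ is an \emph{involution}. Each $Q_i$ is again a phase by Lemma~(\ref{Generalised-Kick-Back}), so $C\{Q_i\}$ has phase-group controlled operations and the same symmetry applies to it; interchanging the two wires a second time shows that feeding the target eigenstate $|s_j)$ into $C\{Q_i\}$ kicks back precisely $T_j$. Thus kicking back twice returns the original set of phases, and this is the key observation that powers the theorem.

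The result now follows directly. Given an arbitrary phase $W$, I would take $W_i=W$ for every control index, invoke Thm.~(\ref{Reversible-Control}) to construct the reversible controlled transformation $C\{W_i\}$, and apply the kick-back to obtain its kicked-back phases $\{G_i\}$, which are themselves phases by Lemma~(\ref{Generalised-Kick-Back}). By the involution, $C\{G_i\}$ kicks back exactly the $\{W_i\}$, so feeding any eigenstate $|s_i)$ into its target produces $W$ on the control. Hence every phase transformation arises via the generalised phase kick-back mechanism.
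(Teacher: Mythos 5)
Your overall strategy is the paper's: exploit the control--target symmetry of $C\{T_i\}$ when the controlled operations are phases, and realise an arbitrary phase $W$ by constructing the controlled transformation that applies $W$ conditioned on the other wire and then reading that same transformation in the swapped orientation. The paper phrases this as ``given arbitrary $W_i$, construct $\{W_i\}C$ and note via control--target symmetry that it is equivalent to $C\{G_i\}$, whose kicked-back phase is $W_i$''; your ``involution'' is the same observation, since the kick-back of the kick-back is nothing more than the defining equation of the original controlled transformation read off a second time.

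However, the step you single out as the main obstacle is handled incorrectly. A maximal family of pure, perfectly distinguishable states does \emph{not} span the state space in a general theory satisfying the three principles --- in quantum theory the projectors $\ket{i}\bra{i}$ span only the diagonal matrices --- so the inputs $|\sigma)\otimes|s_i)$ do not span the joint state space, and agreement on them does not force the two reversible maps to agree everywhere. Fortunately no such promotion is needed: a controlled transformation is \emph{defined} only by its action on control-state inputs, Eq.~(\ref{Control}), and the paper's footnote makes explicit that it need not be unique. The control--target symmetry is therefore only the statement that $C\{T_i\}$ \emph{satisfies} the swapped defining equation with the kicked-back $Q_i$ as controlled operations, which Lemma~(\ref{Generalised-Kick-Back}) supplies directly. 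This non-uniqueness also means your involution must be read as applying to the \emph{same} underlying transformation viewed with the wires interchanged; a freshly constructed $C\{Q_i\}$ obtained from Thm.~(\ref{Reversible-Control}) could kick back something other than $T_j$. With those corrections (and after tidying the wire bookkeeping in your final paragraph --- the eigenstates $|s_i)$ sit on the control wire of $\{G_i\}C$, not its target) your argument reduces to the paper's.
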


\subsection{Particle exchange experiments} \label{Exchange}

Dahlsten et al. \cite{Oscar} have shown that there is a close connection between particle exchange statistics and the phase group in operational theories. We use the framework and results presented in this paper to expand upon and formalise these connections. Motivated by the quantum case, place a pair of indistinguishable particles in superposition by inputting them to an interferometer, as shown in the following diagram. \[\begin{tikzpicture}
	\begin{pgfonlayer}{nodelayer}
		\node [style=none] (0) at (0, 1.25) {};
		\node [style=di, fill=white] (1) at (0.9999999, 2.25) {};
		\node [style=none] (2) at (7, 2.25) {};
		\node [style=di, fill=white] (3) at (6, 1.25) { };
		\node [style=none] (4) at (2.25, 4.25) {};
		\node [style=none] (5) at (3.75, 4.25) {};
		\node [style=none] (6) at (3.25, -0.7500001) {};
		\node [style=none] (7) at (4.75, -0.7500001) {};
		\node [style=mopoint, rotate=135, xscale=0.75, yscale=1] (8) at (7.25, 2.5) {};
		\node [style=none] (9) at (8.5, 2.5) {};
		\node [style=none] (10) at (-0.7499999, 1.25) {$\bullet$};
		\node [style=none] (11) at (-0.2499999, 0.7499999) {$\bullet$};
		\node [style=none] (12) at (-1, 1.5) {};
		\node [style=none] (13) at (0, 0.5000001) {};
		\node [style=none] (14) at (3, 4.25) {};
		\node [style=none] (15) at (3.75, 3.5) {};
		\node [style=none] (16) at (5.25, 2) {};
		\node [style=none] (17) at (4.25, 3.5) {$\bullet$};
		\node [style=none] (18) at (3.75, 3) {$\bullet$};
		\node [style=none] (19) at (4.5, 3.75) {};
		\node [style=none] (20) at (3.5, 2.75) {};
		\node [style=none] (21) at (4.5, 1.75) {};
		\node [style=none] (22) at (4.75, 2) {$\bullet$};
		\node [style=none] (23) at (5.25, 2.5) {$\bullet$};
		\node [style=none] (24) at (5.5, 2.75) {};
		\node [style=none] (25) at (4, 2.75) {};
		\node [style=none] (26) at (4.5, 2.25) {};
		\node [style=none] (27) at (5, 2.75) {};
		\node [style=none] (28) at (4.5, 3.25) {};
		\node [style=none] (29) at (1.5, 0.7500001) {};
		\node [style=none] (30) at (2.75, -0) {$\bullet$};
		\node [style=none] (31) at (1.75, 1) {$\bullet$};
		\node [style=none] (32) at (3, 0.7500001) {};
		\node [style=none] (33) at (2.5, 0.2500001) {};
		\node [style=none] (34) at (3.25, 0.4999999) {$\bullet$};
		\node [style=none] (35) at (2.5, -0.2500001) {};
		\node [style=none] (36) at (2.5, 1.75) {};
		\node [style=none] (37) at (3.5, 0.7500001) {};
		\node [style=none] (38) at (2.25, 1.5) {$\bullet$};
		\node [style=none] (39) at (4, -0.7500001) {};
		\node [style=none] (40) at (1.75, 1.5) {};
		\node [style=none] (41) at (3.25, -0) {};
		\node [style=none] (42) at (2, 0.7500001) {};
		\node [style=none] (43) at (2.5, 1.25) {};
		\node [style=none] (44) at (2.25, 2) {};
		\node [style=none] (45) at (1.25, 1) {};
		\node [style=none] (46) at (2.75, -0.4999999) {};
		\node [style=none] (47) at (3.75, 0.4999999) {};
		\node [style=none] (48) at (4.25, 4) {};
		\node [style=none] (49) at (3.25, 3) {};
		\node [style=none] (50) at (4.75, 1.5) {};
		\node [style=none] (51) at (5.75, 2.5) {};
	\end{pgfonlayer}
	\begin{pgfonlayer}{edgelayer}
		\filldraw[fill=gray!40, draw=white!0] (44.center) -- (47.center) -- (46.center) -- (45.center) -- cycle;
		\filldraw[fill=gray!40, draw =white!0] (49.center) to (50.center) to (51.center) to (48.center) -- cycle;
		\draw [style=none, in=-120, out=30, looseness=1.25] (8) to (9.center);
		\draw [style=mirror] (4.center) to (5.center);
		\draw [style=mirror] (6.center) to (7.center);
		\draw [style={dashed edge}, bend right=90, looseness=0.75] (12.center) to (13.center);
		\draw [style={dashed edge}, bend left=90, looseness=0.75] (12.center) to (13.center);
		\draw [style=particlePath] (0.center) to (1);
		\draw [style=particlePath] (1) to (14.center);
		\draw [style=particlePath] (14.center) to (15.center);
		\draw [style=particlePath] (16.center) to (3);
		\draw [style={dashed edge}, bend right=90, looseness=0.75] (20.center) to (19.center);
		\draw [style={dashed edge}, bend left=90, looseness=0.75] (20.center) to (19.center);
		\draw [style={dashed edge}, bend right=90, looseness=0.75] (21.center) to (24.center);
		\draw [style={dashed edge}, bend left=90, looseness=0.75] (21.center) to (24.center);
		\draw [in=135, out=-60, looseness=0.75] (25.center) to (27.center);
		\draw [in=120, out=-45, looseness=1.00] (28.center) to (26.center);
		\draw [style={dashed edge}, bend right=90, looseness=0.75] (29.center) to (36.center);
		\draw [style={dashed edge}, bend left=90, looseness=0.75] (29.center) to (36.center);
		\draw [style={dashed edge}, bend right=90, looseness=0.75] (35.center) to (37.center);
		\draw [style={dashed edge}, bend left=90, looseness=0.75] (35.center) to (37.center);
		\draw [style=particlePath] (39.center) to (3);
		\draw [style=particlePath] (3) to (2.center);
		\draw [style=particlePath] (1) to (40.center);
		\draw [style=particlePath] (41.center) to (39.center);
		\draw (42.center) to (33.center);
		\draw (43.center) to (32.center);
		\draw (0.center) to (1);
		\draw (1) to (14.center);
		\draw (14.center) to (15.center);
		\draw (16.center) to (3);
		\draw (2.center) to (3);
		\draw (3) to (39.center);
		\draw (39.center) to (41.center);
		\draw (40.center) to (1);
	\end{pgfonlayer}
\end{tikzpicture} \]
On the upper path the two particles are swapped using some operation `$S$', whilst on the lower path they are left invariant -- that is, the identity operation $\mathds{1}$ is applied. The entire physical set-up is described by a bipartite state, one partition of which corresponds to the state of the particles, $|s)\in\Omega_{P'cles}$, and the other to the `which path' information embodied in the interferometer, $|s')\in\Omega_{Path}$. The entire scenario thus takes place in the state space $\Omega_{Path}\otimes \Omega_{P'cle}$. In the quantum case, the phase transformation generated by this procedure corresponds to the type of indistinguishable particle employed in the experiment.

The whole experiment can be described via a controlled transformation, with path information as the control and particle state as the target. Via Thm.~(\ref{Reversible-Control}), such an experiment exists in theories satisfying our assumptions. Applying operation $S$ to the particle state corresponds to swapping a pair of indistinguishable particles and so must leave the statistics of any measurement invariant. Therefore $S|s)=|s)$, where $|s)$ is the initial particle state.
\[\begin{tikzpicture}
	\begin{pgfonlayer}{nodelayer}
		\node [style=none] (0) at (1.5, 1) {};
		\node [style=none] (1) at (2.75, 1) {};
		\node [style=none] (2) at (4.75, 1) {};
		\node [style=none] (3) at (5.5, 1) {};
		\node [style=none] (4) at (2.75, 1.5) {};
		\node [style=none] (5) at (4.75, 1.5) {};
		\node [style=cpoint] (6) at (2, -0.25) {$s$};
		\node [style=none] (7) at (2.75, -0.25) {};
		\node [style=none] (8) at (2.75, -0.75) {};
		\node [style=none] (9) at (4.75, -0.75) {};
		\node [style=none] (10) at (4.75, -0.25) {};
		\node [style=none] (11) at (5.5, -0.25) {};
		\node [style=none] (12) at (3.75, -0.25) {$\{\mathds{1},S\}$};
		\node [style=none] (13) at (3.75, 1) {$C$};
		\node [style=none] (14) at (0.25, 1) {$\Omega_{Path}$};
		\node [style=none] (15) at (0.25, -0.25) {$\Omega_{P'cle}$};
		\node [style=none] (16) at (7, 0.5) {where};
		\node [style=cpoint] (17) at (9, 0.5) {$s$};
		\node [style={small box}] (18) at (10.25, 0.5) {$S$};
		\node [style=none] (19) at (11.25, 0.5) {};
		\node [style=none] (20) at (12.25, 0.5) {$=$};
		\node [style=cpoint] (21) at (13.25, 0.5) {$s$};
		\node [style=none] (22) at (14.25, 0.5) {};
		\node [style=none] (23) at (14.75, 0.5) {};
	\end{pgfonlayer}
	\begin{pgfonlayer}{edgelayer}
		\draw (0.center) to (1.center);
		\draw (6) to (7.center);
		\draw (10.center) to (11.center);
		\draw (2.center) to (3.center);
		\draw (4.center) to (5.center);
		\draw (5.center) to (9.center);
		\draw (9.center) to (8.center);
		\draw (8.center) to (4.center);
		\draw (17) to (18);
		\draw (18) to (19.center);
		\draw (21) to (22.center);
	\end{pgfonlayer}
\end{tikzpicture}  \]
Thm.~(\ref{Generalised-Kick-Back}) tells us that the above diagram corresponds to a kicked-back phase on the control system, as in Eq.~(\ref{KB}).
Thus, to every particle type, there exists a corresponding phase transformation, which was the connection discussed in \cite{Oscar}. But, in an arbitrary theory, the converse is not necessarily true. The quantum phase group is $U(1)$ and, fixing its representation to be $\{e^{i\theta}\}$, bosons kick-back the transformation corresponding to $\theta=0$, fermions  $\theta=\pi$ and anyons any arbitrary $\theta$. Thus, to every particle type in quantum theory there is an associated phase, and vice versa.

To generalise this to theories satisfying our three assumptions, we must connect the operational description of these theories to the more physical notion of particles. Towards this end, we make the following two assumptions:
\begin{enumerate}
\item Every operational state $|s)$ corresponds to the state of some collection of indistinguishable particles,
\item Every transformation that leaves the operational state $|s)$ invariant corresponds to a (possibly trivial) permutation of the collection of indistinguishable particles.
\end{enumerate}
Given the above, Thm.~(\ref{ALL}) tells us that to every phase transformation there exists a corresponding particle type. Therefore, to each higher-order phase -- described in Thm.~(\ref{Higher-Phase}) -- there is associated a particle type that should be observable through a generalisation of the above experiment.

Consider $|s)$, which corresponds to the state of some collection of indistinguishable particles, and a permutation operation $\pi$ which leaves $|s)$ invariant. Note that, for a given permutation, there may be multiple topologically distinct ways of performing it, particularly in two dimensions or topologically non-trivial spaces. Now consider the $n$-path experiment, illustrated below, where on each path some distinct permutation operation $\pi_i,\ i=1,\dots,n,$ takes place.
\[\begin{tikzpicture}
	\begin{pgfonlayer}{nodelayer}
		\node [style=none] (0) at (0, 1.25) {};
		\node [style=di, fill=white] (1) at (0.9999999, 2.25) {};
		\node [style=none] (2) at (11.5, 2.75) {};
		\node [style=di, fill=white] (3) at (11, 2.25) { };
		\node [style=mopoint, rotate=135, xscale=0.75, yscale=1] (4) at (11.75, 3) {};
		\node [style=none] (5) at (13, 3) {};
		\node [style=none] (6) at (-0.7499999, 1.25) {$\bullet$};
		\node [style=none] (7) at (-0.2500001, 1.1) {$\bullet$};
		\node [style=none] (8) at (-1, 1.5) {};
		\node [style=none] (9) at (0, 0.5000001) {};
		\node [style=none] (10) at (4.5, 5) {};
		\node [style=none] (11) at (7.25, 5) {};
		\node [style=none] (12) at (5, 5.5) {$\bullet$};
		\node [style=none] (13) at (5.25, 5) {$\bullet$};
		\node [style=none] (14) at (5, 5.75) {};
		\node [style=none] (15) at (5, 4.25) {};
		\node [style=none] (16) at (6.75, 4.25) {};
		\node [style=none] (17) at (6.75, 4.5) {$\bullet$};
		\node [style=none] (18) at (6.75, 5.5) {$\bullet$};
		\node [style=none] (19) at (6.75, 5.75) {};
		\node [style=none] (20) at (5.5, 4.5) {};
		\node [style=none] (21) at (6.25, 4.5) {};
		\node [style=none] (22) at (6.25, 5.5) {};
		\node [style=none] (23) at (5.5, 5.5) {};
		\node [style=none] (24) at (5, -0.75) {};
		\node [style=none] (25) at (6.75, -0.5000002) {$\bullet$};
		\node [style=none] (26) at (5, -0.5000002) {$\bullet$};
		\node [style=none] (27) at (6.25, 0.5000002) {};
		\node [style=none] (28) at (6.25, -0.5000002) {};
		\node [style=none] (29) at (6.75, 0.5000002) {$\bullet$};
		\node [style=none] (30) at (6.75, -0.75) {};
		\node [style=none] (31) at (5, 0.75) {};
		\node [style=none] (32) at (6.75, 0.75) {};
		\node [style=none] (33) at (5, 0.5000002) {$\bullet$};
		\node [style=none] (34) at (4.5, -0) {};
		\node [style=none] (35) at (5.5, -0.5000002) {};
		\node [style=none] (36) at (5.5, 0.5000002) {};
		\node [style=none] (37) at (4.5, 0.75) {};
		\node [style=none] (38) at (4.5, -0.75) {};
		\node [style=none] (39) at (7.25, -0.75) {};
		\node [style=none] (40) at (7.25, 0.75) {};
		\node [style=none] (41) at (4.5, 5.75) {};
		\node [style=none] (42) at (4.5, 4.25) {};
		\node [style=none] (43) at (7.25, 4.25) {};
		\node [style=none] (44) at (7.25, 5.75) {};
		\node [style=none] (45) at (7.25, -0) {};
		\node [style=none] (46) at (-0.5000001, 0.7) {$\bullet$};
		\node [style=none] (47) at (5, 2.25) {};
		\node [style=none] (48) at (5.5, 2.5) {};
		\node [style=none] (49) at (7.25, 3.75) {};
		\node [style=none] (50) at (6.75, 3.5) {$\bullet$};
		\node [style=none] (51) at (4.5, 3.75) {};
		\node [style=none] (52) at (5, 2.5) {$\bullet$};
		\node [style=none] (53) at (6.25, 2.5) {};
		\node [style=none] (54) at (6.75, 2.25) {};
		\node [style=none] (55) at (4.5, 3) {};
		\node [style=none] (56) at (7.25, 3) {};
		\node [style=none] (57) at (4.5, 2.25) {};
		\node [style=none] (58) at (6.75, 2.5) {$\bullet$};
		\node [style=none] (59) at (5, 3.5) {$\bullet$};
		\node [style=none] (60) at (6.25, 3) {};
		\node [style=none] (61) at (6.75, 3.75) {};
		\node [style=none] (62) at (7.25, 2.25) {};
		\node [style=none] (63) at (5, 3.75) {};
		\node [style=none] (64) at (5.5, 3.5) {};
		\node [style=none] (65) at (4.5, 3) {};
		\node [style=none] (66) at (5, 4.5) {$\bullet$};
		\node [style=none] (67) at (6.5, 5) {$\bullet$};
		\node [style=none] (68) at (4.75, 3) {$\bullet$};
		\node [style=none] (69) at (7, 3) {$\bullet$};
		\node [style=none] (70) at (5.25, -0) {$\bullet$};
		\node [style=none] (71) at (7, -0) {$\bullet$};
		\node [style=none] (72) at (5.5, 5) {};
		\node [style=none] (73) at (6.25, 5) {};
		\node [style=none] (74) at (5.5, -0) {};
		\node [style=none] (75) at (6.25, -0) {};
		\node [style=none] (76) at (5.5, 3) {};
		\node [style=none] (77) at (6.25, 3.5) {};
		\node [style=none] (78) at (6, 1.75) {$\vdots$};
	\end{pgfonlayer}
	\begin{pgfonlayer}{edgelayer}
		\filldraw [fill=gray!40, draw = gray!40] (37.center) to (40.center) to (39.center) to (38.center) -- cycle;
		\filldraw [fill=gray!40, draw = gray!40] (57.center) to (62.center) to (49.center) to (51.center) -- cycle;
		\filldraw [fill=gray!40, draw = gray!40] (41.center) to (44.center) to (43.center) to (42.center) -- cycle;
		\draw [style=none, in=-120, out=30, looseness=1.25] (4) to (5.center);
		\draw [style={dashed edge}, bend right=90, looseness=1.25] (8.center) to (9.center);
		\draw [style={dashed edge}, bend left=90, looseness=1.25] (8.center) to (9.center);
		\draw [style=particlePath] (0.center) to (1);
		\draw [style=particlePath] (1) to (10.center);
		\draw [style=particlePath] (11.center) to (3);
		\draw [style={dashed edge}, bend right=90, looseness=1.00] (15.center) to (14.center);
		\draw [style={dashed edge}, bend left=90, looseness=1.00] (15.center) to (14.center);
		\draw [style={dashed edge}, bend right=90, looseness=1.00] (16.center) to (19.center);
		\draw [style={dashed edge}, bend left=90, looseness=1.00] (16.center) to (19.center);
		\draw [in=180, out=4, looseness=1.00] (20.center) to (22.center);
		\draw [in=180, out=0, looseness=1.00] (23.center) to (21.center);
		\draw [style={dashed edge}, bend right=90, looseness=1.00] (24.center) to (31.center);
		\draw [style={dashed edge}, bend left=90, looseness=1.00] (24.center) to (31.center);
		\draw [style={dashed edge}, bend right=90, looseness=1.00] (30.center) to (32.center);
		\draw [style={dashed edge}, bend left=90, looseness=1.00] (30.center) to (32.center);
		\draw [style=particlePath] (3) to (2.center);
		\draw [style=particlePath] (1) to (34.center);
		\draw (35.center) to (28.center);
		\draw (36.center) to (27.center);
		\draw (0.center) to (1);
		\draw (1) to (10.center);
		\draw (11.center) to (3);
		\draw (2.center) to (3);
		\draw (34.center) to (1);
		\draw [style=particlePath] (45.center) to (3);
		\draw (45.center) to (3);
		\draw [style={dashed edge}, bend right=90, looseness=1.00] (47.center) to (63.center);
		\draw [style={dashed edge}, bend left=90, looseness=1.00] (47.center) to (63.center);
		\draw [style={dashed edge}, bend right=90, looseness=1.00] (54.center) to (61.center);
		\draw [style={dashed edge}, bend left=90, looseness=1.00] (54.center) to (61.center);
		\draw [in=180, out=4, looseness=1.00] (48.center) to (60.center);
		\draw [in=180, out=0, looseness=1.00] (64.center) to (53.center);
		\draw [style=particlePath] (1) to (55.center);
		\draw [style=particlePath] (56.center) to (3);
		\draw (1) to (55.center);
		\draw (56.center) to (3);
		\draw (72.center) to (73.center);
		\draw (74.center) to (75.center);
		\draw [in=180, out=0, looseness=1.00] (76.center) to (77.center);
	\end{pgfonlayer}
\end{tikzpicture}  \]
This can be described by a controlled transformation $C\{\pi_i\}$. Given the above two assumptions, any $n$th-order phase -- if they exist in the operational theory -- can be kicked-back by such an experiment.

Recall that $n$th-order phases are $n$-detectable, but $n\mathrm{-}1$-undetectable. That is, the action of such phases cannot be detected by any effect with support on less than $n$ paths. Which is in stark contrast to quantum, or $2$nd-order, phases, which can always be detected by an effect with support on two paths. Thus, permutations of particles, whose type all correspond to an $n$th-order phase, can only be detected by recombining \emph{all} paths in an $n$-path experiment. In some sense then, $n$th-order phases encode holistic information about all paths in an $n$-path experiment.

\subsection{Computational oracles} \label{oracle}

Oracles play a vital role in quantum computing, forming the basis of most known speed-ups over classical computation \cite{Nielsen}. Despite their importance, defining a general notion of oracle -- that reduces to the standard notion in the quantum case -- in operationally-defined theories has proven difficult \cite{LB-2014}. A particular example of a quantum oracle is the following controlled unitary:
\begin{equation} \label{Quantum-Oracle}
U_f= \ket{0}\bra{0}\otimes Z^{f(0)} + \ket{1}\bra{1}\otimes Z^{f(1)},
\end{equation}
with $Z$ a Pauli matrix, $f:\{0,1\}\rightarrow\{0,1\}$ a function encoding some decision problem and $Z^{0}:=\mathbb{I}$. The quantum phase kick-back for $U_f$ amounts to
\begin{equation} \label{Quantum-kick-back}
U_f=\mathbb{I}\otimes\ket{0}\bra{0}+Z^{f(0)\oplus f(1)}\otimes\ket{1}\bra{1}.
\end{equation}
One can see that inputting $\ket{+}\ket{1}$ and measuring the first qubit in the $\{\ket{+},\ket{-}\}$ basis reveals the value of $f(0)\oplus f(1)$ in a single query of the oracle -- a feat impossible on a classical computer \cite{Nielsen}.

The results of Thm.~(\ref{Reversible-Control}) provide a way to define computational oracles in any theory satisfying our three assumptions. An oracle in such theories corresponds to a reversible controlled transformation \footnote{There could be many distinct transformations that have the same behaviour on a set of control states. As long as one fixes which transformation corresponds to the oracle, this is not a problem.} where the set of transformations $\{T_{i,f(i)}\}$ being controlled depend on a function $f:\{i\}\rightarrow\{0,1\}$ encoding a decision problem of interest. As the transformations $T_{i,f(i)}$ depend on the value of $f(i)$, so does the controlled transformation and the kicked-back phase. That is, in theories with a non-trivial phase group, the phase kick-back of an oracle encodes information about the value $f(i)$ for all $i$. In such theories, there is thus a non-zero probability of extracting such global information. Non-trivial interference behaviour can thus be seen as a general resource for non-classical computation.

In the quantum case, there is a limit to how much global information one can obtain in a single oracle query. In the situation where $f:\{0,\dots, n\mathrm{-}1\}\rightarrow\{0,1\}$ a quantum oracle can only extract the value of $f(i)\oplus f(j)$, for some $i,j$, in a single query without error \cite{Nielsen}. Can theories with higher-order interference reliably extract more global information about $f$ -- without error -- in a single query? The results of Sec.~(\ref{Exchange}) appear to suggest that $n$th-order phases encode information about all paths in an $n$-path experiment, as opposed to $2$nd-order, or quantum, phases which only encode information about at most two paths. Based on this fact -- that higher-order phases encode more holistic information that quantum ones --
and the result of Thm.~(\ref{ALL}), we conjecture that theories with higher-order interference can solve problems intractable on a quantum computer. To prove such a conjecture however, a concrete example of such a theory is needed. While there are partial examples in the literature \cite{DensityCube, Thesis}, none of these are complete \cite{LS-2015}. Answering such a conjecture in the affirmative is thus not yet possible, although some evidence was provided in \cite{LS-2015}.

\section{Conclusion}

The key result of this paper was to provide a set of physical principles that are sufficient for the existence of reversible controlled transformations. Such transformations are central to our understanding of quantum computing, information processing and thermodynamics. Moreover, these were shown to guarantee the existence of a generalised phase kick-back mechanism, which, in the quantum case, forms a fundamental component of almost all algorithms. These physical principles are defining characteristics of information: independence of encoding medium; propagation from present to future; and conservation at a fundamental level. It would therefore be surprising if these principles were not necessary primers for information processing. These results provide the tools for an exploration of the structure of computational algorithms -- and how they connect to physical principles -- in operational theories.

We developed a framework that connects higher-order interference and phase transformations, generalising the intimate connection between phase and interference witnessed in quantum theory. These `higher-order' phases are accessible via our generalised kick-back mechanism. Given two assumptions which connect the operational theory to a physical description of particles, these higher-order phases were shown to give rise to exotic particle types. Additionally, using the controlled transformations to define an oracle model of computation, we conjectured that these higher-order phases may allow for the solution of problems intractable even on a quantum computer. Computational problems that may be susceptible to efficient solution by generalised phase kick-back include the $n$-collision problem, and the non-abelian hidden subgroup problem. Discovering that higher-order interference leads to  `unreasonable' computational power may provide a reason `why' quantum theory is limited in its interference behaviour -- in the same way that implausible communication complexity is thought to limit quantum non-locality \cite{PR-van-Dam}.

In Sec.~(\ref{Exchange}) it was shown that to observe the exotic particle types corresponding to higher-order phases, there must be distinct ways to swap particles. As we live in a topologically trivial three dimensional space, there is only one topologically distinct way to swap point particles. This can either be seen as evidence of \emph{why} quantum theory is limited to only second-order interference, or evidence that such particle types must have non-trivial structure, similar to toroidal anyons \cite{torons} -- which are constructed from a solenoid ring with an attached charge -- or closed strings \cite{Barton}.

Finally, reference \cite{work} has shown that thermodynamic work can be extracted from quantum coherences -- $2$nd-order phases in our language. This raises this the question of whether one can extract work more efficiently using higher-order phases? If such efficiencies are in contention with thermodynamic principles this could provide a reason `why' quantum theory has limited interference. Initial investigations into formulating a consistent thermodynamics in operational theories have been reported in \cite{Thermo, Thermo1,Thermo2}. The framework and results presented here may therefore have implications for thermodynamics, information processing and how each arises in a unified manner from physical principles.

\section*{Acknowledgements}
The authors thank Howard Barnum, Oscar Dahlsten, Terry Rudolph, Jon Barrett and Matty Hoban for useful discussions. Matty Hoban is also thanked for proof reading a draft of the current paper. The authors also thank Carlo Maria Scandolo for a careful reading of the appendices in a previous draft of the current paper. This work was supported by EPSRC through the Controlled Quantum Dynamics Centre for Doctoral Training and the Oxford Department of Computer Science. CML also acknowledges funding from University College, Oxford.

\appendix

\section{General results following from causality purification and strong symmetry}
\subsection{Uniqueness of distinguishing measurement} \label{Sym}
Strong symmetry (together with the no restriction hypothesis, which says that all mathematically well-defined measurements are physical) implies that, given any set of pure and perfectly distinguishable states $\{|i)\}$, there exists a unique measurement $\{(j|\}$ such that, \[(i|j)=\delta_{ij}.\] See \cite{MU, Higher-order-reconstruction} for details. Moreover if there is a set $\{(e_j|\}$ such that $(e_j|i)=\alpha_j \delta_{ij}$ then, \[(e_j|=\alpha_j(j|.\]
\vspace{-1cm}
\subsection{Existence of a maximally mixed state}
Purification implies that there is a unique \emph{completely mixed} state $\maxMix$ defined by, \[T\maxMix=\maxMix,\quad \forall T\in\mathcal{R}\]
Any state is a `refinement' of this state. See \cite{Pavia1, Pavia2} for details.

\subsection{Purification of the maximally mixed state is dynamically faithful}
Purification implies that there exists a state $|\psi)$ that purifies the completely mixed state:
\[\begin{tikzpicture}
	\begin{pgfonlayer}{nodelayer}
		\node [style=none] (0) at (1, -0.25) {};
		\node [style=none] (1) at (1, 1.25) {};
		\node [style=none] (2) at (1, 1.75) {};
		\node [style=none] (3) at (1, -0.75) {};
		\node [style=trace] (4) at (2.5, -0.25) {};
		\node [style=none] (5) at (2.5, 1.25) {};
		\node [style=none] (6) at (1, -0.75) {};
		\node [style=none] (7) at (0.5, 0.5) {$\psi$};
		\node [style=none] (8) at (3.75, 0.75) {$=$};
		\node [style=traceState] (9) at (5, 0.75) {};
		\node [style=none] (10) at (6.5, 0.75) {};
		\node [style=none] (11) at (7.25, -0) {};
	\end{pgfonlayer}
	\begin{pgfonlayer}{edgelayer}
		\draw [bend right=90, looseness=1.25] (2.center) to (3.center);
		\draw (2.center) to (3.center);
		\draw (1.center) to (5.center);
		\draw (0.center) to (4);
		\draw (9) to (10.center);
	\end{pgfonlayer}
\end{tikzpicture} \]
This is unique up to reversible transformation. We denote a particular choice of this purification as,
\[\begin{tikzpicture}
	\begin{pgfonlayer}{nodelayer}
		\node [style=none] (0) at (4.75, -0.25) {};
		\node [style=none] (1) at (4.75, 1.25) {};
		\node [style=none] (2) at (4.75, 1.75) {};
		\node [style=none] (3) at (4.75, -0.75) {};
		\node [style=none] (4) at (6.25, -0.25) {};
		\node [style=none] (5) at (6.25, 1.25) {};
		\node [style=none] (6) at (4.75, -0.75) {};
		\node [style=none] (7) at (4.25, 0.5) {$\psi$};
		\node [style=none] (8) at (2.5, 0.5) {$:=$};
		\node [style=none] (9) at (0.75, 1.25) {};
		\node [style=none] (10) at (1.25, -0.25) {};
		\node [style=none] (11) at (0.75, -0.25) {};
		\node [style=none] (12) at (1.25, 1.25) {};
		\node [style=none] (13) at (7, -0) {};
	\end{pgfonlayer}
	\begin{pgfonlayer}{edgelayer}
		\draw [bend right=90, looseness=1.25] (2.center) to (3.center);
		\draw (2.center) to (3.center);
		\draw (1.center) to (5.center);
		\draw (0.center) to (4.center);
		\draw (9.center) to (12.center);
		\draw (11.center) to (10.center);
		\draw [bend right=90, looseness=1.75] (9.center) to (11.center);
	\end{pgfonlayer}
\end{tikzpicture} \]
Purifications of the completely mixed state are called \emph{dynamically faithful} states \cite{Pavia1,Pavia2} and satisfy the following important condition \cite{Pavia1,Pavia2}:
\[\begin{tikzpicture}
	\begin{pgfonlayer}{nodelayer}
		\node [style=none] (0) at (5.5, 7.25) {$=$};
		\node [style=none] (1) at (0.7499999, 7) {};
		\node [style=none] (2) at (1.25, 5.25) {};
		\node [style=none] (3) at (0.7499997, 5.25) {};
		\node [style=none] (4) at (1.25, 7) {};
		\node [style=none] (5) at (1.25, 9) {};
		\node [style=none] (6) at (1.25, 6) {};
		\node [style=none] (7) at (2.75, 6) {};
		\node [style=none] (8) at (2.75, 9) {};
		\node [style=none] (9) at (2, 7.5) {$T$};
		\node [style=none] (10) at (1.25, 8.5) {};
		\node [style=none] (11) at (1.25, 7.5) {};
		\node [style=none] (12) at (0, 8.5) {};
		\node [style=none] (13) at (3.75, 7) {};
		\node [style=none] (14) at (3.75, 8.5) {};
		\node [style=none] (15) at (2.75, 8.5) {};
		\node [style=none] (16) at (2.75, 7) {};
		\node [style=none] (17) at (3.75, 5.25) {};
		\node [style=none] (18) at (8.25, 6) {};
		\node [style=none] (19) at (8.25, 5.25) {};
		\node [style=none] (20) at (9.75, 7) {};
		\node [style=none] (21) at (8.25, 8.5) {};
		\node [style=none] (22) at (8.25, 9) {};
		\node [style=none] (23) at (9.75, 8.5) {};
		\node [style=none] (24) at (10.75, 5.25) {};
		\node [style=none] (25) at (9.75, 6) {};
		\node [style=none] (26) at (9.000001, 7.5) {$T'$};
		\node [style=none] (27) at (6.999999, 8.5) {};
		\node [style=none] (28) at (7.75, 7) {};
		\node [style=none] (29) at (8.25, 7.5) {};
		\node [style=none] (30) at (10.75, 8.5) {};
		\node [style=none] (31) at (9.75, 9) {};
		\node [style=none] (32) at (8.25, 7) {};
		\node [style=none] (33) at (7.75, 5.25) {};
		\node [style=none] (34) at (10.75, 7) {};
		\node [style=none] (35) at (2, 3.5) {$\implies$};
		\node [style=none] (36) at (3.75, -0.7499997) {};
		\node [style=cpoint] (37) at (2.749999, -0) {$\sigma$};
		\node [style=none] (38) at (13.25, 1.75) {};
		\node [style=none] (39) at (11.5, 0.7499999) {$T'$};
		\node [style=none] (40) at (12.25, 0.25) {};
		\node [style=none] (41) at (5.25, 1.75) {};
		\node [style=none] (42) at (5.25, 2.25) {};
		\node [style=none] (43) at (12.25, -0.7499999) {};
		\node [style=none] (44) at (3.75, 1.75) {};
		\node [style=none] (45) at (3.75, -0) {};
		\node [style=none] (46) at (12.25, 2.25) {};
		\node [style=none] (47) at (9.500001, 1.75) {};
		\node [style=none] (48) at (3.75, 0.7499997) {};
		\node [style=none] (49) at (10.75, 1.75) {};
		\node [style=none] (50) at (5.25, 0.25) {};
		\node [style=none] (51) at (2.5, 1.75) {};
		\node [style=none] (52) at (3.75, 2.25) {};
		\node [style=none] (53) at (8, 0.5000001) {$=$};
		\node [style=none] (54) at (13.25, 0.2499996) {};
		\node [style=none] (55) at (10.75, -0) {};
		\node [style=cpoint] (56) at (9.750001, -0) {$\sigma$};
		\node [style=none] (57) at (6.25, 1.75) {};
		\node [style=none] (58) at (4.5, 0.7499999) {$T$};
		\node [style=none] (59) at (12.25, 1.75) {};
		\node [style=none] (60) at (10.75, -0.7499997) {};
		\node [style=none] (61) at (6.249999, 0.2499996) {};
		\node [style=none] (62) at (5.25, -0.7499999) {};
		\node [style=none] (63) at (10.75, 0.7499997) {};
		\node [style=none] (64) at (10.75, 2.25) {};
		\node [style=none] (65) at (15, -0) {$\forall \sigma$};
		\node [style=none] (66) at (2.749999, -0) {};
		\node [style=none] (67) at (15.75, -0) {};
	\end{pgfonlayer}
	\begin{pgfonlayer}{edgelayer}
		\draw (1.center) to (4.center);
		\draw (3.center) to (2.center);
		\draw [bend right=90, looseness=1.75] (1.center) to (3.center);
		\draw (5.center) to (6.center);
		\draw (6.center) to (7.center);
		\draw (7.center) to (8.center);
		\draw (8.center) to (5.center);
		\draw (12.center) to (10.center);
		\draw (15.center) to (14.center);
		\draw (16.center) to (13.center);
		\draw (2.center) to (17.center);
		\draw (28.center) to (32.center);
		\draw (33.center) to (19.center);
		\draw [bend right=90, looseness=1.75] (28.center) to (33.center);
		\draw (22.center) to (18.center);
		\draw (18.center) to (25.center);
		\draw (25.center) to (31.center);
		\draw (31.center) to (22.center);
		\draw (27.center) to (21.center);
		\draw (23.center) to (30.center);
		\draw (20.center) to (34.center);
		\draw (19.center) to (24.center);
		\draw (37) to (45.center);
		\draw (52.center) to (36.center);
		\draw (36.center) to (62.center);
		\draw (62.center) to (42.center);
		\draw (42.center) to (52.center);
		\draw (51.center) to (44.center);
		\draw (41.center) to (57.center);
		\draw (50.center) to (61.center);
		\draw (64.center) to (60.center);
		\draw (60.center) to (43.center);
		\draw (43.center) to (46.center);
		\draw (46.center) to (64.center);
		\draw (47.center) to (49.center);
		\draw (56) to (55.center);
		\draw (59.center) to (38.center);
		\draw (40.center) to (54.center);
	\end{pgfonlayer}
\end{tikzpicture}  \]

\section{Existence of controlled transformations\label{CT}} 
Recalling that in this work we assume that the composite of pure states is pure, we can define two sets of pure and perfectly distinguishable states:
\[\mathcal{B}_1 :=
\left\{\begin{tikzpicture}
	\begin{pgfonlayer}{nodelayer}
		\node [style=none] (0) at (1, -0) {};
		\node [style=none] (1) at (2, -0) {};
		\node [style=none] (2) at (2, -1) {};
		\node [style=none] (3) at (2, 1) {};
		\node [style=none] (4) at (2, -1) {};
		\node [style=cpoint] (5) at (0.4999999, 0.9999999) {$i$};
		\node [style=none] (6) at (1, -1) {};
	\end{pgfonlayer}
	\begin{pgfonlayer}{edgelayer}
		\draw [bend right=90, looseness=2.25] (0.center) to (6.center);
		\draw (5) to (3.center);
		\draw (6.center) to (2.center);
		\draw (0.center) to (1.center);
	\end{pgfonlayer}
\end{tikzpicture}\right\}  ,\quad \text{and}\quad\mathcal{B}_2 := \left\{\begin{tikzpicture}
	\begin{pgfonlayer}{nodelayer}
		\node [style=none] (0) at (1, -0) {};
		\node [style={small box}] (1) at (1.75, -0) {$T_i$};
		\node [style=none] (2) at (2.75, -0) {};
		\node [style=none] (3) at (2.75, -0.9999999) {};
		\node [style=none] (4) at (2.75, 0.9999999) {};
		\node [style=none] (5) at (2.75, -0.9999999) {};
		\node [style=cpoint] (6) at (0.4999999, 0.9999999) {$i$};
		\node [style=none] (7) at (1, -1) {};
	\end{pgfonlayer}
	\begin{pgfonlayer}{edgelayer}
		\draw [bend right=90, looseness=2.25] (0.center) to (7.center);
		\draw (6) to (4.center);
		\draw (7.center) to (3.center);
		\draw (0.center) to (1);
		\draw (1) to (2.center);
	\end{pgfonlayer}
\end{tikzpicture}\right\}  .\]

Strong symmetry implies that there exists a reversible transformation between these two sets, $T:\mathcal{B}_1\to \mathcal{B}_2$.
\[\begin{tikzpicture}
	\begin{pgfonlayer}{nodelayer}
		\node [style=none] (0) at (7.75, 0.7499997) {};
		\node [style=none] (1) at (1.75, 0.7499997) {$T$};
		\node [style=none] (2) at (2.25, -0.7499997) {};
		\node [style=none] (3) at (1.25, 1.75) {};
		\node [style=none] (4) at (5.25, 0.7499997) {$=$};
		\node [style=none] (5) at (1.25, 0.7499997) {};
		\node [style=none] (6) at (3.000001, -0.2499996) {};
		\node [style=none] (7) at (1.25, 2.25) {};
		\node [style=none] (8) at (1.25, -0.2499996) {};
		\node [style=none] (9) at (0.7499997, -0.2499996) {};
		\node [style=none] (10) at (3.000001, 0.7499997) {};
		\node [style=none] (11) at (3.000001, -0.2499996) {};
		\node [style=none] (12) at (2.25, -0.2499996) {};
		\node [style={small box}] (13) at (8.499999, 0.7499997) {$T_i$};
		\node [style=none] (14) at (9.499999, 0.7499997) {};
		\node [style=none] (15) at (9.499999, -0.2499996) {};
		\node [style=none] (16) at (9.499999, 1.75) {};
		\node [style=cpoint] (17) at (0.2499996, 1.75) {$i$};
		\node [style=none] (18) at (0.7499997, 0.7499997) {};
		\node [style=none] (19) at (9.499999, -0.2499996) {};
		\node [style=none] (20) at (3.000001, 1.75) {};
		\node [style=none] (21) at (2.25, 2.25) {};
		\node [style=cpoint] (22) at (7.25, 1.75) {$i$};
		\node [style=none] (23) at (2.25, 1.75) {};
		\node [style=none] (24) at (2.25, 0.7499997) {};
		\node [style=none] (25) at (7.75, -0.2499996) {};
		\node [style=none] (26) at (1.25, -0.7499997) {};
		\node [style=none] (27) at (10, -0) {};
	\end{pgfonlayer}
	\begin{pgfonlayer}{edgelayer}
		\draw (7.center) to (21.center);
		\draw (21.center) to (2.center);
		\draw (2.center) to (26.center);
		\draw [bend right=90, looseness=2.25] (18.center) to (9.center);
		\draw (7.center) to (26.center);
		\draw (17) to (3.center);
		\draw (18.center) to (5.center);
		\draw (9.center) to (8.center);
		\draw (12.center) to (11.center);
		\draw (24.center) to (10.center);
		\draw (23.center) to (20.center);
		\draw [bend right=90, looseness=2.25] (0.center) to (25.center);
		\draw (22) to (16.center);
		\draw (25.center) to (15.center);
		\draw (0.center) to (13);
		\draw (13) to (14.center);
	\end{pgfonlayer}
\end{tikzpicture} \]
This result, together with the existence of dynamically faithful states, will be used to show the existence of a reversible controlled transformation $C\{T_i\}$ for an arbitrary set of reversible transformations $\{T_i\}$.

\begin{lemma}`Superposition preservation'\label{SP}
\[\begin{tikzpicture}
	\begin{pgfonlayer}{nodelayer}
		\node [style=none] (0) at (7.75, 0.7499997) {};
		\node [style=none] (1) at (1.75, 0.7499997) {$T$};
		\node [style=none] (2) at (2.25, -0.7499997) {};
		\node [style=none] (3) at (1.25, 1.75) {};
		\node [style=none] (4) at (5.25, 0.7499997) {$=$};
		\node [style=none] (5) at (1.25, 0.7499997) {};
		\node [style=none] (6) at (3.000001, -0.2499996) {};
		\node [style=none] (7) at (1.25, 2.25) {};
		\node [style=none] (8) at (1.25, -0.2499996) {};
		\node [style=none] (9) at (0.7499997, -0.2499996) {};
		\node [style=none] (10) at (3.000001, 0.7499997) {};
		\node [style=none] (11) at (3.000001, -0.2499996) {};
		\node [style=none] (12) at (2.25, -0.2499996) {};
		\node [style={small box}] (13) at (8.499999, 0.7499997) {$T_i$};
		\node [style=none] (14) at (9.499999, 0.7499997) {};
		\node [style=none] (15) at (9.499999, -0.2499996) {};
		\node [style=cocpoint] (16) at (9.499999, 1.75) {$i$};
		\node [style=none] (17) at (0.2499996, 1.75) {};
		\node [style=none] (18) at (0.7499997, 0.7499997) {};
		\node [style=none] (19) at (9.499999, -0.2499996) {};
		\node [style=cocpoint] (20) at (3.000001, 1.75) {$i$};
		\node [style=none] (21) at (2.25, 2.25) {};
		\node [style=none] (22) at (7.25, 1.75) {};
		\node [style=none] (23) at (2.25, 1.75) {};
		\node [style=none] (24) at (2.25, 0.7499997) {};
		\node [style=none] (25) at (7.75, -0.2499996) {};
		\node [style=none] (26) at (1.25, -0.7499997) {};
		\node [style=none] (27) at (10, -0) {};
	\end{pgfonlayer}
	\begin{pgfonlayer}{edgelayer}
		\draw (7.center) to (21.center);
		\draw (21.center) to (2.center);
		\draw (2.center) to (26.center);
		\draw [bend right=90, looseness=2.25] (18.center) to (9.center);
		\draw (7.center) to (26.center);
		\draw (17.center) to (3.center);
		\draw (18.center) to (5.center);
		\draw (9.center) to (8.center);
		\draw (12.center) to (11.center);
		\draw (24.center) to (10.center);
		\draw (23.center) to (20);
		\draw [bend right=90, looseness=2.25] (0.center) to (25.center);
		\draw (22.center) to (16);
		\draw (25.center) to (15.center);
		\draw (0.center) to (13);
		\draw (13) to (14.center);
	\end{pgfonlayer}
\end{tikzpicture} \]
\end{lemma}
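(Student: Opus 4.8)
The plan is to turn the claimed equality of diagrams into an equality of effects on the control system, and then to pin that effect down using the uniqueness of the distinguishing measurement established in Appendix~(\ref{Sym}). Read as processes, both sides of Lemma~(\ref{SP}) are maps sending the open control input to a state on the two target wires supplied by the faithful state $|\Phi)$ (the cup). Two processes are equal iff they agree in all closed circuits, so it suffices to cap the two open target outputs with an \emph{arbitrary} effect $(e'|$, feed in an \emph{arbitrary} control state $|\psi)$, and show the resulting scalars coincide for all $(e'|$ and all $|\psi)$.

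First I would fix $(e'|$ and regard the whole left-hand side, with $(e'|$ attached, as a single effect $(k|$ on the control input, namely $(k|\psi):=\bigl[(i|\otimes(e'|\bigr)\,T\,\bigl(|\psi)\otimes|\Phi)\bigr)$. Evaluating $(k|$ on the distinguishable basis $\{|j)\}$ of the control and using the defining diagram for $T$ displayed immediately above the lemma, $T\bigl(|j)\otimes|\Phi)\bigr)=|j)\otimes(T_j)|\Phi)$, together with $(i|j)=\delta_{ij}$, yields $(k|j)=\delta_{ij}\,(e'|(T_i)|\Phi)$. Thus $(k|$ is \emph{diagonal} on the basis, vanishing except at $j=i$.

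The crucial step is then to invoke Appendix~(\ref{Sym}): any effect acting on the states $\{|j)\}$ as $\alpha\,\delta_{ij}$ must equal $\alpha\,(i|$ as an effect, i.e. on all states and not merely on the basis. Applying this with $\alpha=(e'|(T_i)|\Phi)$ gives $(k|=(e'|(T_i)|\Phi)\,(i|$, hence $(k|\psi)=(i|\psi)\,(e'|(T_i)|\Phi)$ for every $|\psi)$. This is exactly the scalar produced by the right-hand side, where the control input is contracted with $(i|$, the target carries $(T_i)|\Phi)$, and both are capped with $(e'|$. Since $(e'|$ was arbitrary, the two processes coincide, which is the lemma.

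The main obstacle is that agreement on the $n$ distinguishable basis states does \emph{not} by itself force agreement as processes: the basis need not span the state space, so coherent control inputs are a priori unconstrained, and a direct linearity argument fails. The content of the proof is that this gap is closed by the uniqueness of the distinguishing measurement, which forces any basis-diagonal effect to be a multiple of $(i|$; this is the operational substitute for the quantum fact that capping the control output with $(i|$ annihilates off-diagonal coherences. I would only double-check that states are genuinely separated by effects (so the reduction to arbitrary $(e'|$ and $|\psi)$ is legitimate) and that $|\Phi)$ enters solely as the fixed target wires, so that no structure beyond Appendix~(\ref{Sym}) and the defining equation of $T$ is smuggled in.
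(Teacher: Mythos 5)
Your first half is essentially the paper's own opening move: evaluate the capped process on the distinguishable states $\{|j)\}$, observe that the resulting effect on the control is basis-diagonal, and invoke the uniqueness result of Appendix~(\ref{Sym}) to promote basis-diagonality to a genuine effect equality $(k|=\alpha\,(i|$. The paper does exactly this, but only for the single choice of cap $(e'|=\detEff\otimes\detEff$, obtaining that the marginal of the left-hand side on the control input is the effect $(i|$.

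The gap is in your closing step, ``since $(e'|$ was arbitrary, the two processes coincide.'' In this framework two processes from the control $C$ to the target wires $AB$ are identified only if they agree in \emph{all} closed circuits, including those in which the control input is entangled with an ancilla $R$ that is subsequently measured \emph{jointly} with $AB$. Your tests only cover effects of product form $(e'|_{AB}\otimes(f|_{R}$ across the $AB|R$ cut, and without local tomography agreement on product effects does not force equality of the resulting states of $ABR$. Local tomography is not among the paper's assumptions---real-vector-space quantum theory, which violates it, is explicitly cited as a model of causality, purification and strong symmetry---so this step is not licensed. (The worry you flag, whether effects separate states, is not the real weak point: separation holds by definition of state equivalence. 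The issue is that effects on $AB$ alone need not separate \emph{processes} into $AB$.) The paper closes this gap with purification rather than tomography: having shown that the marginal of the left-hand side on the control is $(i|$ (your argument specialised to $(e'|=\detEff\otimes\detEff$), it invokes the dilation-uniqueness property of Eq.~(\ref{Dilation}) to conclude that the uncapped process equals $(i|$ tensored with some reversible $T'_i$ applied to the cup, and then pins down $T'_i|\Phi)=(T_i\otimes\mathbb{I})|\Phi)$ by feeding the state $|i)$ into both sides. Replacing your ``arbitrary $(e'|$'' step with this purification step turns your argument into the paper's proof.
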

\proof
Firstly we prove a weaker condition which is superposition preservation for pure local effects,
\[\begin{tikzpicture}
	\begin{pgfonlayer}{nodelayer}
		\node [style=cpoint] (0) at (0, 10.25) {$i$};
		\node [style=none] (1) at (2.250001, 10.25) {};
		\node [style=none] (2) at (2.250001, 10.75) {};
		\node [style=none] (3) at (3.75, 10.75) {};
		\node [style=none] (4) at (2.250001, 8.5) {};
		\node [style=none] (5) at (2.250001, 6.5) {};
		\node [style=none] (6) at (3.75, 6.5) {};
		\node [style=none] (7) at (3.75, 8.5) {};
		\node [style=none] (8) at (3.75, 10.25) {};
		\node [style=cocpoint] (9) at (4.75, 10.25) {$j$};
		\node [style=trace] (10) at (4.75, 8.5) {};
		\node [style=none] (11) at (3, 8.5) {$T$};
		\node [style=none] (12) at (0.7499999, 11.25) {};
		\node [style=none] (13) at (0.7499999, 6) {};
		\node [style=none] (14) at (5.499999, 6) {};
		\node [style=none] (15) at (5.499999, 11.25) {};
		\node [style=none] (16) at (6.499999, 9.25) {$=$};
		\node [style=none] (17) at (12.25, 9) {$\delta_{ij}$};
		\node [style={small box}] (18) at (9.25, 9.75) {$T_j$};
		\node [style=trace] (19) at (10.5, 9.75) {};
		\node [style=none] (20) at (4.5, 5) {Strong symmetry $\implies$};
		\node [style=trace] (21) at (4.75, 7) {};
		\node [style=none] (22) at (8.500001, 8.25) {};
		\node [style=none] (23) at (8.500001, 9.75) {};
		\node [style=trace] (24) at (10.5, 8.25) {};
		\node [style=none] (25) at (13.25, 1.5) {};
		\node [style=trace] (26) at (14.5, -0) {};
		\node [style=none] (27) at (13.25, -0) {};
		\node [style=none] (28) at (10.5, 1.75) {$=$};
		\node [style=trace] (29) at (14.5, 1.5) {};
		\node [style=cocpoint] (30) at (14.5, 3) {$j$};
		\node [style=none] (31) at (12, 3) {};
		\node [style=none] (32) at (2.250001, 7) {};
		\node [style=none] (33) at (3.75, 7) {};
		\node [style=none] (34) at (4.25, 4) {};
		\node [style=none] (35) at (7.25, -0.7499999) {};
		\node [style=cocpoint] (36) at (8.25, 3) {$j$};
		\node [style=none] (37) at (5.75, 3.5) {};
		\node [style=none] (38) at (7.25, 3) {};
		\node [style=trace] (39) at (8.25, -0.25) {};
		\node [style=none] (40) at (9.000001, 4) {};
		\node [style=none] (41) at (5.75, -0.25) {};
		\node [style=none] (42) at (7.25, -0.25) {};
		\node [style=none] (43) at (5.75, -0.7499999) {};
		\node [style=none] (44) at (5.75, 1.25) {};
		\node [style=none] (45) at (7.25, 1.25) {};
		\node [style=trace] (46) at (8.25, 1.25) {};
		\node [style=none] (47) at (4.25, -1.25) {};
		\node [style=none] (48) at (5.75, 3) {};
		\node [style=none] (49) at (7.25, 3.5) {};
		\node [style=none] (50) at (9.000001, -1.25) {};
		\node [style=none] (51) at (6.499999, 1.25) {$T$};
		\node [style=none] (52) at (3.5, 3) {};
		\node [style=none] (53) at (16.5, -0) {};
		\node [style=none] (54) at (14, 9) {$=$};
		\node [style=none] (55) at (15.75, 9) {$\delta_{ij}$};
	\end{pgfonlayer}
	\begin{pgfonlayer}{edgelayer}
		\draw (2.center) to (5.center);
		\draw (5.center) to (6.center);
		\draw (6.center) to (3.center);
		\draw (3.center) to (2.center);
		\draw (0) to (1.center);
		\draw (7.center) to (10);
		\draw (8.center) to (9);
		\draw [style={thick gray dashed edge}] (12.center) to (15.center);
		\draw [style={thick gray dashed edge}] (15.center) to (14.center);
		\draw [style={thick gray dashed edge}] (14.center) to (13.center);
		\draw [style={thick gray dashed edge}] (13.center) to (12.center);
		\draw (18) to (19);
		\draw [bend right=90, looseness=1.50] (23.center) to (22.center);
		\draw (23.center) to (18);
		\draw (22.center) to (24);
		\draw [bend right=90, looseness=1.50] (25.center) to (27.center);
		\draw (27.center) to (26);
		\draw (31.center) to (30);
		\draw (33.center) to (21);
		\draw [bend left=90, looseness=2.00] (32.center) to (4.center);
		\draw (37.center) to (43.center);
		\draw (43.center) to (35.center);
		\draw (35.center) to (49.center);
		\draw (49.center) to (37.center);
		\draw (45.center) to (46);
		\draw (38.center) to (36);
		\draw [style={thick gray dashed edge}] (34.center) to (40.center);
		\draw [style={thick gray dashed edge}] (40.center) to (50.center);
		\draw [style={thick gray dashed edge}] (50.center) to (47.center);
		\draw [style={thick gray dashed edge}] (47.center) to (34.center);
		\draw (42.center) to (39);
		\draw [bend left=90, looseness=2.00] (41.center) to (44.center);
		\draw (52.center) to (48.center);
		\draw (25.center) to (29);
	\end{pgfonlayer}
\end{tikzpicture} \]
the implication follows from the uniqueness of the maximally distinguishing measurement up to normalisation. Then purification implies,
\[\begin{tikzpicture}
	\begin{pgfonlayer}{nodelayer}
		\node [style=none] (0) at (6.75, 2.5) {};
		\node [style=none] (1) at (7.25, 0.25) {};
		\node [style=none] (2) at (5.5, 2) {$=$};
		\node [style=cocpoint] (3) at (8.75, 2.5) {$i$};
		\node [style=none] (4) at (7.25, 1.25) {};
		\node [style=none] (5) at (7.25, 1.25) {};
		\node [style=none] (6) at (7.25, 1.75) {};
		\node [style=none] (7) at (8.25, 1.75) {};
		\node [style=none] (8) at (8.25, -0.25) {};
		\node [style=none] (9) at (7.25, -0.25) {};
		\node [style=none] (10) at (7.25, 0.25) {};
		\node [style=none] (11) at (7.25, 1.25) {};
		\node [style=none] (12) at (8.25, 1.25) {};
		\node [style=none] (13) at (8.25, 0.25) {};
		\node [style=none] (14) at (9.000001, 1.25) {};
		\node [style=none] (15) at (9.000001, 0.25) {};
		\node [style=none] (16) at (7.75, 0.7499999) {$T'_i$};
		\node [style=none] (17) at (0, 2.25) {};
		\node [style=none] (18) at (2.5, -0.7499999) {};
		\node [style=none] (19) at (2.5, 2.25) {};
		\node [style=none] (20) at (0.9999999, 2.75) {};
		\node [style=none] (21) at (0.9999999, 0.9999999) {};
		\node [style=none] (22) at (0.9999999, -0.25) {};
		\node [style=none] (23) at (2.5, 0.9999999) {};
		\node [style=none] (24) at (2.5, -0.25) {};
		\node [style=none] (25) at (0.9999999, 2.25) {};
		\node [style=none] (26) at (2.5, 2.75) {};
		\node [style=none] (27) at (0.9999999, -0.7499999) {};
		\node [style=cocpoint] (28) at (3.5, 2.25) {$i$};
		\node [style=none] (29) at (2.5, 2.25) {};
		\node [style=none] (30) at (3.5, 0.9999999) {};
		\node [style=none] (31) at (3.5, -0.25) {};
		\node [style=none] (32) at (1.75, 0.9999999) {$T$};
		\node [style=none] (33) at (9.75, -0) {};
	\end{pgfonlayer}
	\begin{pgfonlayer}{edgelayer}
		\draw [bend right=90, looseness=2.00] (4.center) to (1.center);
		\draw (0.center) to (3);
		\draw (4.center) to (5.center);
		\draw (6.center) to (9.center);
		\draw (9.center) to (8.center);
		\draw (8.center) to (7.center);
		\draw (7.center) to (6.center);
		\draw (12.center) to (14.center);
		\draw (13.center) to (15.center);
		\draw (20.center) to (27.center);
		\draw (27.center) to (18.center);
		\draw (18.center) to (26.center);
		\draw (26.center) to (20.center);
		\draw (17.center) to (25.center);
		\draw [bend right=90, looseness=2.00] (21.center) to (22.center);
		\draw (29.center) to (28);
		\draw (23.center) to (30.center);
		\draw (24.center) to (31.center);
	\end{pgfonlayer}
\end{tikzpicture} \]
Now consider,
\[\begin{tikzpicture}
	\begin{pgfonlayer}{nodelayer}
		\node [style=none] (0) at (0.9999999, 7) {};
		\node [style=none] (1) at (0.9999999, 8.25) {};
		\node [style=none] (2) at (2.5, 8.25) {};
		\node [style=none] (3) at (3.5, 8.25) {};
		\node [style=none] (4) at (2.5, 6.5) {};
		\node [style=none] (5) at (0.9999999, 6.5) {};
		\node [style=none] (6) at (0.9999999, 10) {};
		\node [style=cocpoint] (7) at (3.5, 9.5) {$i$};
		\node [style=none] (8) at (0.9999999, 9.5) {};
		\node [style=cpoint] (9) at (0, 9.5) {$i$};
		\node [style=none] (10) at (2.5, 9.5) {};
		\node [style=none] (11) at (2.5, 10) {};
		\node [style=none] (12) at (5, 8.5) {$=$};
		\node [style=none] (13) at (4.75, 5.5) {$\diageq$};
		\node [style=none] (14) at (7.75, 9.5) {};
		\node [style=none] (15) at (7.25, 7) {};
		\node [style=cpoint] (16) at (6.75, 9.5) {$i$};
		\node [style=none] (17) at (9.000001, 8) {};
		\node [style=cocpoint] (18) at (8.75, 9.5) {$i$};
		\node [style=none] (19) at (7.25, 8) {};
		\node [style=none] (20) at (8.25, 8) {};
		\node [style=none] (21) at (7.25, 8.5) {};
		\node [style=none] (22) at (8.25, 8.5) {};
		\node [style=none] (23) at (8.25, 6.5) {};
		\node [style=none] (24) at (7.25, 6.5) {};
		\node [style=none] (25) at (7.75, 7.5) {$T'_i$};
		\node [style=none] (26) at (9.000001, 7) {};
		\node [style=none] (27) at (8.25, 7) {};
		\node [style=cocpoint] (28) at (8.000001, 4.75) {$i$};
		\node [style=none] (29) at (7.25, 2.75) {};
		\node [style=none] (30) at (8.000001, 2.75) {};
		\node [style=none] (31) at (6.25, 2.75) {};
		\node [style=none] (32) at (6.999999, 4.75) {};
		\node [style=cpoint] (33) at (5.999999, 4.75) {$i$};
		\node [style={small box}] (34) at (6.999999, 3.75) {$T_i$};
		\node [style=none] (35) at (6.25, 3.75) {};
		\node [style=none] (36) at (8.000001, 3.75) {};
		\node [style=none] (37) at (0.9999999, 1.25) {$\implies$};
		\node [style=none] (38) at (3.5, 0.9999999) {};
		\node [style=none] (39) at (5, 0.9999999) {};
		\node [style=none] (40) at (5, -0) {};
		\node [style={small box}] (41) at (4, 0.9999999) {$T_i$};
		\node [style=none] (42) at (3.5, -0) {};
		\node [style=none] (43) at (6.499999, 0.4999999) {$=$};
		\node [style=none] (44) at (8.500001, 0.9999999) {};
		\node [style=none] (45) at (9.500001, 1.5) {};
		\node [style=none] (46) at (10.25, -0) {};
		\node [style=none] (47) at (9.500001, -0) {};
		\node [style=none] (48) at (9.500001, -0.4999999) {};
		\node [style=none] (49) at (8.500001, -0.4999999) {};
		\node [style=none] (50) at (8.500001, 1.5) {};
		\node [style=none] (51) at (9.500001, 0.9999999) {};
		\node [style=none] (52) at (9.000001, 0.4999999) {$T'_i$};
		\node [style=none] (53) at (10.25, 0.9999999) {};
		\node [style=none] (54) at (8.500001, -0) {};
		\node [style=none] (55) at (8.25, -0) {};
		\node [style=none] (56) at (8.25, 0.9999999) {};
		\node [style=none] (57) at (2.5, 7) {};
		\node [style=none] (58) at (3.5, 7) {};
		\node [style=none] (59) at (1.75, 8.25) {$T$};
		\node [style=none] (60) at (3.75, 0.9999999) {};
		\node [style=none] (61) at (3.5, -0) {};
		\node [style=none] (62) at (3.5, 0.9999999) {};
		\node [style=none] (63) at (8.25, -0) {};
		\node [style=none] (64) at (8.25, 0.9999999) {};
		\node [style=none] (65) at (11, -0) {};
	\end{pgfonlayer}
	\begin{pgfonlayer}{edgelayer}
		\draw (6.center) to (5.center);
		\draw (5.center) to (4.center);
		\draw (4.center) to (11.center);
		\draw (11.center) to (6.center);
		\draw (10.center) to (7);
		\draw (9) to (8.center);
		\draw [bend right=90, looseness=2.00] (1.center) to (0.center);
		\draw (2.center) to (3.center);
		\draw (14.center) to (18);
		\draw [bend right=90, looseness=2.00] (19.center) to (15.center);
		\draw (20.center) to (17.center);
		\draw (21.center) to (24.center);
		\draw (24.center) to (23.center);
		\draw (23.center) to (22.center);
		\draw (22.center) to (21.center);
		\draw (16) to (14.center);
		\draw (27.center) to (26.center);
		\draw (32.center) to (28);
		\draw [bend right=90, looseness=2.00] (35.center) to (31.center);
		\draw (34) to (36.center);
		\draw (33) to (32.center);
		\draw (29.center) to (30.center);
		\draw (31.center) to (29.center);
		\draw (35.center) to (34);
		\draw (41) to (39.center);
		\draw (42.center) to (40.center);
		\draw (38.center) to (41);
		\draw (51.center) to (53.center);
		\draw (50.center) to (49.center);
		\draw (49.center) to (48.center);
		\draw (48.center) to (45.center);
		\draw (45.center) to (50.center);
		\draw (47.center) to (46.center);
		\draw (56.center) to (44.center);
		\draw (55.center) to (54.center);
		\draw (57.center) to (58.center);
		\draw [bend right=90, looseness=2.00] (62.center) to (61.center);
		\draw [bend right=90, looseness=2.00] (64.center) to (63.center);
	\end{pgfonlayer}
\end{tikzpicture} \]
where the above follows the fact that $(i|i)=1$. This, in conjunction with the previous results, gives:
\[\begin{tikzpicture}
	\begin{pgfonlayer}{nodelayer}
		\node [style=none] (0) at (6.75, 2.25) {};
		\node [style=none] (1) at (7.25, -0) {};
		\node [style=none] (2) at (5.499999, 1.75) {$=$};
		\node [style=cocpoint] (3) at (8.75, 2.25) {$i$};
		\node [style=none] (4) at (7.25, 0.9999999) {};
		\node [style=none] (5) at (7.25, 0.9999999) {};
		\node [style=none] (6) at (7.25, 1.5) {};
		\node [style=none] (7) at (8.25, 1.5) {};
		\node [style=none] (8) at (8.25, -0.4999999) {};
		\node [style=none] (9) at (7.25, -0.4999999) {};
		\node [style=none] (10) at (7.25, -0) {};
		\node [style=none] (11) at (7.25, 0.9999999) {};
		\node [style=none] (12) at (8.25, 0.9999999) {};
		\node [style=none] (13) at (8.25, -0) {};
		\node [style=none] (14) at (9.000001, 0.9999999) {};
		\node [style=none] (15) at (9.000001, -0) {};
		\node [style=none] (16) at (7.75, 0.4999999) {$T'_i$};
		\node [style=none] (17) at (0, 2) {};
		\node [style=none] (18) at (2.5, -0.9999999) {};
		\node [style=none] (19) at (2.5, 2) {};
		\node [style=none] (20) at (0.9999999, 2.5) {};
		\node [style=none] (21) at (0.9999999, 0.7499999) {};
		\node [style=none] (22) at (0.9999999, -0.4999999) {};
		\node [style=none] (23) at (2.5, 0.7499999) {};
		\node [style=none] (24) at (2.5, -0.4999999) {};
		\node [style=none] (25) at (0.9999999, 2) {};
		\node [style=none] (26) at (2.5, 2.5) {};
		\node [style=none] (27) at (0.9999999, -0.9999999) {};
		\node [style=cocpoint] (28) at (3.5, 2) {$i$};
		\node [style=none] (29) at (2.5, 2) {};
		\node [style=none] (30) at (3.5, 0.7499999) {};
		\node [style=none] (31) at (3.5, -0.4999999) {};
		\node [style=none] (32) at (1.75, 0.7499999) {$T$};
		\node [style=none] (33) at (12.25, -0) {};
		\node [style=none] (34) at (14, 0.9999999) {};
		\node [style=none] (35) at (12.25, -0) {};
		\node [style=none] (36) at (14, -0) {};
		\node [style=cocpoint] (37) at (13.75, 2.25) {$i$};
		\node [style=none] (38) at (12.25, 0.9999999) {};
		\node [style={small box}] (39) at (13, 0.9999999) {$T_i$};
		\node [style=none] (40) at (11.75, 2.25) {};
		\node [style=none] (41) at (12.25, 0.9999999) {};
		\node [style=none] (42) at (13.25, -0) {};
		\node [style=none] (43) at (12.25, 0.9999999) {};
		\node [style=none] (44) at (10.5, 1.75) {$=$};
		\node [style=none] (45) at (14.75, -0) {};
	\end{pgfonlayer}
	\begin{pgfonlayer}{edgelayer}
		\draw [bend right=90, looseness=2.00] (4.center) to (1.center);
		\draw (0.center) to (3);
		\draw (4.center) to (5.center);
		\draw (6.center) to (9.center);
		\draw (9.center) to (8.center);
		\draw (8.center) to (7.center);
		\draw (7.center) to (6.center);
		\draw (12.center) to (14.center);
		\draw (13.center) to (15.center);
		\draw (20.center) to (27.center);
		\draw (27.center) to (18.center);
		\draw (18.center) to (26.center);
		\draw (26.center) to (20.center);
		\draw (17.center) to (25.center);
		\draw [bend right=90, looseness=2.00] (21.center) to (22.center);
		\draw (29.center) to (28);
		\draw (23.center) to (30.center);
		\draw (24.center) to (31.center);
		\draw [bend right=90, looseness=2.00] (38.center) to (33.center);
		\draw (40.center) to (37);
		\draw (38.center) to (43.center);
		\draw (42.center) to (36.center);
		\draw (38.center) to (39);
		\draw (39) to (34.center);
		\draw (33.center) to (42.center);
	\end{pgfonlayer}
\end{tikzpicture} \]
\endproof

\begin{lemma}$\exists\ T'$ such that,
\[\begin{tikzpicture}
	\begin{pgfonlayer}{nodelayer}
		\node [style=none] (0) at (6.75, 0.7499999) {};
		\node [style=none] (1) at (1.5, 0.7499999) {$T$};
		\node [style=none] (2) at (2, -0.7499999) {};
		\node [style=none] (3) at (0.9999999, 1.75) {};
		\node [style=none] (4) at (4.5, 0.7499999) {$=$};
		\node [style=none] (5) at (0.9999999, 0.7499999) {};
		\node [style=none] (6) at (2.750001, -0.25) {};
		\node [style=none] (7) at (0.9999999, 2.25) {};
		\node [style=none] (8) at (0.9999999, -0.25) {};
		\node [style=none] (9) at (0.7499999, -0.25) {};
		\node [style=none] (10) at (2.750001, 0.7499999) {};
		\node [style=none] (11) at (2.750001, -0.25) {};
		\node [style=none] (12) at (2, -0.25) {};
		\node [style=none] (13) at (8.000001, 0.7499999) {};
		\node [style=none] (14) at (8.75, -0.25) {};
		\node [style=none] (15) at (0.7499999, 0.7499999) {};
		\node [style=none] (16) at (8.25, -0.25) {};
		\node [style=none] (17) at (2, 2.25) {};
		\node [style=none] (18) at (2, 1.75) {};
		\node [style=none] (19) at (2, 0.7499999) {};
		\node [style=none] (20) at (6.75, -0.25) {};
		\node [style=none] (21) at (0.9999999, -0.7499999) {};
		\node [style=cpoint] (22) at (0, 1.75) {$\sigma$};
		\node [style=none] (23) at (2.750001, 1.75) {};
		\node [style=none] (24) at (6.999999, 0.25) {};
		\node [style=none] (25) at (8.000001, 0.25) {};
		\node [style=cpoint] (26) at (5.999999, 1.75) {$\sigma$};
		\node [style=none] (27) at (8.000001, 1.75) {};
		\node [style=none] (28) at (6.999999, 2.25) {};
		\node [style=none] (29) at (8.000001, 2.25) {};
		\node [style=none] (30) at (8.75, 1.75) {};
		\node [style=none] (31) at (8.75, 0.7499999) {};
		\node [style=none] (32) at (6.999999, 0.7499999) {};
		\node [style=none] (33) at (6.999999, 1.75) {};
		\node [style=none] (34) at (7.499999, 1.25) {$T'$};
		\node [style=none] (35) at (10.5, -0) {$\forall |\sigma)$};
		\node [style=none] (36) at (11.25, -0) {};
	\end{pgfonlayer}
	\begin{pgfonlayer}{edgelayer}
		\draw (7.center) to (17.center);
		\draw (17.center) to (2.center);
		\draw (2.center) to (21.center);
		\draw [bend right=90, looseness=2.25] (15.center) to (9.center);
		\draw (7.center) to (21.center);
		\draw (15.center) to (5.center);
		\draw (9.center) to (8.center);
		\draw (12.center) to (11.center);
		\draw (19.center) to (10.center);
		\draw [bend right=90, looseness=2.25] (0.center) to (20.center);
		\draw (20.center) to (14.center);
		\draw (28.center) to (24.center);
		\draw (24.center) to (25.center);
		\draw (25.center) to (29.center);
		\draw (29.center) to (28.center);
		\draw (26) to (33.center);
		\draw (0.center) to (32.center);
		\draw (13.center) to (31.center);
		\draw (27.center) to (30.center);
		\draw (22) to (3.center);
		\draw (18.center) to (23.center);
	\end{pgfonlayer}
\end{tikzpicture} \]\
\end{lemma}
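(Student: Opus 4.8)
The plan is to read $T'$ off from the uniqueness of purification, after exhibiting the two sides of the claimed identity as purifications of the same reference state. Label the three wires control ($1$), target ($2$) and reference ($3$), so that the dynamically faithful state $|\psi)$ sits on $23$ while $T$ acts on all of $123$. The key claim is that, although $T$ formally touches the reference, it leaves the reference marginal untouched:
\[ \detEff_1\detEff_2\, T\,(|\sigma)_1\otimes|\psi)_{23})=\detEff_1\detEff_2\,(|\sigma)_1\otimes|\psi)_{23}) . \]
Granting this, $T(|\sigma)_1\otimes|\psi)_{23})$ and the bare $|\sigma)_1\otimes|\psi)_{23}$ are both pure (composites of pure states are pure and $T$ is reversible) and share their marginal on $3$; they are therefore two purifications of the same reference state with purifying system $12$, and Def.~(\ref{Pure}) furnishes a reversible $T'$ on $12$ relating them --- which is exactly the asserted equation.

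The crux is the marginal claim, and this is where the preceding lemma enters. Discarding the control output and applying superposition preservation (Lemma~\ref{SP}), together with $\sum_i(i|=\detEff$ for the maximal distinguishable set $\{|i)\}$ (Appendix~\ref{Sym}), gives
\[ \detEff_1\, T\,(|\sigma)_1\otimes|\psi)_{23})=\sum_i (i|\sigma)\, T_i|\psi)_{23} , \]
with $T_i$ acting on the target $2$. Discarding the target and using that reversible transformations preserve the deterministic effect ($\detEff\, T_i=\detEff$, by causality) collapses each summand to $(i|\sigma)\,\detEff_2|\psi)_{23}$; since $\sum_i(i|=\detEff$ and $|\sigma)$ is normalised we have $\sum_i(i|\sigma)=1$, so the sum equals $\detEff_2|\psi)_{23}$, which is likewise the reference marginal of the bare input. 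I expect this to be the main obstacle: it is the step that turns the a priori tripartite action of $T$ into one carried by control and target alone, and the only point at which the fine structure of $T$ (through Lemma~\ref{SP}) is used.

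One subtlety remains: the lemma asks for a single $T'$ valid for all $|\sigma)$, whereas applying purification uniqueness separately for each $\sigma$ would a priori yield a $\sigma$-dependent transformation. To remove this I would run the argument one level up, replacing $|\sigma)$ by a dynamically faithful state $|\omega)_{11'}$ of the control and carrying the spectator $1'$. The same computation shows that $(T\otimes\mathbb{I}_{1'})(|\omega)_{11'}\otimes|\psi)_{23})$ and $|\omega)_{11'}\otimes|\psi)_{23}$ agree on their marginal over $1'3$, so purification uniqueness now returns one reversible $T'$ on $12$ with $(T\otimes\mathbb{I}_{1'})(|\omega)\otimes|\psi))=(T'\otimes\mathbb{I}_{1'3})(|\omega)\otimes|\psi))$. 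Composing both sides with effects on $1'$ that steer $|\omega)_{11'}$ onto an arbitrary control state $|\sigma)_1$ --- possible, and informationally sufficient, precisely because $|\omega)$ is faithful --- specialises this one identity to every $|\sigma)$ and proves the lemma. Equivalently, the whole step may be packaged directly as the transformation form of purification, Eq.~(\ref{Dilation}), whose statement already quantifies over all input states.
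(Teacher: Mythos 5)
Your proposal is correct and follows essentially the same route as the paper: the marginal identity via Lemma~\ref{SP}, $\sum_i(i|=\detEff$ and causality, then purification (Eq.~(\ref{Dilation})) to extract a reversible $T'$ on control and target, with a dynamically faithful state on the control input guaranteeing a single $\sigma$-independent $T'$. The paper simply runs the whole computation with the faithful control state inserted from the outset, which is exactly the packaging you describe in your final paragraph.
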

\proof
\[\begin{tikzpicture}
	\begin{pgfonlayer}{nodelayer}
		\node [style=none] (0) at (10.75, 1.25) {};
		\node [style=none] (1) at (1.5, 11.25) {$T$};
		\node [style=none] (2) at (2, 9.75) {};
		\node [style=none] (3) at (1, 12.25) {};
		\node [style=none] (4) at (4.5, 11) {$=$};
		\node [style=none] (5) at (1, 11.25) {};
		\node [style=trace] (6) at (2.75, 12.25) {};
		\node [style=none] (7) at (1, 12.75) {};
		\node [style=none] (8) at (1, 10.25) {};
		\node [style=none] (9) at (0.5000001, 10.25) {};
		\node [style=trace] (10) at (2.75, 11.25) {};
		\node [style=none] (11) at (2, 12.25) {};
		\node [style=none] (12) at (12.25, 1.25) {};
		\node [style=none] (13) at (13, 0.2500001) {};
		\node [style=none] (14) at (0.5000001, 11.25) {};
		\node [style=none] (15) at (12.5, 0.2500001) {};
		\node [style=none] (16) at (2, 12.75) {};
		\node [style=none] (17) at (2, 10.25) {};
		\node [style=none] (18) at (2, 11.25) {};
		\node [style=none] (19) at (10.75, 0.2500001) {};
		\node [style=none] (20) at (1, 9.75) {};
		\node [style=none] (21) at (0.5000001, 12.25) {};
		\node [style=none] (22) at (2.75, 10.25) {};
		\node [style=none] (23) at (11.25, 0.7500002) {};
		\node [style=none] (24) at (12.25, 0.7500002) {};
		\node [style=none] (25) at (10.75, 2.25) {};
		\node [style=none] (26) at (12.25, 2.25) {};
		\node [style=none] (27) at (11.25, 2.75) {};
		\node [style=none] (28) at (12.25, 2.75) {};
		\node [style=none] (29) at (13, 2.25) {};
		\node [style=none] (30) at (13, 1.25) {};
		\node [style=none] (31) at (11.25, 1.25) {};
		\node [style=none] (32) at (11.25, 2.25) {};
		\node [style=none] (33) at (11.75, 1.75) {$T'$};
		\node [style=cocpoint] (34) at (15.5, 12.75) {$i$};
		\node [style=none] (35) at (15.75, 10.5) {};
		\node [style=none] (36) at (16, 9.5) {};
		\node [style=none] (37) at (16.25, 10.5) {};
		\node [style=trace] (38) at (16, 11.5) {};
		\node [style=none] (39) at (14, 11.5) {};
		\node [style=none] (40) at (14, 10.5) {};
		\node [style={small box}] (41) at (14.75, 11.5) {$T_i$};
		\node [style=none] (42) at (0.5000001, 9.25) {};
		\node [style=none] (43) at (0.5000001, 12.25) {};
		\node [style=none] (44) at (2.75, 9.25) {};
		\node [style=none] (45) at (14, 9.5) {};
		\node [style=none] (46) at (14, 12.75) {};
		\node [style=none] (47) at (8.5, 11.25) {$T$};
		\node [style=none] (48) at (8.999999, 13) {};
		\node [style=none] (49) at (7.999999, 13) {};
		\node [style=trace] (50) at (9.75, 11.25) {};
		\node [style=none] (51) at (8.999999, 12.5) {};
		\node [style=none] (52) at (9.75, 10.25) {};
		\node [style=none] (53) at (8.999999, 11.25) {};
		\node [style=cocpoint] (54) at (9.75, 12.5) {$i$};
		\node [style=none] (55) at (9.75, 9.25) {};
		\node [style=none] (56) at (8.999999, 10.25) {};
		\node [style=none] (57) at (7.999999, 12.5) {};
		\node [style=none] (58) at (7.5, 11.25) {};
		\node [style=none] (59) at (7.5, 12.5) {};
		\node [style=none] (60) at (7.999999, 11.25) {};
		\node [style=none] (61) at (7.5, 10.25) {};
		\node [style=none] (62) at (8.999999, 9.75) {};
		\node [style=none] (63) at (7.999999, 10.25) {};
		\node [style=none] (64) at (7.999999, 9.75) {};
		\node [style=none] (65) at (7.5, 12.5) {};
		\node [style=none] (66) at (7.5, 9.25) {};
		\node [style=none] (67) at (5.499999, 11) {$\sum_i$};
		\node [style=none] (68) at (12, 11) {$\sum_i$};
		\node [style=none] (69) at (11, 11) {$=$};
		\node [style=none] (70) at (13, 6.75) {};
		\node [style=none] (71) at (13, 5.75) {};
		\node [style=trace] (72) at (14, 6.75) {};
		\node [style=none] (73) at (14, 5.75) {};
		\node [style=none] (74) at (11, 6.75) {$=$};
		\node [style=none] (75) at (3, 3.75) {Purification $\implies$};
		\node [style=none] (76) at (5.75, 2.25) {};
		\node [style=none] (77) at (5.75, 1.25) {};
		\node [style=none] (78) at (4.75, 2.75) {};
		\node [style=none] (79) at (6.5, 2.25) {};
		\node [style=none] (80) at (5.75, 0.2500001) {};
		\node [style=none] (81) at (4.75, 1.25) {};
		\node [style=none] (82) at (4.75, -0.2499996) {};
		\node [style=none] (83) at (4.25, 1.25) {};
		\node [style=none] (84) at (5.75, -0.2499996) {};
		\node [style=none] (85) at (4.75, 0.2500001) {};
		\node [style=none] (86) at (4.25, 0.2500001) {};
		\node [style=none] (87) at (6.5, -0.7499997) {};
		\node [style=none] (88) at (6.5, 0.2500001) {};
		\node [style=none] (89) at (5.25, 1.25) {$T$};
		\node [style=none] (90) at (4.25, 2.25) {};
		\node [style=none] (91) at (4.25, 2.25) {};
		\node [style=none] (92) at (4.25, -0.7499997) {};
		\node [style=none] (93) at (6.5, 1.25) {};
		\node [style=none] (94) at (4.75, 2.25) {};
		\node [style=none] (95) at (5.75, 2.75) {};
		\node [style=none] (96) at (8.000001, 1.25) {$=$};
		\node [style=none] (97) at (10.75, 2.25) {};
		\node [style=none] (98) at (10.75, -0.7499997) {};
		\node [style=none] (99) at (10.75, 2.25) {};
		\node [style=none] (100) at (13, -0.7499997) {};
		\node [style=none] (101) at (13.75, 0.5000001) {};
		\node [style=none] (102) at (13, 7.75) {};
		\node [style=none] (103) at (14, 4.75) {};
		\node [style=none] (104) at (13, 4.75) {};
		\node [style=trace] (105) at (14, 7.75) {};
	\end{pgfonlayer}
	\begin{pgfonlayer}{edgelayer}
		\draw (7.center) to (16.center);
		\draw (16.center) to (2.center);
		\draw (2.center) to (20.center);
		\draw [bend right=90, looseness=2.25] (14.center) to (9.center);
		\draw (7.center) to (20.center);
		\draw (14.center) to (5.center);
		\draw (9.center) to (8.center);
		\draw (18.center) to (10);
		\draw [bend right=90, looseness=2.25] (0.center) to (19.center);
		\draw (19.center) to (13.center);
		\draw (27.center) to (23.center);
		\draw (23.center) to (24.center);
		\draw (24.center) to (28.center);
		\draw (28.center) to (27.center);
		\draw (25.center) to (32.center);
		\draw (0.center) to (31.center);
		\draw (12.center) to (30.center);
		\draw (26.center) to (29.center);
		\draw (21.center) to (3.center);
		\draw (17.center) to (22.center);
		\draw (11.center) to (6);
		\draw [bend right=90, looseness=2.25] (39.center) to (40.center);
		\draw (40.center) to (37.center);
		\draw (39.center) to (41);
		\draw (41) to (38);
		\draw [bend left=90, looseness=1.75] (42.center) to (43.center);
		\draw (42.center) to (44.center);
		\draw [bend left=90, looseness=1.50] (45.center) to (46.center);
		\draw (46.center) to (34);
		\draw (45.center) to (36.center);
		\draw (49.center) to (48.center);
		\draw (48.center) to (62.center);
		\draw (62.center) to (64.center);
		\draw [bend right=90, looseness=1.75] (58.center) to (61.center);
		\draw (49.center) to (64.center);
		\draw (58.center) to (60.center);
		\draw (61.center) to (63.center);
		\draw (53.center) to (50);
		\draw (59.center) to (57.center);
		\draw (56.center) to (52.center);
		\draw (51.center) to (54);
		\draw [bend left=90, looseness=1.25] (66.center) to (65.center);
		\draw (66.center) to (55.center);
		\draw [bend right=90, looseness=2.25] (70.center) to (71.center);
		\draw (70.center) to (72);
		\draw (71.center) to (73.center);
		\draw (78.center) to (95.center);
		\draw (95.center) to (84.center);
		\draw (84.center) to (82.center);
		\draw [bend right=90, looseness=2.25] (83.center) to (86.center);
		\draw (78.center) to (82.center);
		\draw (83.center) to (81.center);
		\draw (86.center) to (85.center);
		\draw (77.center) to (93.center);
		\draw (91.center) to (94.center);
		\draw (80.center) to (88.center);
		\draw (76.center) to (79.center);
		\draw [bend left=90, looseness=1.75] (92.center) to (90.center);
		\draw (92.center) to (87.center);
		\draw [bend left=90, looseness=1.75] (98.center) to (99.center);
		\draw (98.center) to (100.center);
		\draw [bend right=90, looseness=1.50] (102.center) to (104.center);
		\draw (102.center) to (105);
		\draw (104.center) to (103.center);
	\end{pgfonlayer}
\end{tikzpicture}
\]
Dynamic faithfulness then gives the result.
\endproof
\begin{theorem} $T'$ is a controlled transformation, $T'=C\{T_i\}$.\label{Existance}
\end{theorem}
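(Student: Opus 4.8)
The plan is to read off the action of $T'$ from the two preceding lemmas together with the dynamical faithfulness of the purification of the maximally mixed state. Recall that the immediately preceding lemma produced a transformation $T'$ on the control--target pair satisfying, for every control state $|\sigma)$, the identity in which $T$ applied to $|\sigma)$ and one leg of a dynamically faithful state equals $(T'\otimes\mathbb{I})$ applied to that same input, the purifying leg passing through untouched. I would take this relation as the starting point and simply specialise it and then ``de-faithful'' it.

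First I would set the control input to one of the distinguishable states $|i)$. Because the reversible $T$ furnished by strong symmetry was constructed to carry the $i$-th element of $\mathcal{B}_1$ to the $i$-th element of $\mathcal{B}_2$, evaluating $T$ on $|i)$ together with the dynamically faithful state returns $|i)$ on the control output and $T_i$ applied to the target leg. Substituting this into the relation from the preceding lemma yields the key equation: $(T'\otimes\mathbb{I})$ acting on $|i)$ and one leg of the faithful state coincides, leg-by-leg, with the channel that emits $|i)$ on the control and applies $T_i$ to that leg.

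The remaining step is to promote this equality ``on the leg of the faithful state'' to an equality on an arbitrary target state $|\sigma)$. Fixing $i$, I would regard $T'(|i)\otimes\,\cdot\,)$ and $|i)\otimes T_i(\,\cdot\,)$ as two channels from the target system into the control--target pair; the equation just obtained says precisely that these two channels return the same bipartite state when fed the target leg of the dynamically faithful state. By the defining property of dynamically faithful states (the faithfulness diagram established in the preliminary appendix), two transformations agreeing on such a state agree on all inputs, so $T'(|i)\otimes|\sigma)) = |i)\otimes T_i|\sigma)$ for every $|\sigma)$. As $i$ was arbitrary, this is exactly the defining equation of the controlled transformation, Eq.~(\ref{Control}), whence $T' = C\{T_i\}$.

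The work here is light, since the substance lives in the earlier lemmas; the two places that need care are both bookkeeping. I expect the main subtlety to be the \emph{direction} in which dynamical faithfulness is invoked: the faithful state must be treated as faithful on the \emph{target} leg, so that an arbitrary target state is genuinely probed through it while the control output $|i)$ is carried along as a passive factor of the output. A secondary point is to confirm that $T$ pairs the two tuples index-for-index, so that the control value $|i)$ is correlated with $T_i$ rather than some permuted $T_{\pi(i)}$; this is guaranteed by the explicit construction of $T$ via strong symmetry and by the uniqueness of the distinguishing measurement (Appendix~(\ref{Sym})). No structural input beyond the three standing assumptions is required.
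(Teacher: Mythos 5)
Your proposal is correct and follows essentially the same route as the paper's own proof: specialise the $T$-versus-$T'$ relation from the preceding lemma to the control state $|i)$, evaluate $T$ on $|i)$ tensored with the dynamically faithful state using its defining property from strong symmetry, and then invoke dynamical faithfulness on the target leg to upgrade the resulting equality to arbitrary target inputs $|\sigma)$, recovering Eq.~(\ref{Control}). The two points of care you flag -- the direction in which faithfulness is applied and the index-for-index pairing of $|i)$ with $T_i$ -- are exactly the bookkeeping the paper's diagrammatic proof handles implicitly.
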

\proof
\[\begin{tikzpicture}
	\begin{pgfonlayer}{nodelayer}
		\node [style=none] (0) at (0.5000004, 5.75) {};
		\node [style=none] (1) at (7.249999, 5.75) {$T$};
		\node [style=none] (2) at (7.749999, 4.25) {};
		\node [style=none] (3) at (6.750001, 6.75) {};
		\node [style=none] (4) at (6.750001, 5.75) {};
		\node [style=none] (5) at (6.750001, 7.25) {};
		\node [style=none] (6) at (6.750001, 4.75) {};
		\node [style=none] (7) at (6.25, 4.75) {};
		\node [style=none] (8) at (7.749999, 6.75) {};
		\node [style=none] (9) at (2, 5.75) {};
		\node [style=none] (10) at (2.75, 4.75) {};
		\node [style=none] (11) at (6.25, 5.75) {};
		\node [style=none] (12) at (2.25, 4.75) {};
		\node [style=none] (13) at (7.749999, 7.25) {};
		\node [style=none] (14) at (7.749999, 4.75) {};
		\node [style=none] (15) at (7.749999, 5.75) {};
		\node [style=none] (16) at (0.5000004, 4.75) {};
		\node [style=none] (17) at (6.750001, 4.25) {};
		\node [style=cpoint] (18) at (5.75, 6.75) {$i$};
		\node [style=none] (19) at (8.500001, 4.75) {};
		\node [style=none] (20) at (0.9999995, 5.25) {};
		\node [style=none] (21) at (2, 5.25) {};
		\node [style=cpoint] (22) at (0, 6.75) {$i$};
		\node [style=none] (23) at (2, 6.75) {};
		\node [style=none] (24) at (0.9999995, 7.25) {};
		\node [style=none] (25) at (2, 7.25) {};
		\node [style=none] (26) at (2.75, 6.75) {};
		\node [style=none] (27) at (2.75, 5.75) {};
		\node [style=none] (28) at (0.9999995, 5.75) {};
		\node [style=none] (29) at (0.9999995, 6.75) {};
		\node [style=none] (30) at (1.5, 6.25) {$T'$};
		\node [style=none] (31) at (4.249999, 6) {$=$};
		\node [style=none] (32) at (8.500001, 6.75) {};
		\node [style=none] (33) at (8.500001, 5.75) {};
		\node [style=none] (34) at (10.25, 6) {$=$};
		\node [style=none] (35) at (14.5, 4.75) {};
		\node [style=cpoint] (36) at (11.75, 6.75) {$i$};
		\node [style=none] (37) at (14.5, 6.75) {};
		\node [style=none] (38) at (12.25, 4.75) {};
		\node [style=none] (39) at (14.5, 5.75) {};
		\node [style=none] (40) at (12.25, 5.75) {};
		\node [style={small box}] (41) at (13.25, 5.75) {$T_i$};
		\node [style=none] (42) at (5.000001, 2.75) {Dynamically faithful state $\implies$};
		\node [style=none] (43) at (2.25, 1.25) {};
		\node [style=none] (44) at (1.25, -0.7500005) {};
		\node [style=none] (45) at (1.25, 1.25) {};
		\node [style=none] (46) at (1.25, -0.2500002) {};
		\node [style=none] (47) at (2.25, 1.75) {};
		\node [style=none] (48) at (1.25, 1.75) {};
		\node [style=none] (49) at (1.75, 0.5000004) {$T'$};
		\node [style=none] (50) at (2.25, -0.7500005) {};
		\node [style=none] (51) at (3, -0.2500002) {};
		\node [style=none] (52) at (2.25, -0.2500002) {};
		\node [style=none] (53) at (3, 1.25) {};
		\node [style=cpoint] (54) at (0.2500002, -0.2500002) {$\sigma$};
		\node [style=cpoint] (55) at (0.2500002, 1.25) {$i$};
		\node [style=none] (56) at (8.75, -0.2500002) {};
		\node [style={small box}] (57) at (7.5, -0.2500002) {$T_i$};
		\node [style=none] (58) at (8.75, 1.25) {};
		\node [style=cpoint] (59) at (5.999999, -0.2500002) {$\sigma$};
		\node [style=cpoint] (60) at (5.999999, 1.25) {$i$};
		\node [style=none] (61) at (4.5, 0.5000004) {$=$};
		\node [style=none] (62) at (10.75, -0) {$\forall |\sigma)$};
	\end{pgfonlayer}
	\begin{pgfonlayer}{edgelayer}
		\draw (5.center) to (13.center);
		\draw (13.center) to (2.center);
		\draw (2.center) to (17.center);
		\draw [bend right=90, looseness=2.25] (11.center) to (7.center);
		\draw (5.center) to (17.center);
		\draw (11.center) to (4.center);
		\draw (7.center) to (6.center);
		\draw [bend right=90, looseness=2.25] (0.center) to (16.center);
		\draw (16.center) to (10.center);
		\draw (24.center) to (20.center);
		\draw (20.center) to (21.center);
		\draw (21.center) to (25.center);
		\draw (25.center) to (24.center);
		\draw (22) to (29.center);
		\draw (0.center) to (28.center);
		\draw (9.center) to (27.center);
		\draw (23.center) to (26.center);
		\draw (18) to (3.center);
		\draw (14.center) to (19.center);
		\draw (8.center) to (32.center);
		\draw (15.center) to (33.center);
		\draw [bend right=90, looseness=2.25] (40.center) to (38.center);
		\draw (36) to (37.center);
		\draw (38.center) to (35.center);
		\draw (40.center) to (41);
		\draw (41) to (39.center);
		\draw (48.center) to (44.center);
		\draw (44.center) to (50.center);
		\draw (50.center) to (47.center);
		\draw (47.center) to (48.center);
		\draw (55) to (45.center);
		\draw (54) to (46.center);
		\draw (52.center) to (51.center);
		\draw (43.center) to (53.center);
		\draw (60) to (58.center);
		\draw (59) to (57);
		\draw (57) to (56.center);
	\end{pgfonlayer}
\end{tikzpicture} \] which is the defining characteristic of $C\{T_i\}$.
\endproof

\section{Superposition preservation\label{SuperpositionProof}}
Lem.~\ref{SP} already gives some notion of superposition preservation, we can use our other results above to extend this.
\proof\[\begin{tikzpicture}
	\begin{pgfonlayer}{nodelayer}
		\node [style=none] (0) at (0.7500001, 0.5000003) {};
		\node [style=none] (1) at (7.999999, 0.5000003) {$T$};
		\node [style=none] (2) at (8.5, -1.000001) {};
		\node [style=none] (3) at (7.5, 1.5) {};
		\node [style=none] (4) at (7.5, 0.5000003) {};
		\node [style=none] (5) at (7.5, 2) {};
		\node [style=none] (6) at (7.5, -0.5000003) {};
		\node [style=none] (7) at (7, -0.5000003) {};
		\node [style=none] (8) at (8.5, 1.5) {};
		\node [style=none] (9) at (2.5, 0.5000003) {};
		\node [style=none] (10) at (3, -0.5000003) {};
		\node [style=none] (11) at (7, 0.5000003) {};
		\node [style=none] (12) at (2.5, -0.5000003) {};
		\node [style=none] (13) at (8.5, 2) {};
		\node [style=none] (14) at (8.5, -0.5000003) {};
		\node [style=none] (15) at (8.5, 0.5000003) {};
		\node [style=none] (16) at (0.7500001, -0.5000003) {};
		\node [style=none] (17) at (7.5, -1.000001) {};
		\node [style=none] (18) at (9.25, -0.5000003) {};
		\node [style=none] (19) at (1.000001, -0) {};
		\node [style=none] (20) at (2.5, -0) {};
		\node [style=none] (21) at (2.5, 1.5) {};
		\node [style=none] (22) at (1.000001, 2) {};
		\node [style=none] (23) at (2.5, 2) {};
		\node [style=none] (24) at (3, 0.5000003) {};
		\node [style=none] (25) at (1.000001, 0.5000003) {};
		\node [style=none] (26) at (1.000001, 1.5) {};
		\node [style=none] (27) at (1.749999, 1.5) {$C$};
		\node [style=none] (28) at (4.75, 0.7500001) {$=$};
		\node [style=none] (29) at (9.25, 0.5000003) {};
		\node [style=none] (30) at (11.25, 0.7500001) {$=$};
		\node [style=none] (31) at (15, -0.5000003) {};
		\node [style=none] (32) at (13.5, -0.5000003) {};
		\node [style=none] (33) at (15, 0.5000003) {};
		\node [style=none] (34) at (13.5, 0.5000003) {};
		\node [style={small box}] (35) at (14, 0.5000003) {$T_i$};
		\node [style=none] (36) at (0.2500005, 1.5) {};
		\node [style=none] (37) at (6.5, 1.5) {};
		\node [style=none] (38) at (13, 1.5) {};
		\node [style=none] (39) at (1.749999, 0.5000003) {$\{T_i\}$};
		\node [style=cocpoint] (40) at (3, 1.5) {$i$};
		\node [style=cocpoint] (41) at (15, 1.5) {$i$};
		\node [style=cocpoint] (42) at (9.25, 1.5) {$i$};
		\node [style=none] (43) at (15.5, -0) {};
	\end{pgfonlayer}
	\begin{pgfonlayer}{edgelayer}
		\draw (5.center) to (13.center);
		\draw (13.center) to (2.center);
		\draw (2.center) to (17.center);
		\draw [bend right=90, looseness=2.25] (11.center) to (7.center);
		\draw (5.center) to (17.center);
		\draw (11.center) to (4.center);
		\draw (7.center) to (6.center);
		\draw [bend right=90, looseness=2.25] (0.center) to (16.center);
		\draw (16.center) to (10.center);
		\draw (22.center) to (19.center);
		\draw (19.center) to (20.center);
		\draw (20.center) to (23.center);
		\draw (23.center) to (22.center);
		\draw (0.center) to (25.center);
		\draw (9.center) to (24.center);
		\draw (14.center) to (18.center);
		\draw (15.center) to (29.center);
		\draw [bend right=90, looseness=2.25] (34.center) to (32.center);
		\draw (32.center) to (31.center);
		\draw (34.center) to (35);
		\draw (35) to (33.center);
		\draw (36.center) to (26.center);
		\draw (37.center) to (3.center);
		\draw (21.center) to (40);
		\draw (8.center) to (42);
		\draw (38.center) to (41);
	\end{pgfonlayer}
\end{tikzpicture} \]
The first equality uses Thm.~\ref{Existance} and the second Lem.~\ref{SP}. Using the result of dynamically faithful states then implies,
\[\begin{tikzpicture}
	\begin{pgfonlayer}{nodelayer}
		\node [style=none] (0) at (2.75, 0.9999995) {};
		\node [style=none] (1) at (1.25, -0.5000004) {};
		\node [style=none] (2) at (1.25, 0.9999995) {};
		\node [style=none] (3) at (1.25, -0) {};
		\node [style=none] (4) at (2.75, 1.5) {};
		\node [style=none] (5) at (1.25, 1.5) {};
		\node [style=none] (6) at (2, 0.9999995) {$C$};
		\node [style=none] (7) at (2.75, -0.5000004) {};
		\node [style=none] (8) at (3.749999, -0) {};
		\node [style=none] (9) at (2.75, -0) {};
		\node [style=cpoint] (10) at (0.2500002, -0) {$\sigma$};
		\node [style=none] (11) at (9.750001, -0) {};
		\node [style={small box}] (12) at (8.25, -0) {$T_i$};
		\node [style=cpoint] (13) at (6.750001, -0) {$\sigma$};
		\node [style=none] (14) at (5.25, 0.5000004) {$=$};
		\node [style=none] (15) at (0.2500002, 0.9999995) {};
		\node [style=none] (16) at (6.750001, 0.9999995) {};
		\node [style=none] (17) at (2, -0) {$\{T_i\}$};
		\node [style=cocpoint] (18) at (3.749999, 0.9999995) {$i$};
		\node [style=cocpoint] (19) at (9.750001, 0.9999995) {$i$};
		\node [style=none] (20) at (11.75, -0) {$\forall |\sigma)$};
	\end{pgfonlayer}
	\begin{pgfonlayer}{edgelayer}
		\draw (5.center) to (1.center);
		\draw (1.center) to (7.center);
		\draw (7.center) to (4.center);
		\draw (4.center) to (5.center);
		\draw (10) to (3.center);
		\draw (9.center) to (8.center);
		\draw (13) to (12);
		\draw (12) to (11.center);
		\draw (15.center) to (2.center);
		\draw (0.center) to (18);
		\draw (16.center) to (19);
	\end{pgfonlayer}
\end{tikzpicture} \]\endproof

This actually only proves the existence of a controlled transformation that preserves superpositions, it is simple to show that it must be true for all controlled transformations using an argument analogous to Lem.~\ref{SP}.

\section{Definition of higher-order interference} \label{Higher-order-App}

The original definition of higher-order interference was in the framework of quantum measure theory \cite{sorkin1994quantum, sorkin1995quantum}. The definition revolves around the concepts of `histories' and the `quantum measure'. Histories correspond to paths through space-time, a set of histories $A$ is any collection of these paths. We will be concerned with sets of histories with some initial condition $s$ and some final condition $e$ along with an intermediate condition $i$ that `the history passes through slit $i$'. We label these sets of histories $A^{se}_i$. In an interference experiment it is common to have some way to create a path difference between the different slits, either by introducing some `phase shifter' or by moving the final detection point, we label this data $t$.

The quantum measure $\mu$ associates some probability to each set of histories, which should be thought of as the probability that a particle `has a history from that set'. In general $\mu$ will depend on the experimental control $t$.

The existence of higher-order interference in this framework is as follows:

\begin{definition}
$n$-th order interference $\iff$ $\exists s,e$ s.t.\[\mu\left[\bigcup_{i\in\mathds{P}}A^{se}_i\right](t) \neq \sum_{I\subset \mathds{P}}(-1)^{n-|I|+1}\mu\left[\bigcup_{i\in I}A^{se}_{i}\right](t)\]
\end{definition}

Given this definition we can provide a translation to the operational definition that we are using,

\vspace{0.5cm}

\begin{tabular}{|l|c|c|}
  \hline
     & QMT & GPT \\\hline
  Initial condition & $s$ & $|s)$ \\
  Final condition & $e$ & $(e|$ \\
  Experimental control & $t$ & $T$ \\
  Probability of path $i$ & $\mu[A^{se}_i](t)$ & $\mathcal{C}_{se_{\{i\}}}(T)$  \\
  Probability of subset $I$ & $\mu[\bigcup_{i\in I}A^{se}_i](t)$ & $\mathcal{C}_{se_{I}}(T)$ \\
  \hline
\end{tabular}

\vspace{.5cm}

Note that some ambiguity is introduced in switching to the operational framework, which $e_I\in\mathcal{E}_I$ should be picked? Other approaches to defining higher-order interference in operational theories \cite{Higher-order-reconstruction} have required sufficient structure to define a set of `filters', $\{F_I\}$, for the theory. These are transformations that represent the action of leaving open some subset of slits $I$ whilst closing the others, in which case $(e_I|=(e|F_{I}$. However, arbitrary theories do not have sufficient structure to define filters and so one must consider all possible choices $(e_I|$ with the correct support. This leads to the following definition of $n$th-order interference,

\begin{definition}
$n$-th order interference $\iff$ $\exists s,e$ s.t.\[\mathcal{C}_{s,e}(T)\neq \sum_{I\subset\mathds{P}}(-1)^{n-|I|+1}\mathcal{C}_{s,e_I}(T),\]
$\forall (e_I|\in \mathcal{E}_I$.
\end{definition}
The introduction of the `$\forall$' statement compared to the original definition is due to the ambiguity in choosing which effect corresponds to blocking some subset of paths. In the main text the explicit dependence on $T$ in the above equation will be suppressed, as $\mathcal{C}_{se}$ has already been defined as a function from the phase group to probabilities.

For example, the existence of second-order interference implies that there exists $|s)$ and $(e|$ such that
\[\mathcal{C}_{s,e}\neq \mathcal{C}_{s,e_{\{0\}}}+\mathcal{C}_{s,e_{\{1\}}},\] $\forall\ |e_{\{i\}})\in\mathcal{E}_i$. Whilst the existence of third-order interference corresponds to the existence of,
$|s)$ and $(e|$ such that
\[\mathcal{C}_{s,e}\neq \mathcal{C}_{s,e_{\{0,1\}}}+\mathcal{C}_{s,e_{\{1,2\}}}+\mathcal{C}_{s,e_{\{2,0\}}}-\mathcal{C}_{s,e_{\{0\}}}-\mathcal{C}_{s,e_{\{1\}}}-\mathcal{C}_{s,e_{\{2\}}},\] $\forall\ |e_{I})\in\mathcal{E}_I$.

We consider the above for the case of quantum theory to provide some intuition for the definitions. Firstly, we show the existence of second-order interference. Define our paths by $p_i:=(\ket{i}\bra{i},\ket{i}\bra{i})$, then choose $|s)=\ket{+}\bra{+}=(e|$. Then $(e_{\{i\}}|\in \{r_i \ket{i}\bra{i}\}$ where $r_i$ is an arbitrary positive real number. The phase group is given by $\mathcal{P}:=\{e^{i\theta_0}\ket{0}\bra{0}+e^{i\theta_1}\ket{1}\bra{1}\}$. It is then simple to show that,
\[\mathcal{C}_{s,e}(T) = \cos^2\left(\frac{\theta_0-\theta_1}{2}\right),\]
whilst,
\[\mathcal{C}_{s,e_{\{0\}}}(T)+\mathcal{C}_{s,e_{\{1\}}}(T)= \frac{r_0+r_1}{\sqrt{2}}.\]
It is then simple to see that, as functions of $\theta_i$,
\[\cos^2\left(\frac{\theta_0-\theta_1}{2}\right)\neq \frac{r_0+r_1}{\sqrt{2}},\]
for any choice of $r_i$, i.e. $(e_{\{i\}}|$. Therefore -- by our definition -- quantum theory has second-order interference as we would expect.

Next we consider our definition of third-order interference for quantum theory. We consider a specific choice of $|s)$ and $(e|$, and note that this can be simply -- but tediously -- generalised to all choices. Consider $|s)=\frac{1}{3}(\ket{0}+\ket{1}+\ket{2})(\bra{0}+\bra{1}+\bra{2})=(e|$, and the phase group, $\mathcal{P}=\{e^{i\theta_0}\ket{0}\bra{0}+e^{i\theta_1}\ket{1}\bra{1}+e^{i\theta_2}\ket{2}\bra{2}\}$. Then let $(e_{\{i,j\}}|=\frac{1}{3}(\ket{i}+\ket{j})(\bra{i}+\bra{j})$ and $(e_{\{i\}}|=\frac{1}{3}\ket{i}\bra{i}$. Note that these are sub-normalised effects. It is then simple to check our definition for this particular choice of $|s)$ and $(e|$. The interference patterns can be written as,
\begin{enumerate}
\item $$\begin{aligned}
\mathcal{C}_{s,e}(T) &= \frac{1}{9}|e^{i\theta_0}+e^{i\theta_1}+e^{i\theta_2}|^2 \\ &=\frac{1}{9}\left(3+\sum_{i>j}e^{i(\theta_i-\theta_j)}+e^{i(\theta_j-\theta_i)}\right), \end{aligned} $$

\item $$ \begin{aligned} \mathcal{C}_{s,e_{\{i,j\}}}(T) &= \frac{1}{9}|e^{i\theta_i}+e^{i\theta_j}|^2 \\ &=\frac{1}{9}\left(2+e^{i(\theta_i-\theta_j)}+e^{i(\theta_j-\theta_i)}\right),\end{aligned}$$
\item $$\begin{aligned} \mathcal{C}_{s,e_{\{i\}}}(T) = \frac{1}{9}|e^{i\theta_i}|^2=\frac{1}{9} \end{aligned}$$
\end{enumerate}
and so,
\[\mathcal{C}_{s,e}(T)=\sum_{i>j}\mathcal{C}_{s,e_{\{i,j\}}}(T)-\sum_i \mathcal{C}_{s,e_{\{i\}}}(T).\]
This proves that the particular choice of state $|s)$ and effect $(e|$ do not give higher-order interference for quantum theory. This can, however, be readily generalised to hold for any choice, and so demonstrates that quantum theory does not exhibit higher-order interference.


\begin{thebibliography}{10}

\bibitem{arkhipov}
S. Aaronson and A. Arkhipov, {\em The Computational Complexity of Linear Optics}.
\newblock Proc. of the Forty-third Annual ACM Symposium on Theory of Computing (STOC 2011), pp. 333-342, 2011.

\bibitem{shor}
P. Shor, {\em Polynomial-Time Algorithms for Prime Factorization and Discrete Logarithms on a Quantum Computer}.
\newblock SIAM J. Sci. Statist. Comput. 26, 1484, 1997.

\bibitem{kickback}
R. Cleve, A. Ekert, C. Macchiavello and M. Mosca, {\em Quantum algorithms revisited},
\newblock Proceedings of the Royal Society of London A: Mathematical, Physical and Engineering Sciences, volume 454, 1998.

\bibitem{Nielsen}
M.~A. Nielsen and I.~L. Chuang, {\em Quantum computation and Quantum
  information}.
\newblock Cambridge University press, 2000.

\bibitem{Pavia1}
G.~Chiribella, G.~M. D'Ariano, and P.~Perinotti, {\em Probabilistic theories
  with purification}.
\newblock Phys. Rev. A 81, 062348, 2010.

\bibitem{Pavia2}
G.~Chiribella, G.~M. D'Ariano, and P.~Perinotti, {\em Informational derivation of Quantum Theory}.
\newblock Phys. Rev. A 84, 012311, 2011.

\bibitem{Barrett-2007}
J.~Barrett, {\em Information processing in generalized probabilistic theories}.
\newblock Phys. Rev. A 75 No. 3, 032304, 2007.

\bibitem{Hardy-2011}
L.~Hardy, {\em Reformulating and reconstructing quantum theory}.
\newblock arXiv:quant-ph/1104.2066v3, 2011.

\bibitem{LB-2014}
C.~M.~Lee and J.~Barrett, {\em Computation in generalised probabilistic theories}.
\newblock arXiv:quant-ph/1412.8671 , 2014.

\bibitem{PR-van-Dam}
W.~van Dam, {\em Implausible consequences of superstrong nonlocality}.
\newblock arXiv:quant-ph/0501159, 2005.

\bibitem{Higher-order-reconstruction}
H. Barnum, M. P. Mueller and C. Ududec, {\em Higher-order interference and single system postulates for quantum theory}.
\newblock New Journal of Physics 16, 123029, 2014.

\bibitem{Niestegge-2012}
G. Niestegge, {\em Conditional probability, three-slit experiments and the Jordan structure of quantum mechanics}.
\newblock Advances in Mathematical Physics 156573, 2012.

\bibitem{Henson-2015}
J. Henson, {\em Bounding quantum contextuality with lack of third-order interference}.
\newblock Phys. Rev. Lett. 114, 220403, 2015.

\bibitem{ududec2011three}
C.~Ududec, H.~Barnum and J.~Emerson,  {\em Three slit experiments and the structure of quantum theory}.
\newblock Foundations of Physics, 41, 3, pages 396--405, 2011

\bibitem{sinha2008testing}
U. Sinha, C. Couteau, Z. Medendorp, I. S{\"o}llner, R. Laflamme and R. Sorkin, {\em Testing Born's rule in quantum mechanics with a triple slit experiment}.
\newblock arXiv:quant-ph/0811.2068, 2008.

\bibitem{park2012three}
D. Park, O. Moussa and R. Laflamme, {\em Three path interference using nuclear magnetic resonance: a test of the consistency of Born's rule}.
\newblock New Journal of Physics, 14, 11, 113025, 2012.

\bibitem{sorkin1994quantum}
R.~Sorkin, {\em Quantum mechanics as quantum measure theory}.
\newblock Modern Physics Letters A, 9, 33, 3119--3127, 1994.

\bibitem{sorkin1995quantum}
R.~Sorkin, {\em Quantum measure theory and its interpretation}.
\newblock arXiv:gr-qc/9507057, 1995.

\bibitem{Interference-speed-up}
D. Stahlke, {\em Quantum interference as a resource for quantum speedup}.
\newblock 	Phys. Rev. A 90, 022302, 2014.

\bibitem{Garner}
A.~Garner, O.~Dahlsten, Y. Nakata, M. Murao and V. Vedral, {\em Phase and interference in generalised probabilistic theories}.
\newblock New J. Phys. 15 093044 2013.

\bibitem{Rev}
D.~Gross, M.~Mueller, R.~Colbeck, and O.~Dahlsten, {\em All reversible dynamics
  in maximal non-local theories are trivial}.
\newblock Phys. Rev. Lett. 104, 080402, 2010.

\bibitem{MU}
M.~Mueller and C.~Ududec, {\em The structure of reversible computation determines the self-duality of quantum theory}.
\newblock Phy. Rev. lett. 108, 130401, 2012.

\bibitem{Oscar}
O. Dahlsten, A. Garner, J. Thompson, M. Gu and V. Vedral, {\em Particle exchange in post-quantum theories}.
\newblock arXiv:quant-ph/1307.2529-v2, 2013.


\bibitem{DensityCube}
B.~Daki{\'c}, T.~Paterek and {\v{C}}.~Brukner, {\em Density cubes and higher-order interference theories}.
\newblock New Journal of Physics, Vol. 16 , 2014.

\bibitem{Thesis}
C.~Ududec, {\em Perspectives on the formalism of quantum theory}.
\newblock University of Waterloo, 2012.

\bibitem{LS-2015}
C. M. Lee and J. H. Selby, {\em Higher-order interference in extensions of quantum theory}.
\newblock arXiv:quant-ph/1510.03860, 2015.

\bibitem{torons}
G. N. Afanasiev, {\em Quantum mechanics of toroidal anyons}.
\newblock J. Phys. A: Math. Gen. 24 2517, 1991.

\bibitem{Barton}
B. Zwiebach, {\em A first course in String Theory}.
\newblock Cambridge University press, 2009.

\bibitem{work}
K. Korzekwa, M. Lostaglio, J. Oppenheim and D. Jennings, {\em The extraction of work from quantum coherence}.
 \newblock arXiv:quant-ph/1506.07875, 2015.

\bibitem{cqm2}
 B. Coecke, {\em Quantum picturalism}.
 \newblock Contemporary physics, 51, 1, 2010.

\bibitem{cqm1}
S. Abramsky and B. Coecke, {\em A categorical semantics of quantum protocols}.
 \newblock Proceedings of the 19th Annual IEEE Symposium on Logic in Computer Science, 2004.

 \bibitem{Thermo}
H. Barnum, J. Barrett, M. Krumm and M. Mueller, {\em Entropy, majorization and thermodynamics in general probabilistic theories}.
\newblock  arXiv:quant-ph/1508.03107, 2015.

\bibitem{Thermo1}
G. Chiribella and C.M. Scandolo, {\em Entanglement and thermodynamics in general probabilistic theories}.
\newblock arXiv:quant-ph/1504.07045, 2015.

\bibitem{Thermo2}
G. Chiribella and C.M. Scandolo, {\em Operational axioms for state diagonalization}.
\newblock arXiv:quant-ph/1506.00380,2015.

\bibitem{barnum2015some}
H. Barnum, M. Graydon and A. Wilce, {\em Some Nearly Quantum Theories}.
\newblock arXiv:1507.06278, 2015.

\bibitem{Proofs}
C. M. Lee and M. J. Hoban, {\em Proofs and advice in general physical theories:a trade-off between states and dynamics?},
\newblock arXiv:1510.04702, 2015.

\bibitem{landscape}
J. Barrett, N. de Beaudrap, M. J. Hoban and C. M. Lee, {\em the computational landscape of general physical theories,}
\newblock Forthcoming, 2016.

\end{thebibliography}
\end{document}